\documentclass[onecolumn,draft]{IEEEtran}
\ifCLASSINFOpdf
\else
\fi
\hyphenation{op-tical net-works semi-conduc-tor}

\usepackage{amssymb}
\usepackage{amsmath}
\usepackage{cite}
\usepackage{graphicx}
\usepackage{subfigure}
\usepackage{epstopdf}
\usepackage{algorithm}
\usepackage{algorithmic}

\usepackage{tikz}
\usetikzlibrary{arrows,automata}

\usepackage{lscape}
\usepackage{graphicx}
\usepackage{fancyhdr}
\usepackage{tikz}
\usepackage{epstopdf}
\usetikzlibrary{automata}
\usepackage{array}
\usepackage{multirow}
\usepackage{amsmath,amsthm}
\usepackage{footnote}
\usepackage{supertabular}
\usepackage{mathtools}
\usepackage{amssymb}
\usepackage{longtable}
\usepackage{nomencl}
\usepackage[nonumberlist,acronym]{glossaries}
\usepackage{algorithmic}
\usepackage{color}
\usepackage{algorithm}
\usepackage{enumerate}
\usepackage{datatool}
\usepackage{amsmath}
\usepackage{nomencl}

\DeclareMathOperator*{\mslim}{l{\cdot}i{\cdot}m}

\begin{document}
%
\title{Noise Models in the Nonlinear Spectral Domain for Optical Fibre Communications}
%
%
%

\author{Qun Zhang 
        and Terence H. Chan,~\IEEEmembership{Member,~IEEE}
\thanks{
This paper was presented in part at 2015 IEEE International Symposium on Information
Theory (ISIT 2015).}
}

\newcommand\ter[1]{{\color{red} #1 }}

\maketitle

\begin{abstract}
Existing works on building a soliton transmission system only encode information using the imaginary part  of the eigenvalue, which fails to make full use of the signal degree-of-freedoms. 
Motivated by this observation, we make the first step of encoding information using (discrete) spectral amplitudes by proposing analytical noise models for the spectral amplitudes of $N$-solitons ($N\geq 1$). To our best knowledge, this is the first work in building an analytical noise model for spectral amplitudes, which leads to many interesting information theoretic questions, such as channel capacity analysis, and has a potential of increasing the transmission rate. The noise statistics of the spectral amplitude of a soliton are also obtained without the Gaussian approximation. 
\end{abstract}

\begin{IEEEkeywords}
Optical fibre communications, noise model, $N$-soliton, spectral amplitude, statistics, Gordon-Haus effect.
\end{IEEEkeywords}

\newtheorem{df}{Definition}
\newtheorem{thm}{Theorem}
\newtheorem{prop}{Proposition}
\newtheorem{lemma}{Lemma}
\newtheorem{example}{Example}
\newtheorem{cor}{Corollary}
\newtheorem{rem}{Remark}
\newtheorem{conjecture}{Conjecture}

\newcommand{\seq}[3]{{#1_{#2}, \ldots, #1_{#3}}}
\newcommand{\nge}[1]{\stackrel{#1}{\ge}}
\newcommand{\nequal}[1]{\stackrel{#1}{=}}
\def\A{\mathcal A}
\def\B{\mathcal B}
\def\C{\mathsf C}
\def\M{\mathcal M}
\def\N{\mathcal N}
\def\V{\mathcal V}
\def\X{\mathcal X}
\def\Y{\mathcal Y}
\def\Z{\mathcal Z}
\def\reals{\mathbb R}
\def\p{\prime}

\def\Mi{{B}}
\def\hG{{\widehat G}}
\def\ft{{\Xi}}
\def\ba{{\bf a}}
\def\bA{{\bf A}}
\def\bb{{\bf b}}
\def\bz{{\bf z}}
\def\id{{\bf e}}
\def\br{{\bf r}}
\def\bs{{\bf s}}
\def\bu{{\bf u}}
\def\B{{\mathcal B}}

\def\ed{\marginpar{\footnotesize$ \:$\vspace{-2cm}}}
\def\real{{\mathbb R}}

\def\defined{\: \triangleq \:}

\def\C{{\cal C}}
\def\M{{\cal M}}
\def\X{{\bf X}}
\def\Y{{\bf Y}}
\def\d{{\mathrm{d}}}

\def\im{{\mathrm{Im}}}
\def\re{{\mathrm{Re}}}

\def\sch{{Schr\"{o}dinger\:}}
\def\length{\mathfrak L}
\newcommand{\tc}[1]{  {\color{blue} {\bf  } #1}}

\def\P{{\mathcal P}}

\def\sch{{Schrodinger\:}}
\def\length{\mathfrak L}
\def\re{\mathrm {Re}}
\def\im{\mathrm {Im}}
\def\nri{\Upsilon^{(R)}_{i}}
\def\nr{\Upsilon^{(R)}}
\def\nii{\Upsilon^{(I)}_{i}}
\def\ni{\Upsilon^{(I)}}
\def\nid{\Upsilon_{i}}
\def\n{\Upsilon}
\def\nrg{\Upsilon^{(R)}_{G}}
\def\nig{\Upsilon^{(I)}_{G}}
\def\ng{\Upsilon_{G}}
\def\ito{It\^{o}}

%
\IEEEpeerreviewmaketitle

\section{Introduction}
%
%
%
%
\IEEEPARstart{O}{ptical} fibres are promising media for high speed data transmission because of the ultra high bandwidth and low loss transmission they offer. Compared to traditional linear frequency channels, signal propagation therein is nonlinear, and is described by the It\^{o} stochastic nonlinear Schr\"{o}dinger equation (SNLSE) \cite{b4,DBLP:journals/corr/abs-1202-3653} 
\begin{multline}
\frac{\partial A(s,l)}{\partial l}-\frac{j\beta_{2}}{2}\frac{\partial^{2}A(s,l)}{\partial s^{2}}+\frac{\alpha}{2}A(s,l)
=-j\gamma |A(s,l)|^{2}A(s,l)+j\kappa N(s,l),\qquad 0\leq l\leq \mathfrak{L}\ \mathrm{km},\label{ch3.1.3}
\end{multline}
where $j=\sqrt{-1}$, 
and $\mathfrak{L}>0$ denotes the length of the optical fiber. The second and third terms on the left hand side of \eqref{ch3.1.3} represent respectively the group velocity dispersion (GVD) and fibre attenuation effects, and the latter is often eliminated under the assumption that the fibre loss is perfectly compensated by the ideal distributed Raman amplification (DRA).   
The first term on the right hand side of \eqref{ch3.1.3} means fibre nonlinearity. The optical noise field $j\kappa N(\tau,l)$ was shown \cite{2010JLT/EKWFG} to be dominated by the amplified spontaneous emission (ASE) produced by the DRA when the fibre loss was compensated, and could be represented as a zero mean circularly symmetric complex white Gaussian noise \cite{1963GLW,1963GWL,1994M,b29,b41}, 
where 
$\kappa^{2}=\alpha h\nu_{s}K_{T}$. The parameters are summarised in Table \ref{ch1.table1.1} obtained from \cite{DBLP:journals/corr/abs-1302-2875,2010JLT/EKWFG}.
\begin{table*}[!t]
\renewcommand{\arraystretch}{1.3}
\caption{Fibre Parameters}
\label{ch1.table1.1}
\centering
\begin{tabular}{|c|c|c|}
\hline
Symbols & Values & Explanations\\
\hline
$\alpha$ & $0.046$ km$^{-1}$ & Fibre loss (0.2 dB/km) \\
$h$ & $6.626\times 10^{-34}$ J$\cdot$s & Planck's constant \\
$\nu_{s}$ & $193.55$ THz & Centre frequency \\
$K_{T}$ & $1.13$ & Photon occupancy factor 
\\
$\gamma$ & $1.27$ W$^{-1}\cdot$km$^{-1}$ & Nonlinear parameter\\
$\beta_{2}$ & $-2\times 10^{-23}$ s$^{2}\cdot$km$^{-1}$ & The GVD coefficient for silica fibres when the input wavelength is near $1.5$ $\mathrm{\mu m}$\\
\hline
\end{tabular} 
\end{table*}
After applying the following variable transformations
\begin{equation}\label{ch2.x1.1}
{q=\frac{A}{\sqrt{P_{n}}},\qquad t=\frac{s}{T_{n}},\qquad z=\frac{l}{L_{n}},}
\end{equation}
where
\begin{equation}\label{ch2.x1.2}
P_{n}=\frac{2}{\gamma L_{n}},\qquad T_{n}=\sqrt{\frac{|\beta_{2}|L_{n}}{2}},
\end{equation}
and $L_{n}$ is chosen to be $L_{n}=1$ $\mathrm{km}$, we obtain the normalised SNLSE
\begin{equation}\label{ch3.1.1}
{j\frac{\partial q(t,z)}{\partial z}=\frac{\partial^{2} q(t,z)}{\partial t^{2}}+2|q(t,z)|^{2}q(t,z)+j\epsilon G(t,z).}
\end{equation}
The parameters $t$ and $z$ are respectively the normalised temporal and spatial variables. The (normalised) optical noise field $G(t,z)$ is a zero mean circularly symmetric complex white Gaussian noise process
\begin{equation}\label{ch3.1.7}
{\mathrm{E}\left[G(t,z)G^{*}(t',z')\right]=\delta(t-t')\delta(z-z')}
\end{equation}
with noise power spectral density $\epsilon^{2}=\frac{\gamma}{\sqrt{2|\beta_{2}|}}\kappa^{2}$, where we use ``$*$'' to denote the complex conjugate, and $\delta(x)$ is the Dirac delta function. 
The normalised SNLSE \eqref{ch3.1.1} defines a noisy nonlinear dispersive waveform channel, in which $q(t,0)$ is the channel input, and $q(t,\length)$ is the channel output.

The GVD and Kerr nonlinearity have made the channel rather analytically difficult to tract compared to a radio frequency channel. Because of this, only simple modulation and coding schemes, such as on-off keying  (OOK) \cite{1998CTCCL} and differential phase shift keying \cite{2012JMMLC}, were employed in current optical fibre communication systems. Multiplexing techniques were applied to allow multiuser communications, such as wavelength division multiplexing (WDM), however, the performance is limited by the inter-channel interference. In many cases, nonlinear dispersive effects were often ignored for modelling simplicity. As such, a fibre channel would either be modelled as linear or dispersion free \cite{2011YK}, and  communication techniques designed for linear systems such as digital backpropagation and dispersion management \cite{2008JLT/IK} would be adopted. These techniques run well in a low signal power regime, as shown in \cite{2010JLT/EKWFG} that the capacity increases by increasing the signal power at first, however, then reaches a peak and vanishes as the power of signal further increases.  

Rather than regarding the dispersion and nonlinear effects as detrimental, they could be fully characterised by applying the nonlinear Fourier transform (NFT) \cite{1972JETP/ZS,b3} (also called direct scattering transform abbreviated as DST), a powerful tool to solve integrable nonlinear Schr\"{o}dinger equations (NLSE). By using the NFT, it was pointed out \cite{1993HN} that the eigenvalues of a signal could be used for data transmission in an optical fibre for their desirable property of remaining constants in the noise free case although its time domain representation is distorted. Single-user channels were then proposed, in which the information was modulated by those invariant quantities, and the decoding was achieved by applying the NFT. Recently, Yousefi and Kschischang published a series of papers \cite{DBLP:journals/corr/abs-1202-3653,DBLP:journals/corr/abs-1204-0830,DBLP:journals/corr/abs-1302-2875} aiming to diagonalise the time domain dispersive nonlinear channel into multiple linear scalar multiplicative channels in the spectral domain with the help of the NFT. In \cite{DBLP:journals/corr/abs-1202-3653}, the mathematical details of the NFT were introduced. In \cite{DBLP:journals/corr/abs-1204-0830}, several numerical algorithms implementing the NFT were proposed. In \cite{DBLP:journals/corr/abs-1302-2875}, the spectral domain modulation was considered, and the spectral efficiency for $N$-soliton communication systems was studied, where $N$ is a positive integer.


In those papers, a novel multiplexing technique, namely nonlinear frequency division multiplexing (NFDM), was proposed, which eliminates the inter-channel interference from multiple users for the multiuser communication scenario when noise is absent, and fully characterises the dispersion effect and fibre nonlinearity. The idea is similar to the orthogonal frequency division multiplexing (OFDM), in which the interference is avoided in the frequency domain by properly allocating the bandwidth to users with the help of the Fourier transform (FT) \cite{b17}. In the NFDM technique, the FT is replaced by the NFT with the transformed domain becoming (nonlinear) spectral domain accordingly, but the idea of eliminating the inter-channel interference is essentially similar. With the help of the NFT, the signal input-and-output relationship in the spectral domain becomes linear scalar multiplications, which is significantly simpler than that in the time domain. Recently, the NFDM was demonstrated experimentally the modulation and error-free detection of the eigenvalues of some $N$-solitons ($N=1,2,3,4$) \cite{2015DHGZYLWKL}. Note that in the NFDM scheme, the signal modulation and demodulation relies respectively on the inverse nonlinear Fourier transform (INFT) and NFT that are computationally expensive. Fast algorithms implementing them were therefore investigated by Wahls and Poor \cite{201305WP,2014arXiv1402.1605SV,201407SV,201506WP}.

%

Fibre optics communication systems using $N$-solitons ($N\geq 1$) are of great interest over the past few decades. When the inputs are (fundamental) solitons ($N=1$), the effect of noise in soliton parameters was studied. In particular, the statistics of an eigenvalue was reported in \cite{b30,b31}. The arrival time jitter characterising the fluctuations of the arrival time of a soliton, namely Gordon-Haus effect, was studied in the celebrated paper \cite{1986GH}. The research about soliton transmission control, aiming to tackle the issue of timing jitter, could be found in \cite{1991MMHL,1992MGE,1991NYKS,1995M}. The Gordon-Mollenauer effect, referring to the soliton phase jitter, was investigated in \cite{1990GM}, and the work about its statistics in soliton-dispersion phase shift keying systems were studied in \cite{1999HPGR,2000HPGR,2001HPGR,2002MX}. Under the NFT transmission framework, the capacity results of soliton communication systems were studied \cite{DBLP:journals/corr/abs-1302-2875,MeronTAU2012,2012arXiv1207.0297M}. These works mainly focused on modulating the eigenvalues of the signals only, and assumed a conditional Gaussian distributed noise as an approximation. Falkovich \emph{et.al.} \cite{2001FKLT} studied the statistics of the soliton parameters without the Gaussian approximation, and Derevyanko \emph{et.al.} \cite{2003DTY,2005DTY} applied the Fokker-Planck equation approach to obtain the marginal conditional distribution of the frequency and the amplitude of a soliton, or equivalently, the real and the imaginary parts of its eigenvalue, respectively. A lower bound of the channel capacity was derived for soliton communication systems when only soliton amplitude (the imaginary part of the eigenvalue) was modulated using the exponential distribution as the input distribution with its non-Gaussian conditional probability density function of the output given an input \cite{2015ITW2/SPDABT}. For general $N$-soliton inputs ($N\geq 2$), communication using the NFT was studied recently in \cite{2016JLT/HYK}.

We notice that existing $N$-soliton communication systems  are quite restrictive \cite{DBLP:journals/corr/abs-1302-2875} in the sense that information is only conveyed via the eigenvalues\footnote{To be precise, only the imaginary part of the eigenvalues are modulated.} (but not the spectral amplitudes) of the input signals. Hence, the capacity \cite{b14} of the fibre channel is not fully utilized although the alphabet of the imaginary part of the eigenvalue is generalized to a continuous interval compared to the traditional on-off keying transmission scheme, in which we either transmit a bit ($1$) or nothing ($0$).
 We find out that encoding information using the spectral amplitude is important, which increases the signal degree-of-freedom, and further increases the data rate. 
In this paper, we are interested in how to convey information by also modulating the spectral amplitudes of the input signals. To achieve this goal, a channel model characterising the statistical relations between the channel input and output of the spectral amplitudes is needed. One of the main contributions of this paper is to derive such an analytical channel model for spectral amplitudes. 

We also notice that to model the perturbation of the eigenvalue of a soilton as Gaussian distributed \cite{DBLP:journals/corr/abs-1302-2875,MeronTAU2012} is not precise. Specifically, modelling the noise in an eigenvalue as Gaussian is an approximation, as also pointed out in \cite{DBLP:journals/corr/abs-1302-2875}. 
In this paper, we do not make such approximation, and derive a non-Gaussian analytical noise model of the spectral amplitudes of solitons ($N$-solitons). The non-Gaussian noise statistics are also obtained.

The rest of this paper is organised as follows. In Section \ref{Contributions}, the contributions of this paper are summarised. In Section \ref{Preli}, some preliminaries of this paper is introduced, such as the basis of NFT and the idea of the NFDM. Sections \ref{Model} and \ref{Special} present the main results of this paper. We propose channel models of the (discrete) spectral amplitude (both its magnitude and phase) of $N$-solitons in Section \ref{Model}. We then study our channel model in Section \ref{Special} for a special case, i.e. soliton inputs, where more tools, such as perturbation theory, are available. Besides the channel model, the non-Gaussian noise statistics are also derived analytically in this case. In addition, we discuss our modelling methodology in Section \ref{Discussions} with the help of the perturbation theory, and show that the
noise captured in our model is significant, or even dominant, in some scenarios such as long distance transmission at a high input power. Section \ref{Conclusions} concludes this paper, and points out future works. Some of the proofs, either too long to be included in the main body or just about mathematical tools, are put in the appendices.

\section{Contributions}\label{Contributions}

In this paper,  we propose an analytical noise model for spectral amplitudes of $N$-solitons \cite{Zhan1406:Spatial}, where distributed white Gaussian noise (due to the DRA) is assumed in the nonlinear Schr\"{o}dinger channel \eqref{ch3.1.1}. To our best knowledge, this is the first work of building a mathematical noise model of spectral amplitudes of $N$-solitons in the presence of GVD and fibre nonlinearity.  

Specifically, we present an important and useful modelling methodology for the distributed white Gaussian noise process in \eqref{ch3.1.1}  in Section \ref{Model}. The method is called  \emph{split and concatenate}, motivated by the idea of It\^o integration. To be more precise, an optical fibre channel is split into many ultra short segments such that the distributed white Gaussian noise could be treated as lumped in each segment, and the transmission thereafter within this segment is noiseless. With all these segments concatenated, the model is obtained by taking the mean-square limit, which we mimic the characterisation of the distributed noise on the spectral amplitude.

Our work is a first step towards the ultimate goal of modulating spectral amplitudes in $N$-soliton communication systems in order to have a better use of the signal degree-of-freedoms (and hence a higher data transmission rate). 
The spectral domain representations of an $N$-soliton is composed by its eigenvalues and spectral amplitudes. While previous works mainly focus on encoding bits using eigenvalues but not spectral amplitudes \cite{DBLP:journals/corr/abs-1302-2875,2012arXiv1207.0297M}, our work provides a starting points of increasing the transmission rate. 
We also note that our model is an analytical one, which has merits for theoretical information theory study. For example, the close form capacity analysis could be proceeded. One recent work is \cite{Zhan1607:AchievableRates}, in which the lower bounds of the mutual information between the joint input and output\footnote{The joint input means the vector of eigenvalue and spectral amplitude.} were derived for soliton communication systems using the Gaussian approximated version \cite{SPAWC201506ZC} of the model in this paper.

Furthermore, our work is more general in the sense that assumptions useful for communication system design but unnecessary for theoretical model deriving were now dropped, for a more accurate analysis compared to related works (e.g. \cite{1986GH}). Specifically, unlike \cite{1986GH}, the perturbation of the imaginary part of the eigenvalue is taken into consideration as well in this paper for a better characterisation of the noise in a spectral amplitude. In addition, no Gaussian approximation is made on the noise distribution of the eigenvalue perturbation, which leads to a more precise estimation of the noise statistics of a spectral amplitude.


\section{Preliminaries}\label{Preli}

This section is about some background knowledge that is useful to this paper. We first introduce the concept of the NFT, and the $N$-soliton, the signal of interest in this paper, is mentioned. Then the transmission scheme using the NFT, named NFDM, is explained. 
Notations frequently used in this paper are given at last.

\subsection{The Nonlinear Fourier Transform}

The NFT, also called direct scattering transform, is a mathematical tool to solve integrable partial differential equations. It helps transfer a nonlinear evolution equation to a set of linear problems that are easier to solve. Unlike FT or Laplace transform which have generic forms, the NFT is indeed a series of procedures to obtain a representation in another domain, although it is called a transform.

In this subsection, we restrict our discussion to the deterministic NLSE
\begin{equation}\label{ch3.2.1}
{j\frac{\partial q(t,z)}{\partial z}=\frac{\partial^{2} q(t,z)}{\partial t^{2}}+2|q(t,z)|^{2}q(t,z).}
\end{equation}
To define the NFT of a signal, we need to find a pair of $q$-dependent operators $L$ and $M$, which satisfies
\begin{equation}\label{ch3.2.2}
{[M,L]=ML-LM}
\end{equation}
when \eqref{ch3.2.1} holds as a compatibility condition. When \eqref{ch3.2.2} holds, the eigenvalues of the operator $L$ are $z$-invariant as the signal $q(t,z)$ propagates through the fibre, and the operators $L$ and $M$ are called a Lax pair. For the NLSE \eqref{ch3.2.1}, we have
\begin{equation}\label{ch3.2.3}
{
L =
j\left( \begin{array}{ccc}
\frac{\partial}{\partial t} & -q(t,z) \\
-q^{*}(t,z) & -\frac{\partial}{\partial t}
\end{array} \right),
}
\end{equation}
and
\begin{equation}\label{ch3.2.4}
{M=
\left(\begin{array}{ccc}
2j\lambda^{2}-j|q(t,z)|^{2} & -2\lambda q(t,z)-jq_{t}(t,z)\\
2\lambda q^{*}(t,z)-jq_{t}^{*}(t,z) & -2j\lambda^{2}+j|q(t,z)|^{2}
\end{array} \right),}
\end{equation}
where $\lambda$ is an element in the spectrum of the operator $L$. In the following, we only consider the NFT for the input signal $q(t)\triangleq q(t,0)$, and suppress the variable $z$ because it is not related to the process of defining the NFT, and is only useful for showing the spatial signal propagation through an optical fibre. Throughout this paper, we assume that 
\begin{equation}\label{ch3.2.5}
{q(t)\in L^{1}(\mathbb{R}),}
\end{equation}
and
\begin{equation}\label{ch3.2.6}
{q(t)\rightarrow 0,\qquad t\rightarrow\infty.}
\end{equation}

The NFT of a signal $q(t)$ satisfying \eqref{ch3.2.5}--\eqref{ch3.2.6} is defined via the spectral analysis of the operator $L$. Specifically, this is done in the following steps.
\begin{enumerate}
{
\item Solve the eigenvalue problem $Lv=\lambda v$;

The space of the eigenvectors subject to an eigenvalue is a two-dimensional space. The eigenvalue problem is equivalent to
\begin{equation}\label{ch3.2.7}
{v_{t}=\left(\begin{array}{ccc}
-j\lambda & q(t)\\
-q^{*}(t) & j\lambda
\end{array} \right)v,}
\end{equation}
which is called a scattering problem. Specific to the following two boundary conditions,
\begin{equation}\label{ch3.2.8}
{\lim_{t\rightarrow\infty}\left|v^{(1)}(t,\lambda)-\left(\begin{array}{ccc}
0\\
1
\end{array} \right)e^{j\lambda t}\right|=0,}
\end{equation}
and
\begin{equation}\label{ch3.2.9}
{\lim_{t\rightarrow-\infty}\left|v^{(2)}(t,\lambda)-\left(\begin{array}{ccc}
1\\
0
\end{array} \right)e^{-j\lambda t}\right|=0,}
\end{equation}
two solutions of the equation \eqref{ch3.2.7}, as denoted by $v^{(1)}(t,\lambda)$ and $v^{(2)}(t,\lambda)$, are obtained, which are called canonical eigenvectors. It was shown that $\{v^{(1)}(t,\lambda),\widetilde{v}^{(1)}(t,\lambda^{*})\}$, and $\{v^{(2)}(t,\lambda),\widetilde{v}^{(2)}(t,\lambda^{*})\}$ are two sets of basis of the eigenvector space, where $\widetilde{v}^{(k)}(t,\lambda^{*})\triangleq\left(v^{(k)*}_{2}(t,\lambda),-v^{(k)*}_{1}(t,\lambda)\right)^{\mathrm{T}}$, and $v_{1}^{(k)}(t,\lambda)$ and $v_{2}^{(k)}(t,\lambda)$ are the first and second component of the canonical eigenvector $v^{(k)}(t,\lambda)$, $k=1,2$.

\item Obtain the scattering data;

As $\{v^{(1)}(t,\lambda),\widetilde{v}^{(1)}(t,\lambda^{*})\}$ and $\{v^{(2)}(t,\lambda),\widetilde{v}^{(2)}(t,\lambda^{*})\}$ are two sets of basis of the eigenvector space, we can project one set to the other, which is
\begin{equation}\label{ch3.2.10}
{v^{(2)}(t,\lambda)=a(\lambda)\widetilde{v}^{(1)}(t,\lambda^{*})+b(\lambda)v^{(1)}(t,\lambda),}
\end{equation}
and
\begin{equation}\label{ch3.2.11}
{\widetilde{v}^{(2)}(t,\lambda^{*})=b^{*}(\lambda^{*})\widetilde{v}^{(1)}(t,\lambda^{*})-a^{*}(\lambda^{*})v^{(1)}(t,\lambda).}
\end{equation}
The coefficients $a(\lambda)$ and $b(\lambda)$ are called scattering data, which can be obtained by calculating
\begin{equation}\label{ch3.2.12}
{a(\lambda)=\lim_{t\rightarrow\infty}v_{1}^{(2)}(t,\lambda)e^{j\lambda t},}
\end{equation}
and
\begin{equation}\label{ch3.2.13}
{b(\lambda)=\lim_{t\rightarrow\infty}v_{2}^{(2)}(t,\lambda)e^{-j\lambda t}.}
\end{equation}

\item The nonlinear Fourier transform.

The NFT of a signal $q(t)$ is composed by its spectrum and the corresponding spectral amplitudes. The spectrum of the operator $L$ is composed by the following two parts:
\begin{enumerate}
\item Discrete spectrum: The zeros of the scattering data $a(\lambda)$ on the upper half complex plane $\mathbb{C}^{+}\triangleq\{c\in\mathbb{C}:\ \mathrm{Im}(c)>0\}$. The elements of the discrete spectrum are called (discrete) eigenvalues;
\item Continuous spectrum: The whole real line $\mathbb{R}$.
\end{enumerate}
The spectral amplitudes also consist of two parts:
\begin{enumerate}
{
\item Discrete spectral amplitudes: The discrete spectral amplitude subject to an eigenvalue $\zeta_{k}\in\mathbb{C}^{+}$ is
\begin{equation}\label{ch3.2.14}
{Q^{(d)}(\zeta_{k})=\frac{b(\zeta_{k})}{a'(\zeta_{k})},\qquad k=1,2,\ldots,N,}
\end{equation}
where $a'(\zeta_{k})\triangleq \frac{\d a(\lambda)}{\d \lambda}\Big|_{\lambda=\zeta_{k}}$, and $N$ is the number of the zeros of $a(\lambda)$;
\item Continuous spectral amplitudes:
\begin{equation}\label{ch3.2.15}
{Q^{(c)}(\lambda)=\frac{b(\lambda)}{a(\lambda)},}
\end{equation}
where $\lambda\in\mathbb{R}$.
}
\end{enumerate}
}
\end{enumerate}

We already show that the spectrum of the signal keeps invariant as a signal propagates through an optical fibre in the noise free case. The spatial evolution of the spectral amplitudes are summarized as follows:
\begin{equation}\label{ch3.2.16}
{Q^{(c)}(\lambda,z)=Q^{(c)}(\lambda,0)e^{-4j\lambda^{2}z},}
\end{equation}
and
\begin{equation}\label{ch3.2.17}
{Q^{(d)}(\zeta_{k},z)=Q^{(d)}(\zeta_{k},0)e^{-4j\zeta_{k}^{2}z},\qquad k=1,2,\ldots,N,}
\end{equation}
where $Q^{(c)}(\lambda,z)$ and $Q^{(d)}(\zeta_{k},z)$ are respectively a continuous and a discrete spectral amplitude at position $z$, and $z>0$.

\subsection{$N$-Solitons}

We have already known that the NFT of a signal is composed by a continuous spectral function and discrete spectral functions. When there is no continuous spectral function, we obtain a special class of signals called $N$-solitons, whose definition are given as follows.

\begin{df}[N-solitons]\label{ch3.d2.1}
{The signals which do not have continuous spectrum, and only have $N$ eigenvalues forming its discrete spectrum are called $N$-solitons, where $N$ is a positive integer.}
\end{df}

When $N=1$, i.e., the spectrum of a signal is composed by one eigenvalue only, this signal is called a fundamental soliton, or abbreviated as ``soliton'' for convenience. This class of signals has fundamental importance. The shape of a soliton remains the same during the signal propagation because the dispersion effect, which is responsible for temporal broadening, and the Kerr nonlinearity, which is responsible for spectral broadening, are balanced. The analytical expression of a soliton with an eigenvalue $\zeta=\alpha+j\beta$ and a spectral amplitude $Q^{(d)}(\zeta,z)$ (noise free case) is
\begin{multline}
q(t,z)=2\beta\mathrm{sech}\left[2\beta \left(t-\frac{1}{2\beta}\ln \frac{|Q^{(d)}(\zeta,z)|}{2\beta}\right)\right]e^{-2j\alpha t-j\left(\arg Q^{(d)}(\zeta,z)+\frac{\pi}{2}\right)},\qquad t\in\mathbb{R},\ \mathrm{and}\ z\geq 0.\label{ch3.3.1}
\end{multline}

\subsection{The Nonlinear Frequency Division Multiplexing Scheme}

Because of the chromatic dispersion and the Kerr nonlinearity, signals from different users couple together in a complicated manner, and interference between users is inevitable. There are multiplexing schemes proposed for multiuser communications, such as time division multiplexing (TDM), orthogonal frequency division multiplexing (OFDM), and wavelength division multiplexing (WDM), however, they do not work well.

In the recent work \cite{DBLP:journals/corr/abs-1202-3653}, Yousefi and Kschischang propose a new multiplexing technique, in which the NFT is applied as the main tool, such that interference free communication is theoretically possible at least in the noise free case. In fact, to some extent, the idea is the same as the OFDM in which Fourier transform (FT) is invoked to transform a linear time invariant channel into a set of independent parallel channels (in the frequency domain). Specifically, with the help of the NFT, the time domain nonlinear dispersive channel is transferred to a set of linear multiplicative channels in the (nonlinear) spectral domain \eqref{ch3.2.16}--\eqref{ch3.2.17}. Furthermore, we notice from \eqref{ch3.2.17} that the spatial evolutions of the discrete spectral amplitudes corresponding to their eigenvalues are independent of each other in the noise free case. This property leads us a channel diagonalisation in the spectral domain, which allows us to eliminate the interference between users if the spectrum band is properly allocated.

The process of information transmission using the nonlinear Fourier transform (NFT) in the noise free case \cite{DBLP:journals/corr/abs-1202-3653} is summarised as follows (see also Figure \ref{ch3.f1.1}). The channel input is encoded and modulated in the spectral domain (denoted by $Q(\lambda,0)$) at first. Applying the INFT, the time domain input signal $q(t,0)$ is obtained. At the receiver, the NFT is used to transfer the time domain received signal $q(t,\length)$ to its spectral domain representation $Q(\lambda,\length)$, where $\length$ is the length of the fibre. The decoding is done in the spectral domain using the spectral domain signal input-and-output relationship \eqref{ch3.2.16}--\eqref{ch3.2.17}. This transmission scheme is called nonlinear frequency division multiplexing (NFDM).

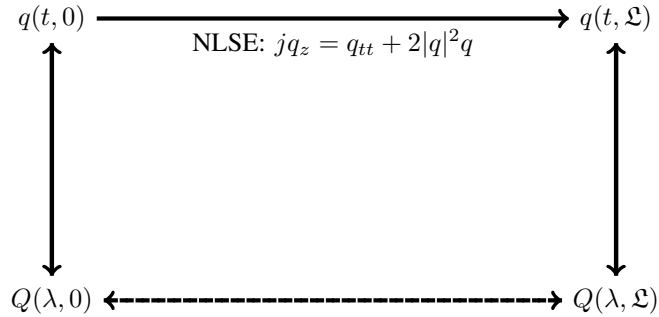
\begin{figure}[!htb]
\centering
\begin{tikzpicture}
[->,shorten >=0pt,auto,node distance=3.75cm, line width=1.6pt]
  \tikzstyle{every state}=[rectangle,fill=none,draw=none,text=black,minimum size=6mm]

  \node[state]         (i)                 { $Q(\lambda,0)$};
  \node[state]         (j)[above of=i]   { $q(t,0)$};
  \node[state]         (k)[right of=j]   {};
  \node[state]         (m)[right of=k]   { $q(t,\mathfrak{L})$};
  \node[state]         (l)[below of=m]   { $Q(\lambda,\mathfrak{L})$};
\path(i) edge           node[swap] {} (j);
\path(j) edge           node[swap] {} (i);
\path(j) edge           node[swap] { NLSE:\ $jq_{z}=q_{tt}+2|q|^{2}q$} (m);
\path(m) edge           node[swap] {} (l);
\path(l) edge           node[swap] {} (m);
\path(i) edge[dashed]   node[swap] {} (l);
\path(l) edge[dashed]   node[swap] {} (i);
\end{tikzpicture}
\caption{Information transmission using the nonlinear Fourier transform.}\label{ch3.f1.1}
\end{figure}

\subsection{Soliton Perturbation Theory}

In \cite{b25}, there are results about how are the scattering parameters of soliton solutions perturbed due to the fluctuation of the original NLSE \eqref{ch3.2.1} by a small deterministic perturbation ($j\epsilon G(t,z)$). Note that they are only valid for solitons, and are only precise to the first order. The first order perturbation of the real part of the eigenvalue of a soliton is
\begin{equation}
\frac{\partial \alpha(z)}{\partial z}=-\epsilon\int_{-\infty}^{\infty}\im\left[G(t,z)e^{-j\varphi(t,z)}\right]\beta(z)\mathrm{sech}[2\beta(z)(t-T_{0}(z))]\mathrm{tanh}[2\beta(z)(t-T_{0}(z))]\d t,\label{ch3.3.6}
\end{equation}
and that of the imaginary part of the eigenvalue is
\begin{equation}
\frac{\partial\beta(z)}{\partial z}=\epsilon\int_{-\infty}^{\infty}\re\left[G(t,z)e^{-j\varphi(t,z)}\right]\beta(z)\mathrm{sech}[2\beta(z)(t-T_{0}(z))]\d t,\label{ch3.3.2}
\end{equation}
and that of the centre of the soliton pulse $T_{0}(z)$ is
\begin{equation}
\frac{\partial T_{0}(z)}{\partial z}=4\alpha(z)+\epsilon\int_{-\infty}^{\infty}\re\left[G(t,z)e^{-j\varphi(t,z)}\right](t-T_{0}(z))\mathrm{sech}[2\beta(z)(t-T_{0}(z))]\d t,\label{ch3.3.32}
\end{equation}
and that of the soliton parameter $\theta(z)=-2\alpha(z)T_{0}(z)-\arg Q^{(d)}(\zeta(z),z)-\frac{\pi}{2}$ involving the phase of a spectral amplitude is
\begin{multline}
\frac{\partial\theta(z)}{\partial z}=-4\left[\alpha(z)^{2}+\beta(z)^{2}\right]+\epsilon\int_{-\infty}^{\infty}\im\left[G(t,z)e^{-j\varphi(t,z)}\right]\mathrm{sech}\left[2\beta(z)(t-T_{0}(z))\right]\\
\{1-2\beta(z)(t-T_{0}(z))\mathrm{tanh}\left[2\beta(z)(t-T_{0}(z))\right]\}\d t\\
+2\epsilon\alpha(z)\int_{-\infty}^{\infty}\re\left[G(t,z)e^{-j\varphi(t,z)}\right](t-T_{0}(z))\mathrm{sech}[2\beta(z)(t-T_{0}(z))]\d t,\label{ch3.3.33}
\end{multline}
where 
\begin{equation}\label{ch2.4.1}
T_{0}(z)=\frac{1}{2\beta(z)}\ln \frac{|Q^{(d)}(\zeta(z),z)|}{2\beta(z)},
\end{equation}
and
$$\varphi(t,z)=-2\alpha(z)t-\arg Q^{(d)}(\zeta(z),z)-\frac{\pi}{2}.$$


When considering the noise model in the information theoretic point of view, the input and the noise need to be treated as random variables (or random processes). However, in the deterministic perturbation results \eqref{ch3.3.6}--\eqref{ch3.3.33}, the noise term $G(t,z)$ in \eqref{ch3.1.1} was treated as a deterministic perturbation, and hence, need to be modified to be stochastic.

The stochastic perturbation results for soliton parameters were derived in \cite{2005DTY}. Since the stochastic calculus in Stratonovich sense adopts the same symbolic rules as ordinary calculus does to derive the deterministic perturbation theory, equations \eqref{ch3.3.6}--\eqref{ch3.3.33} also give the stochastic perturbation theory in Stratonovich  sense, i.e., the integrals therein are all Stratonovich stochastic integrals.

The stochastic perturbation results in It\^{o} sense are much easier to work with compared to those in Stratonovich sense. For this reason, they were transferred to It\^{o} sense in \cite{2005DTY}. Mathematically, stochastic integrals in both these two senses are equally acceptable, however, they look different by an advection term because the rules of the It\^{o} calculus are different from those of the other. The relationship between the integrals in those two senses, in particular the derivation of the advection term, can be found in classical textbooks \cite{b42,b33}.

In this paper, we adopt the It\^{o} stochastic calculus.  The It\^{o} stochastic perturbation results \cite{2005DTY} are summarised as follows. The first order It\^{o} stochastic perturbations of the real part of the eigenvalue, the centre of the soliton pulse, and the parameter $\theta(z)$ involving the phase of its spectral amplitude are respectively of the same forms as those in \eqref{ch3.3.6}, \eqref{ch3.3.32}, and \eqref{ch3.3.33}, i.e., the advection terms are all zero. The first order It\^{o} stochastic perturbations of the imaginary part of the eigenvalue is
\begin{equation}
\frac{\partial\beta(z)}{\partial z}=\frac{1}{2}\epsilon^{2}+\epsilon\int_{-\infty}^{\infty}\re\left[G(t,z)e^{-j\varphi(t,z)}\right]\beta(z)\mathrm{sech}[2\beta(z)(t-T_{0}(z))]\d t,\label{ch3.3.2s}
\end{equation}
where $\frac{1}{2}\epsilon^{2}$ is the advection term, and the integral on the right hand side of \eqref{ch3.3.2s} is an It\^{o} stochastic integral.


\subsection*{Notations}

We summarize the notations used in this paper in this subsection. Denote $\length>0$ the length of the optical fibre. For an input signal $q(t,z)$, we denote its discrete eigenvalues by $\zeta_{i}(z)=\alpha_{i}(z)+j\beta_{i}(z)$, where $\alpha_{i}(z), \beta_{i}(z)\in\mathbb{R}$, and $\beta_{i}(z)>0$, $i=1,2,\ldots,N$. Its spectral amplitudes are denoted by $Q^{(d)}(\zeta_{i}(z),z)$,  where $z\in [0,\length]$. Note that in the noise free case, $\zeta_{i}(z)$ is a constant for any $z\in [0,\length]$, and hence is denoted by $\zeta_{i}\triangleq\alpha_{i}+j\beta_{i}$ for convenience. 

Denote respectively
\begin{equation}\label{ch3.3.52}
{\nri(z)\triangleq \alpha_{i}(z)-\alpha_{i}(0),\qquad\forall\ z\in[0,\length],}
\end{equation} 
and
\begin{equation}\label{ch3.3.51}
{\nii(z)\triangleq \beta_{i}(z)-\beta_{i}(0),\qquad\forall\ z\in[0,\length]}
\end{equation}
the noise processes for the real and imaginary parts of the eigenvalue $\zeta_{i}(z)$ of an input signal, $i=1,2,\ldots,N$. We further denote $\nid(z)\triangleq\nri(z)+j\nii(z)$, then $\nid(z)$ is the (non-Gaussian) noise process for the eigenvalue, $z\in[0,\length]$, $i=1,2,\ldots,N$. For solitons, we drop the subscripts in the notations in \eqref{ch3.3.52} and \eqref{ch3.3.51}, i.e., using $\n(z)$, $\nr(z)$ and $\ni(z)$ to denote respectively the (non-Gaussian) noise processes for the eigenvalue and its real and the imaginary parts. We further denote
\begin{equation}\label{ch3.3.118}
{\nu_{I}(z)\triangleq\ni(z)-\frac{1}{2}\epsilon^{2}z,\qquad\forall\ z\in[0,\length],}
\end{equation}

\section{A Channel Model for Discrete Spectral Amplitudes}\label{Model}

In this section, we make the first effort in proposing a channel model for the noisy evolution of discrete spectral amplitudes. Noise modelling is an interesting and important topic in digital communications \cite{b18} in terms of leading to the fundamental limit for reliable communications, namely channel capacity \cite{b13}. We first introduce our modelling methodology, and then propose the channel model. The signals of interest in this section are restricted to $N$-solitons. We will discuss later that the modelling methodology might also work for the discrete spectral amplitudes of the inputs with continuous spectrum as well.


Throughout this paper, we use an underline under a variable to denote a random variable when it is necessary.

\subsection{Modelling Methodology}



According to the system model \eqref{ch3.1.1}, we assume that distributed amplification is used, and hence the   noise is injected during the whole transmission process. 
To characterise the noise, 
a typical approach to model the signal propagation in a noisy channel is used, which is to divide the fibre into many extremely short segments so that signal propagation along each segment is modelled by two phases. In the first phase, an additive noise will be added. In the second phase, noise will be ignored and signal will be propagated along a noiseless fibre. 
Under this model, during the first phase, the added noise may cause perturbation in the discrete eigenvalues $\zeta_{i}$ and also the spectral amplitudes of $N$-solitons $Q^{(d)}(\zeta_{i})$, $i=1,2,\ldots,N$. In the second phase, the perturbation in each eigenvalue will incur additional noise to its corresponding spectral amplitude, as in $\zeta_{i}$ in the term $e^{-4j\zeta_{i}^{2}z}$ in \eqref{ch3.2.17}, the noiseless spectral domain signal evolution. Then by concatenating all these segments together, we can characterise the aggregate noises added during the transmission.

In this paper, our model focuses primarily on the noise caused by eigenvalue perturbation, which is significant as shown in Section \ref{Discussions} for scenarios of interest, such as long distance communication and high input power. This is also supported by the analysis in the celebrated Gordon-Haus effect \cite{1986GH}: In a special case when $N=1$, time jittering of a soliton was claimed to be largely due to the perturbation of the real part of the eigenvalue (or group velocity) \cite{1986GH}. On 
the other hand, the centre of a soliton is closely related to its magnitude of the spectral amplitude. Therefore, this motivates the idea that the perturbation in the eigenvalue may have a significant impact on the spectral amplitude of a soliton as well.  
Notice that the noise in the spectral amplitude incurred in the second phase depends on the soliton's eigenvalue. To some extent,  our model can be viewed as the special case when the noise in a spectral amplitude is dominated by those incurred in the second phase.


To further illustrate the idea, suppose we split an optical fibre of a fixed length $\length$ into $m$ segments, $m\in\mathbb{Z}^{+}$. Specifically, let 
\begin{equation}\label{ch3.3.35}
{0\triangleq c_{0}<c_{1}<\cdots<c_{m-1}<c_{m}\triangleq\length}.
\end{equation}
Then the  $k$-th fibre piece or segment (denoted by $I_{k}$) corresponds to the piece at the interval $ [c_{k-1},c_{k})$. The length of the $k$-th segment, denoted by $|I_{k}|$, thus equals to $ c_{k}-c_{k-1}$. 
Here, we assume that $m$ is large, and that $|I_{k}|$ will be vanishingly small, $k=1,2,\ldots,m$.

Let us now focus on a particular segment $I_{k}$ (corresponding to the interval $ [c_{k-1},c_{k})$). At the input of the $k$-th segment, suppose the eigenvalues are $\zeta_{i}(c_{k-1})$, with respectively their corresponding spectral amplitudes $Q^{(d)}(\zeta_{i}(c_{k-1}) , c_{k-1})$, $i=1,2,\ldots,N$.
We assume that noise is injected at the beginning of the segment $c_{k-1}$. Since the length of the segment is small, the noise added at  this point will  be very small. 
%

At the output of the $k$-th segment, the eigenvalues $\zeta_{i}(c_{k})$ 
will be perturbed by the time domain white Gaussian noise process (the term $G(t,z)$ in the SNLSE \eqref{ch3.1.1}). 
 The changes in each eigenvalue will also affect the changes (or evolution) of the corresponding spectral amplitude. Specifically, we model that for $i=1,2,\ldots,N$,
\begin{equation}\label{ch3.3.53}
{Q^{(d)}_{m}(\zeta_{i}(c_{k}),c_{k})\triangleq Q^{(d)}_{m}(\zeta_{i}(c_{k-1}),c_{k-1})e^{-4j\zeta_{i}(z_{k})^{2}|I_{k}|}},
\end{equation}
where $z_{k}$ is some arbitrary point in the interval $[c_{k-1},c_{k})$, and $\zeta_{i}(z_{k})$ is the eigenvalue at $z=z_{k}$, which is the same as the eigenvalue after being perturbed at the beginning of this fibre segment because of the noiseless evolution in the second phase of noise modelling. Here, we  use $Q^{(d)}_{m}(\zeta_{i}(c_{k}),c_{k})$ instead of 
$Q^{(d)}(\zeta_{i}(c_{k}),c_{k})$ to emphasise that it is an approximation, and that its accuracy  will depend on the size of $m$.

%
%
%
%

%

Now, by concatenating all the fibre segments,  we have for $i=1,2,\ldots,N$,
\begin{align}
Q^{(d)}(\zeta_{i}(\length),\length) 
& \approx Q^{(d)}_{m}(\zeta_{i}(\length),\length) \\
& = Q^{(d)}(\zeta_{i}(0),0) 
\prod_{k=1}^{m}e^{-4j\zeta_{i}(z_{k})^{2}|I_{k}|} \\
& = Q^{(d)}(\zeta_{i}(0),0) 
\prod_{k=1}^{m}e^{-4j\left[\zeta_{i}(0)+\nid(z_{k})\right]^{2}|I_{k}|}. \label{ch3.3.38}
\end{align}
%
%
%
Finally,  by  studying the asymptotic behaviour of \eqref{ch3.3.38} (as $m$ goes to infinity, which means that the lengths of the fibre segments tend to zero), we obtain the channel model for the spectral amplitudes.

For the spectral amplitude $Q^{(d)}(\zeta_{i},z)$, we regard $Q^{(d)}(\zeta_{i}(0),\length)$, a scaled input of the actual input spectral amplitude $Q^{(d)}(\zeta_{i}(0),0)$, as the channel input of a spectral amplitude for the purpose of convenience. Our idea is that we do not regard the noiseless spatial evolution of a spectral amplitude
\begin{equation}\label{ch3.x0.1}
{Q^{(d)}(\zeta_{i}(0),\length)=e^{-4j\zeta_{i}(0)^{2}\length}Q^{(d)}(\zeta_{i}(0),0)}
\end{equation}
as part of the noise, and only compare the difference between the scaled input $Q^{(d)}(\zeta_{i}(0),\length)$ and output $Q^{(d)}(\zeta_{i}(\length),\length)$, which is called the noise in a spectral amplitude, $i=1,2,\ldots,N$.


\subsection{A Noisy Input-and-output Relationship of Discrete Spectral Amplitudes}

In this subsection, we derive an analytical channel model for the discrete spectral amplitudes of an $N$-soliton.

\begin{thm}\label{ch3.t3.9}
{If the stochastic processes $\nri(z)$, $\nii(z)$ and $\nri(z)\nii(z)$ are all mean-square integrable conditioned on any input eigenvalue $\zeta_{i}(0)$ over the interval $[0,\length]$, then the channel model for the magnitudes of the spectral amplitudes of an $N$-soliton (under the proposed model) is given by
\begin{multline}
\ln |Q^{(d)}(\zeta_{i}(\length),\length)|=\ln |Q^{(d)}(\zeta_{i}(0),\length)|+8\alpha_{i}(0)\int_{0}^{\length}\nii(z)\d z+8\beta_{i}(0)\int_{0}^{\length}\nri(z)\d z\\
+8\int_{0}^{\length}\nri(z)\nii(z)\d z,\label{ch3.3.46n}
\end{multline}
where $Q^{(d)}(\zeta_{i}(0),\length)$ is defined in \eqref{ch3.x0.1},
$i=1,2,\ldots,N$. Here, the integrals in \eqref{ch3.3.46n} are all mean-square stochastic integrals.}
\end{thm}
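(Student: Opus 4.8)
The plan is to start from the concatenated expression \eqref{ch3.3.38}, take the logarithm of its \emph{magnitude} at finite $m$ (which turns the product into a finite sum by an exact identity, with no approximation involved), peel off the deterministic noiseless evolution, and then recognise the three remaining sums as mean-square Riemann sums converging to the stochastic integrals in \eqref{ch3.3.46n}. Concretely, since $|e^{w}| = e^{\re w}$, the finite-$m$ approximant satisfies the exact identity
\begin{equation*}
\ln\big|Q^{(d)}_{m}(\zeta_{i}(\length),\length)\big| = \ln\big|Q^{(d)}(\zeta_{i}(0),0)\big| + \sum_{k=1}^{m}\re\!\left(-4j\big[\zeta_{i}(0)+\nid(z_{k})\big]^{2}\right)|I_{k}|.
\end{equation*}
Expanding $[\zeta_{i}(0)+\nid(z_{k})]^{2} = \zeta_{i}(0)^{2} + 2\zeta_{i}(0)\nid(z_{k}) + \nid(z_{k})^{2}$ splits the sum into three groups; the $\zeta_{i}(0)^{2}$ group gives $\re(-4j\zeta_{i}(0)^{2})\sum_{k}|I_{k}| = \re(-4j\zeta_{i}(0)^{2})\length$, which combined with $\ln|Q^{(d)}(\zeta_{i}(0),0)|$ equals $\ln|Q^{(d)}(\zeta_{i}(0),\length)|$ by \eqref{ch3.x0.1}. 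This part needs no limiting argument.

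Next I would compute the real parts of the cross and quadratic terms by writing $\zeta_{i}(0)=\alpha_{i}(0)+j\beta_{i}(0)$ and $\nid(z_{k})=\nri(z_{k})+j\nii(z_{k})$. A short calculation gives $\re\big(-8j\zeta_{i}(0)\nid(z_{k})\big) = 8\alpha_{i}(0)\nii(z_{k}) + 8\beta_{i}(0)\nri(z_{k})$ and $\re\big(-4j\nid(z_{k})^{2}\big) = 8\nri(z_{k})\nii(z_{k})$, so that for every $m$
\begin{multline*}
\ln\big|Q^{(d)}_{m}(\zeta_{i}(\length),\length)\big| = \ln\big|Q^{(d)}(\zeta_{i}(0),\length)\big| + 8\alpha_{i}(0)\sum_{k=1}^{m}\nii(z_{k})|I_{k}| \\ + 8\beta_{i}(0)\sum_{k=1}^{m}\nri(z_{k})|I_{k}| + 8\sum_{k=1}^{m}\nri(z_{k})\nii(z_{k})|I_{k}|.
\end{multline*}
Each of the three remaining sums is a Riemann sum with arbitrary tags $z_{k}\in[c_{k-1},c_{k})$ for the integrands $\nii$, $\nri$ and $\nri\nii$ respectively. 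By the hypothesis that these three processes are mean-square integrable over $[0,\length]$ conditioned on $\zeta_{i}(0)$, each sum converges in mean square, as the mesh $\max_{k}|I_{k}|\to 0$, to the corresponding mean-square stochastic integral, independently of the refining partition sequence and of the interior tags. Taking the mean-square limit and invoking that $\ln|Q^{(d)}(\zeta_{i}(\length),\length)|$ is, under the proposed model, precisely this limit, yields \eqref{ch3.3.46n}.

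The step I expect to require the most care is the passage to the mean-square limit: one must check that the three Riemann sums are Cauchy in $L^{2}$ (conditioned on $\zeta_{i}(0)$) and that their common limits are independent of the chosen sequence of partitions and of the tags $z_{k}$ — which is exactly what the mean-square integrability assumptions on $\nri$, $\nii$ and $\nri\nii$ are designed to supply, and is why the theorem is stated conditionally on them. A secondary point worth spelling out is that, in contrast to the upstream derivation of the perturbation equations \eqref{ch3.3.6}--\eqref{ch3.3.2s}, no It\^{o}-type correction arises in this limit: the factor $\nri(z_{k})\nii(z_{k})$ involves the \emph{accumulated} noise processes evaluated at $z_{k}$ (which are $O(1)$, not increments of size $O(|I_{k}|)$), so $\sum_{k}\nri(z_{k})\nii(z_{k})|I_{k}|$ is an ordinary mean-square Riemann sum and survives the limit as the genuine extra term $8\int_{0}^{\length}\nri(z)\nii(z)\,\d z$. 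Verifying the required integrability for the specific soliton noise processes (rather than assuming it) is naturally postponed to the soliton special case in Section \ref{Special}, where \eqref{ch3.3.6} and \eqref{ch3.3.2s} provide explicit control on $\nr(z)$ and $\ni(z)$.
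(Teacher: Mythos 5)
Your proposal is correct and follows essentially the same route as the paper's proof: split the fibre, take the logarithm of the magnitude of the concatenated product \eqref{ch3.3.38} to obtain the three Riemann sums of \eqref{ch3.3.40}, and pass to the mean-square limit using the assumed integrability of $\nri$, $\nii$ and $\nri\nii$. The only detail the paper spells out that you leave implicit is how to handle multiplication by the \emph{random} coefficients $\alpha_{i}(0)$ and $\beta_{i}(0)$: one first establishes the mean-square convergence conditioned on $\underline{\zeta_{i}(0)}=\zeta_{i}(0)$ (where they are constants, so Lemma \ref{ch3.l3.10} applies) and then lifts it to unconditional convergence via the Law of total expectation, as in \eqref{ch3.3.88}--\eqref{ch3.3.100}.
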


\begin{proof}[Proof of Theorem \ref{ch3.t3.9}]
{
We split an optical fibre of length $\length$ into $m$ pieces by making an arbitrary partition of the interval $[0,\length]$ given in \eqref{ch3.3.35},
where $m$ is a positive integer. Denote the $k$-th fibre piece by $I_{k}\triangleq[c_{k-1},c_{k})$, and denote its length by $|I_{k}|\triangleq c_{k}-c_{k-1}$, $k=1,2,\ldots,m$. Then for any $k\in\{1,2,\ldots,m\}$, we choose a $z_{k}\in[c_{k-1},c_{k})$ such that \eqref{ch3.3.53} holds. So for $i=1,2,\ldots,N$, the noisy evolution of the spectral amplitude when we split the fibre into $m$ pieces as described in \eqref{ch3.3.35} is \eqref{ch3.3.38}, which gives us 
\begin{eqnarray}
\ln Q^{(d)}_{m}(\zeta_{i}(\length),\length) &=& \ln Q^{(d)}(\zeta_{i}(0),0)-4j\zeta_{i}(0)^{2}\length-8j\zeta_{i}(0)\sum_{k=1}^{m}\nid(z_{k})|I_{k}|-4j\sum_{k=1}^{m}\nid(z_{k})^{2}|I_{k}|+2js_{i}\pi\nonumber\\
&=&\ln Q^{(d)}(\zeta_{i}(0),\length)-8j\zeta_{i}(0)\sum_{k=1}^{m}\nid(z_{k})|I_{k}|-4j\sum_{k=1}^{m}\nid(z_{k})^{2}|I_{k}|+2js_{i}\pi\label{ch3.3.39},
\end{eqnarray}
where $z_{k}\in[c_{k-1},c_{k})$, $k=1,2,\ldots,m$, and $s_{i}$ is a random variable taking integer values. Denote $N_{i,m}(\length)\triangleq\ln |Q^{(d)}_{m}(\zeta_{i}(\length),\length)|-\ln |Q^{(d)}(\zeta_{i}(0),\length)|$, then according to \eqref{ch3.3.39}, its real part is
\begin{equation}
N_{i,m}(\length)=8\alpha_{i}(0)\sum_{k=1}^{m}\nii(z_{k})|I_{k}|+8\beta_{i}(0)\sum_{k=1}^{m}\nri(z_{k})|I_{k}|+8\sum_{k=1}^{m}\nri(z_{k})\nii(z_{k})|I_{k}|\label{ch3.3.40}.
\end{equation}

Since the stochastic processes $\nri(z)$, $\nii(z)$ and $\nri(z)\nii(z)$ are all assumed to be mean-square integrable over the interval $[0,\length]$, $i=1,2,\ldots,N$, we have
\begin{equation}\label{ch3.3.42}
{\mslim_{\Delta\rightarrow 0}\sum_{k=1}^{m}\nri(z_{k})|I_{k}|=\int_{0}^{\length}\nri(z)\d z,}
\end{equation}
and
\begin{equation}\label{ch3.3.43}
{\mslim_{\Delta\rightarrow 0}\sum_{k=1}^{m}\nii(z_{k})|I_{k}|=\int_{0}^{\length}\nii(z)\d z,}
\end{equation}
and
\begin{equation}\label{ch3.3.44}
{\mslim_{\Delta\rightarrow 0}\sum_{k=1}^{m}\nri(z_{k})\nii(z_{k})|I_{k}|=\int_{0}^{\length}\nri(z)\nii(z)\d z,}
\end{equation}
where we use ``$\mslim$'' to denote the mean-square limit, and $\Delta\triangleq\max_{k=1,2,\ldots,m}|I_{k}|$.

Since the noise processes $\nri(z)$ and $\nii(z)$ are assumed to be mean-square integrable in $[0,\length]$ conditioned on a particular choice of input eigenvalue $\zeta_{i}(0)=\alpha_{i}(0)+j\beta_{i}(0)$, $i=1,2,\ldots,N$. According to Lemma \ref{ch3.l3.10}, we have 
\begin{equation}\label{ch3.3.88}
{\mathrm{E}\left\{\left[\underline{\alpha_{i}(0)}\sum_{k=1}^{m}\underline{\nii(z_{k})}|I_{k}|-\underline{\alpha_{i}(0)}\int_{0}^{\length}\underline{\nii(z)}\d z\right]^{2}\Bigg|\underline{\zeta_{i}(0)}=\zeta_{i}(0)\right\}\rightarrow 0,\qquad \Delta\rightarrow 0,}
\end{equation}
and
\begin{equation}\label{ch3.3.89}
{\mathrm{E}\left\{\left[\underline{\beta_{i}(0)}\sum_{k=1}^{m}\underline{\nri(z_{k})}|I_{k}|-\underline{\beta_{i}(0)}\int_{0}^{\length}\underline{\nri(z)}\d z\right]^{2}\Bigg|\underline{\zeta_{i}(0)}=\zeta_{i}(0)\right\}\rightarrow 0,\qquad \Delta\rightarrow 0.}
\end{equation}

So according to the Law of total expectation, we have 
\begin{equation}\label{ch3.3.90}
{\mathrm{E}\left[\underline{\alpha_{i}(0)}\sum_{k=1}^{m}\underline{\nii(z_{k})}|I_{k}|-\underline{\alpha_{i}(0)}\int_{0}^{\length}\underline{\nii(z)}\d z\right]^{2}\rightarrow 0,\qquad\Delta\rightarrow 0.}
\end{equation}
and
\begin{equation}\label{ch3.3.91}
{\mathrm{E}\left[\underline{\beta_{i}(0)}\sum_{k=1}^{m}\underline{\nri(z_{k})}|I_{k}|-\underline{\beta_{i}(0)}\int_{0}^{\length}\underline{\nri(z)}\d z\right]^{2}\rightarrow 0,\qquad\Delta\rightarrow 0.}
\end{equation}
which respectively mean that
\begin{equation}\label{ch3.3.99}
{\mslim_{\Delta\rightarrow 0}\underline{\alpha_{i}(0)}\sum_{k=1}^{m}\underline{\nii(z_{k})}|I_{k}|=\underline{\alpha_{i}(0)}\int_{0}^{\length}\underline{\nii(z)}\d z,}
\end{equation}
and
\begin{equation}\label{ch3.3.100}
{\mslim_{\Delta\rightarrow 0}\underline{\beta_{i}(0)}\sum_{k=1}^{m}\underline{\nri(z_{k})}|I_{k}|=\underline{\beta_{i}(0)}\int_{0}^{\length}\underline{\nri(z)}\d z.}
\end{equation}

Applying Lemma \ref{ch3.l3.10} and equations \eqref{ch3.3.44}, \eqref{ch3.3.99}--\eqref{ch3.3.100}, the mean-square limit of the right hand side of \eqref{ch3.3.40} will be the noise in the magnitude of the spectral amplitude, which is
\begin{equation}
N_{i}(\length)\triangleq\mslim_{\Delta\rightarrow 0}N_{i,m}(\length)=8\alpha_{i}(0)\int_{0}^{\length}\nii(z)\d z+8\beta_{i}(0)\int_{0}^{\length}\nri(z)\d z+8\int_{0}^{\length}\nri(z)\nii(z)\d z.\label{ch3.3.45}
\end{equation}
for $i=1,2,\ldots,N$. Under our modelling methodology, we have
\begin{equation}\label{ch3.3.101}
{N_{i}(\length)\triangleq\ln|Q^{(d)}(\zeta_{i}(\length),\length)|-\ln|Q^{(d)}(\zeta_{i}(0),\length)|.}
\end{equation}
So the noise model of the magnitudes of spectral amplitudes is \eqref{ch3.3.46n}.
}
\end{proof}

We note that the noise in each discrete spectral amplitude $Q^{(d)}(\zeta_{i}(\length),\length)$ has only the noise incurred in the corresponding eigenvalue $\nid(z)$ involved, but not the others. To see the reason why the perturbations of other eigenvalues are not involved, we recall our modelling method. For each fibre segment, spectral domain signal parameters are perturbed at first (the first phase), and then propagated noiselessly with only the perturbation of the eigenvalue in the first phase incurring additional contamination to the spectral amplitude during noiseless evolution (the second phase).  Since the noiseless spatial evolution of a spectral amplitude (in second phase) only depends on its corresponding eigenvalue (see \eqref{ch3.2.17}), the additional noise incurred to the spectral amplitude in this phase only depends on the perturbation of the corresponding eigenvalue, although the perturbation of the $N$ eigenvalues may statistically correlated with each others. 
To be more specific, we notice that the eigenvalue $\zeta_{k}$ contributes to the spatial evolution of the spectral amplitude in terms of the term $e^{-4j\zeta_{k}^{2}z}$. Although the noise (denoted by $\n_{k}(z)$) contaminating the eigenvalue $\zeta_{k}(0)$ may not be independent of another $\n_{m}(z)$ statistically $(m\neq k)$, we could still obtain the noisy input-and-output relationship by $e^{-4j[\zeta_{k}(0)+\n_{k}(z)]^{2}z}$ for each soliton component $(\zeta_{k}, Q^{(d)}(\zeta_{k},0))$, $k=1,2,\ldots,N$.

For the same reason, the channel model we propose might also work for discrete spectral amplitudes of general signals, which have continuous spectrum as well, such as rectangular and sinc pulses. Although the noise in continuous spectrum might be statistically correlated with that in discrete eigenvalues, we could still get some ideas of how discrete spectral amplitudes are contaminated provided that the noise effects on discrete eigenvalues are a perturbation around their original ones at the input, i.e., neither new eigenvalue is generated, nor any eigenvalue vanishes.

Theorem \ref{ch3.t3.9} provides a noisy input-output relationship of the magnitudes of the spectral amplitudes. Their statistics are not available in close forms for general $N$-solitons when $N>1$ because the perturbation results of their eigenvalues are required in the first place, which are not available in simple forms.


We need to remark here that our channel model only characterises part of the noise actually incurred during the noisy evolution of discrete spectral amplitudes, i.e., the accumulation of the perturbation of the eigenvalue in the spatial evolution of the spectral amplitude (the second phase) is modelled, but not the initial perturbation of the spectral amplitude (the first phase). As a result, our model is an approximation. To the best of our knowledge, the discrete spectral amplitude model is essentially missing in the literature. While deriving a full and complete channel model is challenging in general, we target at the derivation of a simpler model which is easier and more suitable for theoretical analysis. In the following section, we consider a special case of our channel model with soliton inputs, where more analytical tools are accessible such as perturbation theory. We discuss in this scenario when the noise we capture in our proposed model is significant, or even dominant.

Similarly, if the stochastic processes $\nri(z)^{2}-\nii(z)^{2}$ are mean-square integrable over the interval $[0,\length]$, $i=1,2,\ldots,N$, the noise model for the phases of the spectral amplitudes of an $N$-soliton (under the proposed model) is
\begin{multline}
\arg Q^{(d)}(\zeta_{i}(\length),\length)\equiv\arg Q^{(d)}(\zeta_{i}(0),\length)-8\alpha_{i}(0)\int_{0}^{\length}\nri(z)\d z+8\beta_{i}(0)\int_{0}^{\length}\nii(z)\d z\\
-4\int_{0}^{\length}\left[\nri(z)^{2}-\nii(z)^{2}\right]\d z\qquad(\mathrm{mod}\ 2\pi)\label{ch3.4.102}
\end{multline}
for $i=1,2,\ldots,N$. However, the noise statistics are also unknown.

\section{A Noise Model for A Discrete Spectral Amplitude of A Soliton: A Special Case}\label{Special}

In the previous section, we propose a modelling methodology for the noise in discrete spectral amplitudes of $N$-solitons, and a channel model is derived using this method. We notice that the noise statistics are missing because the perturbation theory for the eigenvalues of $N$-solitons is required, which is not available. As a result, we consider a special case in this section, where the input signals are solitons. As is shown below, the conditions of the convergence of stochastic integrals could be derived in the channel model, and the noise statistics could be obtained as well.

\subsection{Noisy Input-and-output Relationship of the Magnitude of A Spectral Amplitude}


In this subsection, we develop a noise model for the magnitude of a spectral amplitude using the modelling methodology described in the previous section.

We first study some statistical properties of the noise terms that are necessary to derive the noise model.

\begin{lemma}\label{ch3.l3.9}
{Assume $0\leq s\leq t\leq \length$, then we have
\begin{enumerate}
{
\item 
\begin{equation}\label{ch3.3.57}
{\mathrm{E}\left\{\nr(s)^{2}\nu_{I}(s)\left[\nu_{I}(t)-\nu_{I}(s)\right]\right\}=0,}
\end{equation}
and
\begin{equation}\label{ch3.3.58}
{\mathrm{E}\left\{\nr(s)\nu_{I}(s)^{2}\left[\nr(t)-\nr(s)\right]\right\}=0,}
\end{equation}
and
\begin{equation}
\mathrm{E}\left\{\nr(s)\nu_{I}(s)\left[\nr(t)-\nr(s)\right]\left[\nu_{I}(t)-\nu_{I}(s)\right]\right\}=0;\label{ch3.3.59}
\end{equation}
\item 
\begin{multline}
\mathrm{E}\left[\nr(s)\nu_{I}(s)\nr(t)\nu_{I}(t)\right]=\mathrm{E}\left[\nr(s)^{2}\nu_{I}(s)^{2}\right]\\
=\frac{1}{12}\epsilon^{4}s^{2}\mathrm{E}\left[\beta(0)\right]^{2}+\frac{1}{18}\epsilon^{6}s^{3}\mathrm{E}\beta(0)+\frac{1}{144}\epsilon^{8}s^{4};\label{ch3.3.60}
\end{multline}
\item 
\begin{equation}\label{ch3.3.105}
{\mathrm{E}\left[\nr(s)^{2}\nu_{I}(s)\Big|\underline{\zeta(0)}=\zeta(0)\right]=0;}
\end{equation}
\item 
\begin{equation}
\mathrm{E}\left[\nr(s)\nr(t)\Big|\underline{\zeta(0)}=\zeta(0)\right]=\mathrm{E}\left[\nr(s)^{2}\Big|\underline{\zeta(0)}=\zeta(0)\right],\label{ch3.3.121}
\end{equation}
and
\begin{equation}
\mathrm{E}\left[\ni(s)\ni(t)\Big|\underline{\zeta(0)}=\zeta(0)\right]=\mathrm{E}\left[\ni(s)^{2}\Big|\underline{\zeta(0)}=\zeta(0)\right]+\frac{1}{4}\epsilon^{4}s(t-s).\label{ch3.3.122}
\end{equation}
Furthermore, the noise processes $\nr(z)$ and $\ni(z)$ are both mean-square integrable conditioned on an eigenvalue $\zeta(0)=\alpha(0)+\beta(0)$ over the interval $[0,\length]$;
\item If $\mathrm{E}\beta(0)<\infty$, and $\mathrm{E}[\beta(0)]^{2}<\infty$, the stochastic process $\nr(z)\ni(z)$ is mean-square integrable over the interval $[0,\length]$.
}
\end{enumerate}
}
\end{lemma}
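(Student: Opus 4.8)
\emph{Proof plan.}
Everything is driven by the It\^o perturbation dynamics recalled in Section~\ref{Preli}. From \eqref{ch3.3.6}, \eqref{ch3.3.2s} and the definition \eqref{ch3.3.118}, both $\nr(z)$ and $\nu_{I}(z)=\ni(z)-\tfrac12\epsilon^{2}z$ are driftless It\^o integrals against $G$, hence (local) martingales with respect to the filtration $\mathcal{F}_{z}$ generated by $\{G(\cdot,z'):z'\le z\}$ and by the input $\zeta(0)$; consequently $\mathrm{E}[\nr(z)\mid\zeta(0)]=\mathrm{E}[\nu_{I}(z)\mid\zeta(0)]=0$. Furthermore, in \eqref{ch3.3.6} the stochastic increment of $\nr$ is driven by $\im[Ge^{-j\varphi}]$ while in \eqref{ch3.3.2s} that of $\nu_{I}$ is driven by $\re[Ge^{-j\varphi}]$, and these two real white-noise fields are uncorrelated because of the circular symmetry of $G$ in \eqref{ch3.1.7}; hence the cross-variation $\langle\nr,\nu_{I}\rangle_{z}$ vanishes identically. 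These two facts --- the martingale property and the vanishing cross-variation --- are all that Part~(1), Part~(4), and the first equality of Part~(2) require.

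For Part~(1) I would condition on $\mathcal{F}_{s}$ and use the tower rule. For \eqref{ch3.3.57}, $\nr(s)^{2}\nu_{I}(s)$ is $\mathcal{F}_{s}$-measurable, so the expectation equals $\mathrm{E}\!\left[\nr(s)^{2}\nu_{I}(s)\,\mathrm{E}\!\left[\nu_{I}(t)-\nu_{I}(s)\mid\mathcal{F}_{s}\right]\right]=0$; \eqref{ch3.3.58} is the mirror statement with $\nr$ and $\nu_{I}$ interchanged; and \eqref{ch3.3.59} follows from $\mathrm{E}\!\left[(\nr(t)-\nr(s))(\nu_{I}(t)-\nu_{I}(s))\mid\mathcal{F}_{s}\right]=\mathrm{E}\!\left[\langle\nr,\nu_{I}\rangle_{t}-\langle\nr,\nu_{I}\rangle_{s}\mid\mathcal{F}_{s}\right]=0$. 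The first equality of Part~(2) is then obtained by writing $\nr(t)=\nr(s)+(\nr(t)-\nr(s))$ and $\nu_{I}(t)=\nu_{I}(s)+(\nu_{I}(t)-\nu_{I}(s))$ inside $\mathrm{E}[\nr(s)\nu_{I}(s)\nr(t)\nu_{I}(t)]$ and noting that three of the four resulting terms are precisely the left-hand sides of \eqref{ch3.3.57}--\eqref{ch3.3.59}. Part~(4) uses the same mechanism: \eqref{ch3.3.121} is $\mathrm{E}[\nr(s)\mathrm{E}[\nr(t)\mid\mathcal{F}_{s}]\mid\zeta(0)]=\mathrm{E}[\nr(s)^{2}\mid\zeta(0)]$, and \eqref{ch3.3.122} follows by substituting $\ni=\nu_{I}+\tfrac12\epsilon^{2}z$, expanding, and using $\mathrm{E}[\nu_{I}(s)\nu_{I}(t)\mid\zeta(0)]=\mathrm{E}[\nu_{I}(s)^{2}\mid\zeta(0)]$ and $\mathrm{E}[\nu_{I}(\cdot)\mid\zeta(0)]=0$, after which the deterministic cross terms collapse to $\tfrac14\epsilon^{4}s(t-s)$.

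The computational core is the second equality of Part~(2), together with Part~(3) and Part~(5). Here the plan is to use that, conditioned on $\zeta(0)$, the processes $\nr$ and $\nu_{I}$ decouple (they are Gaussian functionals of $G$ built from the two independent quadratures $\im[Ge^{-j\varphi}]$ and $\re[Ge^{-j\varphi}]$), and that the noise intensities governing $\nr$ and $\nu_{I}$ are affine in the instantaneous amplitude, with $\mathrm{E}[\beta(z)\mid\zeta(0)]=\beta(0)+\tfrac12\epsilon^{2}z$. Applying It\^o's formula to $\nr(z)^{2}$, $\nu_{I}(z)^{2}$, $\nr(z)^{2}\nu_{I}(z)$ and $\nr(z)^{2}\nu_{I}(z)^{2}$ and taking expectations conditioned on $\zeta(0)$ produces a finite hierarchy of linear ODEs with polynomial forcing; solving it gives $\mathrm{E}[\nr(s)^{2}\nu_{I}(s)\mid\zeta(0)]=0$, which is \eqref{ch3.3.105}, and a closed polynomial expression in $(\beta(0),s)$ for $\mathrm{E}[\nr(s)^{2}\nu_{I}(s)^{2}\mid\zeta(0)]$. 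Taking the further expectation over the input law of $\zeta(0)=\alpha(0)+j\beta(0)$ turns the monomials in $\beta(0)$ into $\mathrm{E}[\beta(0)]^{2}$, $\mathrm{E}\beta(0)$ and a constant, yielding \eqref{ch3.3.60}. The mean-square-integrability assertions follow at once: $\mathrm{E}[\nr(z)^{2}\mid\zeta(0)]$ and $\mathrm{E}[\ni(z)^{2}\mid\zeta(0)]$ are polynomials in $z$, hence bounded on $[0,\length]$, so $\nr$ and $\ni$ are mean-square integrable there conditioned on $\zeta(0)$; and $\mathrm{E}[(\nr(z)\ni(z))^{2}]$ is, by the same computation followed by the input average, a polynomial in $z$ with coefficients that are finite whenever $\mathrm{E}\beta(0)<\infty$ and $\mathrm{E}[\beta(0)]^{2}<\infty$, so $\int_{0}^{\length}\mathrm{E}[(\nr(z)\ni(z))^{2}]\,\d z<\infty$.

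The step I expect to be the real obstacle is obtaining the exact rational coefficients in \eqref{ch3.3.60} (and in the analogous polynomial needed for Part~(5)): this requires committing to precisely how the instantaneous amplitude, and the It\^o/advection corrections, enter the effective first-order dynamics of $\nr$ and $\nu_{I}$, and then propagating these consistently through the moment ODEs and the input average --- a purely bookkeeping task, but an error-prone one. A subsidiary point that must be handled cleanly is the rigorous justification of the conditional decoupling of $\nr$ and $\nu_{I}$ given $\zeta(0)$, for instance via L\'evy's characterisation applied to the orthogonal pair of driving Brownian motions, since both \eqref{ch3.3.105} and the factorisation used for \eqref{ch3.3.60} rest on it.
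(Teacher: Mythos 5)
Your martingale and cross-variation arguments for Part (1), Part (4), the first equality of Part (2), and the integrability assertions are correct, and they are a compact repackaging of what the paper does by hand: the paper's region-splitting of iterated It\^{o} integrals together with Lemma \ref{ch3.l3.16} is exactly the statement that future increments of $\nr$ and $\nu_{I}$ have zero conditional mean and that the two processes have vanishing cross-variation (they are driven by orthogonal quadratures of $G$). Up to that point the two proofs coincide in substance, yours being the cleaner formulation.

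The gap is in the quantitative half --- Part (3) and the second equality of Part (2) --- which you defer to an unexecuted moment-ODE computation resting on a ``conditional decoupling'' of $\nr$ and $\nu_{I}$ given $\zeta(0)$. That decoupling is false: zero cross-variation makes the pair uncorrelated, not independent, because the quadratic variation of $\nr$ is $\d\langle\nr\rangle_{z}=\tfrac{\epsilon^{2}}{6}\beta(z)\,\d z$ with $\beta(z)=\beta(0)+\ni(z)$, so the intensity of $\nr$ is modulated by the very noise that builds $\nu_{I}$. Running your own hierarchy honestly, It\^{o}'s formula gives
\begin{equation*}
\frac{\d}{\d z}\mathrm{E}\left[\nr(z)^{2}\nu_{I}(z)\Big|\underline{\zeta(0)}=\zeta(0)\right]=\frac{\epsilon^{2}}{6}\mathrm{E}\left[\beta(z)\nu_{I}(z)\Big|\underline{\zeta(0)}=\zeta(0)\right]=\frac{\epsilon^{2}}{6}\mathrm{E}\left[\nu_{I}(z)^{2}\Big|\underline{\zeta(0)}=\zeta(0)\right]>0,
\end{equation*}
so the method yields $\mathrm{E}[\nr(s)^{2}\nu_{I}(s)\mid\zeta(0)]=\tfrac{1}{24}\epsilon^{4}s^{2}\beta(0)+\tfrac{1}{144}\epsilon^{6}s^{3}$ rather than the $0$ of \eqref{ch3.3.105}, and, propagated through $\frac{\d}{\d z}\mathrm{E}[\nr^{2}\nu_{I}^{2}]=\tfrac{\epsilon^{2}}{6}\mathrm{E}[\beta\nu_{I}^{2}]+\tfrac{\epsilon^{2}}{2}\mathrm{E}[\beta\nr^{2}]$, different $\epsilon^{6}$ and $\epsilon^{8}$ coefficients in \eqref{ch3.3.60} (only the leading $\tfrac{1}{12}\epsilon^{4}s^{2}\mathrm{E}[\beta(0)^{2}]$ term is untouched). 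The paper reaches $0$ through a region-splitting argument (see \eqref{ch3.b.1} and \eqref{ch3.3.81}) that discards the equal-position correlation between the increment $\re[W(t_{3},z_{3})]s(t_{3},z_{3})$ and the martingale part of $\beta(z_{1})-\beta(z_{3})$; your route does not reproduce that step, and you would have to either justify it or accept different constants. As written, the proposal therefore does not establish \eqref{ch3.3.105} or the second equality in \eqref{ch3.3.60}; only the qualitative conclusion needed for Part (5) (finiteness of $\int_{0}^{\length}\int_{0}^{\length}\mathrm{E}[\nr(s)\ni(s)\nr(t)\ni(t)]\,\d s\,\d t$ under the stated moment conditions) survives, since it does not depend on the exact coefficients.
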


\begin{proof}[Sketch of the Proof]
{The technique of the proof is essentially similar to that of proving Theorem \ref{ch3.t3.5}. Please refer to Appendix \ref{AppendixB} for a detailed proof.}
\end{proof}

With the help of Lemma \ref{ch3.l3.9}, we have the following channel model for the magnitude of a spectral amplitude.

\begin{thm}\label{ch3.t3.7}
{
If $\mathrm{E}\beta(0)<\infty$, and $\mathrm{E}[\beta(0)]^{2}<\infty$, the noise model for the magnitude of the spectral amplitude of a soliton (under the proposed modelling methodology) is given by
\begin{multline}
\ln |Q^{(d)}(\zeta(\length),\length)|=\ln |Q^{(d)}(\zeta(0),\length)|+8\alpha(0)\int_{0}^{\length}\ni(z)\d z+8\beta(0)\int_{0}^{\length}\nr(z)\d z\\
+8\int_{0}^{\length}\nr(z)\ni(z)\d z, \label{ch3.3.46}
\end{multline}
where
\begin{equation}\label{ch3.x3.1}
Q^{(d)}(\zeta(0),\length)= Q^{(d)}(\zeta(0),0)e^{-4j\zeta(0)^{2}\length},
\end{equation}
and $\nr(z)$ and $\ni(z)$ are the noise processes of the real and imaginary parts of the eigenvalue, $\alpha(z)$ and $\beta(z)$, respectively.
}
\end{thm}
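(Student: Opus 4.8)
The plan is to obtain Theorem~\ref{ch3.t3.7} as the $N=1$ specialization of Theorem~\ref{ch3.t3.9}. In the general $N$-soliton result the mean-square integrability of the eigenvalue-noise processes had to be \emph{assumed}, since the perturbation theory for $N$-solitons is unavailable; in the soliton case it can instead be \emph{verified}, and the moment hypotheses $\mathrm{E}\beta(0)<\infty$, $\mathrm{E}[\beta(0)]^{2}<\infty$ are imposed precisely for that purpose. Concretely, Theorem~\ref{ch3.t3.9} with $N=1$ and the index $i$ dropped requires that $\nr(z)$, $\ni(z)$, and $\nr(z)\ni(z)$ be mean-square integrable over $[0,\length]$ (the first two conditioned on the input eigenvalue $\zeta(0)$); once this is established, \eqref{ch3.3.46n} is exactly \eqref{ch3.3.46} and the noiseless evolution \eqref{ch3.x0.1} is exactly \eqref{ch3.x3.1}.

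First I would invoke Lemma~\ref{ch3.l3.9}, which is tailored to solitons. Its part~(4) gives that $\nr(z)$ and $\ni(z)$ are each mean-square integrable conditioned on an eigenvalue $\zeta(0)=\alpha(0)+j\beta(0)$ over $[0,\length]$, which settles the first two processes. Its part~(5) gives, precisely under $\mathrm{E}\beta(0)<\infty$ and $\mathrm{E}[\beta(0)]^{2}<\infty$, that $\nr(z)\ni(z)$ is mean-square integrable over $[0,\length]$. Together these are the full hypotheses of Theorem~\ref{ch3.t3.9} in the case $N=1$.

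With the hypotheses verified, I would replay the proof of Theorem~\ref{ch3.t3.9} in the single-soliton setting: partition $[0,\length]$ as in \eqref{ch3.3.35}, use the per-segment update \eqref{ch3.3.53} and the telescoping identities \eqref{ch3.3.39}--\eqref{ch3.3.40} to write the approximate magnitude noise as a combination of Riemann-type sums of $\nr$, $\ni$, and $\nr\ni$, and pass to the mean-square limit term by term. The $\nr\ni$ sum converges to $8\int_{0}^{\length}\nr(z)\ni(z)\d z$ directly from unconditional mean-square integrability (as in \eqref{ch3.3.44}); the $\nr$-sum and $\ni$-sum, carrying the random prefactors $\beta(0)$ and $\alpha(0)$, are handled as in \eqref{ch3.3.88}--\eqref{ch3.3.91}, i.e.\ first conditionally (via Lemma~\ref{ch3.l3.10}, the same lemma used in the proof of Theorem~\ref{ch3.t3.9}) and then de-conditioned through the law of total expectation. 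Assembling the three limits yields \eqref{ch3.3.45} without indices, which with \eqref{ch3.3.101} and \eqref{ch3.x3.1} is exactly \eqref{ch3.3.46}.

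The only delicate step, and the one I would spell out, is the de-conditioning: passing from conditional mean-square convergence of the $\alpha(0)$- and $\beta(0)$-weighted Riemann sums to their unconditional mean-square convergence. This is where $\mathrm{E}\beta(0)<\infty$ and $\mathrm{E}[\beta(0)]^{2}<\infty$ (together with finiteness of the corresponding moments of $\alpha(0)$, which are the moments of the transmitter-chosen input spectral parameter) are used to dominate the conditional mean-square errors so that the law of total expectation applies. All the genuinely computational content — the closed-form moments of Lemma~\ref{ch3.l3.9} such as \eqref{ch3.3.60}, and the \ito\ soliton-perturbation equations \eqref{ch3.3.6} and \eqref{ch3.3.2s} behind them — is encapsulated in Lemma~\ref{ch3.l3.9} and assumed here, so the proof of Theorem~\ref{ch3.t3.7} is itself short.
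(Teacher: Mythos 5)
Your proposal is correct and follows essentially the same route as the paper: the authors likewise prove Theorem~\ref{ch3.t3.7} by specializing Theorem~\ref{ch3.t3.9} to $N=1$ and using Lemma~\ref{ch3.l3.9} (your parts~(4) and~(5)) to verify, rather than assume, the mean-square integrability of $\nr$, $\ni$, and $\nr\ni$ under the stated moment conditions on $\beta(0)$. Your additional attention to the de-conditioning step via the law of total expectation matches what is done in \eqref{ch3.3.88}--\eqref{ch3.3.91} in the proof of Theorem~\ref{ch3.t3.9}.
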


\begin{proof}[Proof of Theorem \ref{ch3.t3.7}]
{The proof is essentially similar to that of Theorem \ref{ch3.t3.9} except that the mean-square convergence of the stochastic integrals in \eqref{ch3.3.42}--\eqref{ch3.3.100} can be proved with the help of Lemma \ref{ch3.l3.9}.
}
\end{proof}

\subsection{Noise Statistics of the Magnitude Channel}

We notice that existing results of the statistics of the noise in an eigenvalue, such as \cite{DBLP:journals/corr/abs-1302-2875}, often adopted an approximation of conditional Gaussian distribution of the noise in an eigenvalue, which is based on the assumption that the soliton parameters on the right hand sides of \eqref{ch3.3.6} and \eqref{ch3.3.2} remain unchanged. However, this assumption holds only when the fibre is extremely short. In this paper, we impose no such  assumption, and study non-Gaussian statistics of the noise in the magnitude of the spectral amplitude. The statistics obtained are more precise than the ones derived with Gaussian approximation. 

Throughout this paper, we use the first order perturbation theory \eqref{ch3.3.6} and \eqref{ch3.3.2} to derive noise statistics. As a result, the noise statistics are accurate only in the sense of the first order perturbation.

\subsubsection{Statistical Properties of the Noise in an Eigenvalue}


The non-Gaussian statistics of the noise in an eigenvalue were obtained in \cite{2005DTY}. They can also be calculated directly from the It\^{o} stochastic perturbation theory \eqref{ch3.3.6}--\eqref{ch3.3.2}, and are summarized as follows: Conditioned on  $\zeta(0)=\alpha(0)+j\beta(0)$, we have 
\begin{equation}\label{ch3.3.92}
{\mathrm{E}\left[\underline{\nr(\length)}\Big|\underline{\zeta(0)}=\zeta(0)\right]=0,
}
\end{equation}
\begin{equation}\label{ch3.3.94}
{
\mathrm{E}\left[\underline{\ni(\length)}\Big|\underline{\zeta(0)}=\zeta(0)\right]=\frac{1}{2}\epsilon^{2}\length,}
\end{equation}
\begin{equation}\label{ch3.3.95}
{\mathrm{E}\left[\underline{\nr(\length)}^{2}\bigg|\underline{\zeta(0)}=\zeta(0)\right]=\frac{1}{6}\epsilon^{2}\length\beta(0)+\frac{1}{24}\epsilon^{4}\length^{2},}
\end{equation}
\begin{equation}\label{ch3.3.93}
{\mathrm{E}\left[\underline{\ni(\length)}^{2}\bigg|\underline{\zeta(0)}=\zeta(0)\right]=\frac{1}{2}\epsilon^{2}\length\beta(0)+\frac{3}{8}\epsilon^{4}\length^{2}}.
\end{equation}
Please note again that the statistics as shown in \eqref{ch3.3.92}--\eqref{ch3.3.93} only hold to the first order in the noise level.

Applying the perturbation theory \eqref{ch3.3.6} and \eqref{ch3.3.2s}, we can study the dependency of the noises for the real and the imaginary parts of the eigenvalue. Again, this only holds to the first order.

\begin{thm}\label{ch3.t3.5}
{If the complex white Gaussian noise $\underline{G(t,z)}$ in \eqref{ch3.1.1} is circularly symmetric with zero mean, the dependency of the noise variables $\underline{\nr(z)}$ and $\underline{\ni(z)}$ (for any $z\in [0,\length]$) are summarised as follows:
\begin{enumerate}
{
\item Conditioned on an input eigenvalue $\zeta(0)$, we have
\begin{equation}
{\mathrm{E}\left[\underline{\nr(z)}\underline{\ni(z)}\Big|\underline{\zeta(0)}=\zeta(0)\right]=0}
\end{equation}
for any $z\in [0,\length]$;

\item $\underline{\nr(z)}$ and $\underline{\ni(z)}$ are uncorrelated for any $z\in[0,\length]$.
}
\end{enumerate}
}
\end{thm}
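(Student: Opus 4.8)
The plan is to work directly from the first‑order It\^o stochastic perturbation theory for a fundamental soliton, namely equations \eqref{ch3.3.6} for $\partial\alpha/\partial z$ and \eqref{ch3.3.2s} for $\partial\beta/\partial z$, and to express $\underline{\nr(z)}$ and $\underline{\ni(z)}$ as It\^o stochastic integrals of the driving noise $\underline{G(t,z)}$. Integrating \eqref{ch3.3.6} over $[0,z]$ gives $\underline{\nr(z)}=-\epsilon\int_0^z\!\int_{-\infty}^{\infty}\im\!\left[\underline{G(t,z')}e^{-j\varphi(t,z')}\right]f_\alpha(t,z')\,\d t\,\d z'$ with $f_\alpha(t,z')=\beta(z')\,\mathrm{sech}[2\beta(z')(t-T_0(z'))]\,\mathrm{tanh}[2\beta(z')(t-T_0(z'))]$, and similarly $\underline{\ni(z)}=\tfrac12\epsilon^2 z+\epsilon\int_0^z\!\int_{-\infty}^{\infty}\re\!\left[\underline{G(t,z')}e^{-j\varphi(t,z')}\right]f_\beta(t,z')\,\d t\,\d z'$ with $f_\beta(t,z')=\beta(z')\,\mathrm{sech}[2\beta(z')(t-T_0(z'))]$. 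Since $\tfrac12\epsilon^2 z$ is deterministic, the key object for part~1 is the cross expectation of the two It\^o integrals; the deterministic drift in $\ni$ drops out of the correlation and of the conditional expectation once we center, so it suffices to handle the martingale (stochastic‑integral) parts.

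First I would fix the input eigenvalue $\zeta(0)$ and use the It\^o isometry for the cross term: the conditional expectation $\mathrm{E}\!\left[\underline{\nr(z)}\,\underline{\ni(z)}\,\middle|\,\underline{\zeta(0)}\right]$ reduces, modulo the drift, to a double space‑time integral of the cross‑covariance of $\im[\underline G\,e^{-j\varphi}]$ and $\re[\underline G\,e^{-j\varphi}]$ against the product of the two deterministic‑given‑the‑path kernels $f_\alpha f_\beta$. The crucial input is the hypothesis that $\underline{G(t,z)}$ is circularly symmetric complex white Gaussian with zero mean and covariance \eqref{ch3.1.7}: for such a process the real and imaginary parts of $\underline{G}e^{-j\varphi}$ are, at each $(t,z)$, jointly Gaussian with equal variances and \emph{zero} cross‑correlation, i.e.\ $\mathrm{E}\!\left[\re(\underline{G}e^{-j\varphi})\,\im(\underline{G}e^{-j\varphi})\right]=0$ — this is exactly the statement that circular symmetry is preserved under the deterministic phase rotation $e^{-j\varphi}$. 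Hence the integrand of the double integral vanishes identically, giving $\mathrm{E}\!\left[\underline{\nr(z)}\,\underline{\ni(z)}\,\middle|\,\underline{\zeta(0)}\right]=0$ for every $z\in[0,\length]$, which is part~1.

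For part~2, I would note that $\underline{\nr(z)}$ has conditional mean zero by \eqref{ch3.3.92} (equivalently because its defining It\^o integral is a zero‑mean martingale and has no drift), so $\mathrm{Cov}(\underline{\nr(z)},\underline{\ni(z)})=\mathrm{E}[\underline{\nr(z)}\,\underline{\ni(z)}]-\mathrm{E}[\underline{\nr(z)}]\,\mathrm{E}[\underline{\ni(z)}]=\mathrm{E}[\underline{\nr(z)}\,\underline{\ni(z)}]$. By the law of total expectation and the centering of $\underline{\nr(z)}$, $\mathrm{E}[\underline{\nr(z)}\,\underline{\ni(z)}]=\mathrm{E}\!\left[\mathrm{E}\!\left(\underline{\nr(z)}\,\underline{\ni(z)}\,\middle|\,\underline{\zeta(0)}\right)\right]$; the deterministic drift $\tfrac12\epsilon^2 z$ contributes $\tfrac12\epsilon^2 z\,\mathrm{E}[\underline{\nr(z)}\mid\underline{\zeta(0)}]=0$, and the martingale part contributes $0$ by part~1. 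Therefore the unconditional covariance is zero and $\underline{\nr(z)}$, $\underline{\ni(z)}$ are uncorrelated on $[0,\length]$.

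The main obstacle is the care needed in the conditioning: $f_\alpha$ and $f_\beta$ depend on the path‑dependent quantities $\beta(z')$ and $T_0(z')$, which are themselves functionals of the earlier noise, so the ``deterministic kernel'' picture is only literally true after conditioning on the entire trajectory up to $z'$, not just on $\zeta(0)$. I would handle this by iterated conditioning — condition on the filtration $\mathcal F_{z'}$ generated by the noise up to $z'$, apply the It\^o isometry / martingale cross‑variation argument at each level (the instantaneous cross‑variation of the $\re$ and $\im$ driving terms is zero by circular symmetry regardless of the $\mathcal F_{z'}$‑measurable kernels), and then take expectations. Everything stays consistent with "first order in the noise level," which is the regime in which \eqref{ch3.3.6} and \eqref{ch3.3.2s} themselves hold; I would flag that caveat explicitly rather than attempt a higher‑order justification.
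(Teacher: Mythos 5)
Your proposal is correct and follows essentially the same route as the paper: both represent $\nr(z)$ and $\ni(z)$ as It\^o integrals via \eqref{ch3.3.6} and \eqref{ch3.3.2s}, exploit the fact that circular symmetry of $G$ is preserved under the adapted phase rotation $e^{-j\varphi}$ (the paper's Lemma \ref{ch3.l3.16}) so that the real and imaginary driving components are independent with zero mean, handle the path-dependent kernels by conditioning on the intermediate soliton parameters, treat the $\tfrac{1}{2}\epsilon^{2}z$ drift separately using $\mathrm{E}\left[\nr(z)\,\big|\,\zeta(0)\right]=0$, and finish part 2 with the law of total expectation. The only difference is presentational: you invoke the It\^o cross-isometry (vanishing cross-variation of the real and imaginary driving terms), whereas the paper carries out the equivalent computation explicitly by splitting the double $z$-integral into the regions $z_{1}\leq z_{2}$ and $z_{1}>z_{2}$ and factoring out the zero-mean, non-anticipated noise factor in each.
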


We note that if the fibre length is assumed to be short enough such that the distribution of the noise in an eigenvalue is approximately Gaussian, the random variables $\nr(z)$ and $\ni(z)$ are conditional independent for every $z\in[0,\length]$, because they are conditional jointly Gaussian distributed. Removing the Gaussian approximation, although no longer independent, they are still uncorrelated.


\begin{proof}[Proof of Theorem \ref{ch3.t3.5}]
{
For the simplicity of the notations, we denote 
\begin{equation}\label{appen.b.4.1}
W(t,z)\triangleq G(t,z)e^{-j\varphi(t,z)},
\end{equation}
and
\begin{equation}\label{appen.b.4.2}
s(t,z)\triangleq \beta(z)\mathrm{sech}\left[2\beta(z)\left(t-T_{0}(z)\right)\right],
\end{equation}
and
\begin{equation}
r(t,z)\triangleq \beta(z)\mathrm{sech}\left[2\beta(z)\left(t-T_{0}(z)\right)\right]\mathrm{tanh}\left[2\beta(z)\left(t-T_{0}(z)\right)\right].\label{appen.b.4.3}
\end{equation}
For any $\length>0$, we have
\begin{equation}
\underline{\ni(\length)}=\frac{1}{2}\epsilon^{2}\length+\epsilon\int_{0}^{\length}\int_{-\infty}^{\infty}\re\left[\underline{W(t,z)}\right]\underline{s(t,z)}\d t\d z,\label{ch3.a.4}
\end{equation}
and
\begin{equation}
\underline{\nr(\length)}=-\epsilon\int_{0}^{\length}\int_{-\infty}^{\infty}\im\left[\underline{W(t,z)}\right]\underline{r(t,z)}\d t\d z.\label{ch3.a.5}
\end{equation}
Denote
$$\textbf{Q}_{1,2}\triangleq\left(\varphi(t_{k},z_{k}), \beta(z_{k}),T_{0}(z_{k}):\ k=1,2\right),$$
and the conditional joint CDF  of $\underline{\textbf{Q}_{1,2}}\big|\underline{\zeta(0)}$ by $H_{6}$. Denote the support of the distribution $H_{6}$ by $S_{H_{6}}$. For any $z\in[0,\length]$, we have 
\begin{eqnarray}
&&\mathrm{E}\left[\underline{\nr(\length)}\underline{\nu_{I}(\length)}\Big|\underline{\zeta(0)}=\zeta(0)\right]\nonumber\\
 &=& \mathrm{E}\left\{-\int_{0}^{\length}\int_{0}^{\length}\iint_{\mathbb{R}^{2}}\im\Big[\underline{W(t_{1},z_{1})}\Big]\re\left[\underline{W(t_{2},z_{2})}\right]\underline{r(t_{1},z_{1})}\underline{s(t_{2},z_{2})}\d t_{1}\d t_{2}\d z_{1}\d z_{2}\Bigg|\underline{\zeta(0)}=\zeta(0)\right\}\nonumber\\
 &=& -\int_{0}^{\length}\int_{0}^{\length}\iint_{\mathbb{R}^{2}}\mathrm{E}\left\{\im\Big[\underline{W(t_{1},z_{1})}\Big]\re\left[\underline{W(t_{2},z_{2})}\right]\underline{r(t_{1},z_{1})}\underline{s(t_{2},z_{2})}\Big|\underline{\zeta(0)}=\zeta(0)\right\}\d t_{1}\d t_{2}\d z_{1}\d z_{2}\label{ch3.3.119}\\
 &=& -\int_{0}^{\length}\int_{0}^{\length}\iint_{\mathbb{R}^{2}}\idotsint_{S_{H_{6}}}\mathrm{E}\left\{\im\Big[\underline{W(t_{1},z_{1})}\Big]\re\left[\underline{W(t_{2},z_{2})}\right]\underline{r(t_{1},z_{1})}\underline{s(t_{2},z_{2})}\Big|\underline{\zeta(0)}=\zeta(0), \underline{\textbf{Q}_{1,2}}=\textbf{Q}_{1,2}\right\}\nonumber\\
 &&\d H_{6}\d t_{1}\d t_{2}\d z_{1}\d z_{2}\nonumber\\
&=& -\int_{0}^{\length}\int_{0}^{\length}\iint_{\mathbb{R}^{2}}\idotsint_{S_{H_{6}}}\mathrm{E}\left\{\im\Big[\underline{W(t_{1},z_{1})}\Big]\re\left[\underline{W(t_{2},z_{2})}\right]r(t_{1},z_{1})s(t_{2},z_{2})\Big|\underline{\zeta(0)}=\zeta(0), \underline{\textbf{Q}_{1,2}}=\textbf{Q}_{1,2}\right\}\nonumber\\
 &&\d H_{6}\d t_{1}\d t_{2}\d z_{1}\d z_{2}\nonumber\\
 &=& -\iint_{\{z_{1}\leq z_{2}\}}\iint_{\mathbb{R}^{2}}\idotsint_{S_{H_{6}}}\mathrm{E}\left\{\im\Big[\underline{G(t_{1},z_{1})}e^{-j\varphi(t_{1},z_{1})}\Big]r(t_{1},z_{1})s(t_{2},z_{2})\Big|\underline{\zeta(0)}=\zeta(0), \underline{\textbf{Q}_{1,2}}=\textbf{Q}_{1,2}\right\}\nonumber\\
&&\mathrm{E}\left\{\re\left[\underline{G(t_{2},z_{2})}e^{-j\varphi(t_{2},z_{2})}\right]\right\}\d H_{6}\d t_{1}\d t_{2}\d \sigma_{1}\nonumber\\
&&-\iint_{\{z_{1}> z_{2}\}}\iint_{\mathbb{R}^{2}}\idotsint_{S_{H_{6}}}\mathrm{E}\left\{\re\Big[\underline{G(t_{2},z_{2})}e^{-j\varphi(t_{2},z_{2})}\Big]r(t_{1},z_{1})s(t_{2},z_{2})\Big|\underline{\zeta(0)}=\zeta(0), \underline{\textbf{Q}_{1,2}}=\textbf{Q}_{1,2}\right\}\nonumber\\
&&\mathrm{E}\left\{\im\left[\underline{G(t_{1},z_{1})}e^{-j\varphi(t_{1},z_{1})}\right]\right\}\d H_{6}\d t_{1}\d t_{2}\d \sigma_{2}\label{ch3.a.6}\\
&=& 0+0=0.\label{ch3.a.57}
\end{eqnarray}
In equation \eqref{ch3.3.119}, we change the order of the \ito\ integral and the conditional expectation. This could be proved by realising that the order of the conditional expectation and the mean-square limit (of the partial sum of the \ito\ integral by definition) could be changed. Equation \eqref{ch3.a.6} holds because of Lemma \ref{ch3.l3.16}, and the definition of the \ito\ stochastic integral. Equation \eqref{ch3.a.57} stands because of Lemma \ref{ch3.l3.16}.

Furthermore, applying the Law of total expectation to \eqref{ch3.a.57}, we have
$$\mathrm{E}\left[\underline{\nr(\length)}\underline{\nu_{I}(\length)}\right]=0.$$
So we have
\begin{equation}
\mathrm{E}\left[\underline{\nr(\length)}\underline{\ni(\length)}\right]=\mathrm{E}\left[\underline{\nr(\length)}\underline{\nu_{I}(\length)}\right]+\frac{1}{2}\epsilon^{2}\length\cdot\mathrm{E}\left[\underline{\nr(\length)}\right]=0,\nonumber
\end{equation}
which means that $\underline{\nr(z)}$ and $\underline{\ni(z)}$ are uncorrelated for any $z\in[0,\length]$.

}
\end{proof}

\subsubsection{The Statistics of the Noise in the Magnitude of a Spectral Amplitude}

In this subsection, we study the non-Gaussian statistics of the noise in the magnitude of a spectral amplitude. Before this, we need the following lemma.


\begin{lemma}\label{ch3.l3.11}
{Denote 
\begin{equation}\label{ch3.3.124}
{\underline{\Gamma^{(R)}(\length)}\triangleq\int_{0}^{\length}\underline{\nr(z)}\d z,}
\end{equation}
\begin{equation}\label{ch3.3.125}
{\underline{\Gamma^{(I)}(\length)}\triangleq\int_{0}^{\length}\underline{\ni(z)}\d z,}
\end{equation}
and
\begin{equation}\label{ch3.3.126}
{\underline{\Gamma^{(RI)}(\length)}\triangleq\int_{0}^{\length}\underline{\nr(z)}\underline{\ni(z)}\d z.}
\end{equation}
Then conditioned on an input eigenvalue $\zeta(0)=\alpha(0)+j\beta(0)$, we have
\begin{equation}\label{ch3.3.54}
{\mathrm{E}\left[\underline{\Gamma^{(R)}(\length)}\underline{\Gamma^{(I)}(\length)}\Big|\underline{\zeta(0)}=\zeta(0)\right]=0,}
\end{equation}
and
\begin{equation}
\mathrm{E}\left[\underline{\Gamma^{(R)}(\length)}\underline{\Gamma^{(RI)}(\length)}\Big|\underline{\zeta(0)}=\zeta(0)\right]=\frac{5}{288}\epsilon^{4}\length^{4}\beta(0)+\frac{7}{2880}\epsilon^{6}\length^{5},\label{ch3.3.55}
\end{equation}
and
\begin{equation}\label{ch3.3.56}
{\mathrm{E}\left[\underline{\Gamma^{(I)}(\length)}\underline{\Gamma^{(RI)}(\length)}\Big|\underline{\zeta(0)}=\zeta(0)\right]=0.}
\end{equation}
}
\end{lemma}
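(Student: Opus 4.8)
The plan is to turn each conditional second moment into a deterministic double integral over $[0,\length]^{2}$ of a two- or three-point correlation of the eigenvalue noises, and then to evaluate those integrals using the facts already collected in Lemma~\ref{ch3.l3.9}, Theorem~\ref{ch3.t3.5} and the closed form \eqref{ch3.3.95}. The first step is to interchange the conditional expectation with the two $z$-integrals defining the $\Gamma$'s: since $\nr(z)$, $\ni(z)$ and $\nr(z)\ni(z)$ are mean-square integrable on $[0,\length]$ conditioned on $\zeta(0)$ (Lemma~\ref{ch3.l3.9}, items 4--5), each $\Gamma$ is a mean-square limit of Riemann sums, the conditional expectation of a product of such sums equals the iterated sum of conditional expectations of products of terms, and the conditional expectation commutes with the mean-square limit --- exactly the manoeuvre used in the proof of Theorem~\ref{ch3.t3.9}. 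After this interchange the three claims become
\[
\mathrm{E}\!\left[\underline{\Gamma^{(R)}(\length)}\,\underline{\Gamma^{(I)}(\length)}\,\big|\,\underline{\zeta(0)}=\zeta(0)\right]=\int_{0}^{\length}\!\!\int_{0}^{\length}\mathrm{E}\!\left[\underline{\nr(s)}\,\underline{\ni(t)}\,\big|\,\underline{\zeta(0)}=\zeta(0)\right]\d s\,\d t,
\]
and similarly with integrand $\mathrm{E}[\nr(s)\nr(t)\ni(t)\mid\zeta(0)]$ for \eqref{ch3.3.55} and $\mathrm{E}[\ni(s)\nr(t)\ni(t)\mid\zeta(0)]$ for \eqref{ch3.3.56}.

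For \eqref{ch3.3.54} and \eqref{ch3.3.56} I would substitute $\ni(z)=\nu_{I}(z)+\tfrac12\epsilon^{2}z$ (definition \eqref{ch3.3.118}) into the integrands and expand. Every resulting term either carries a lone first-order factor $\nr$, whence it vanishes because $\mathrm{E}[\nr(z)\mid\zeta(0)]=0$ (as in \eqref{ch3.3.92} with $\length$ replaced by $z$); or it is a mixed correlation of the type $\mathrm{E}[\nr(s)\nu_{I}(t)\mid\zeta(0)]$, which vanishes by the argument in the proof of Theorem~\ref{ch3.t3.5} (that argument in fact establishes $\mathrm{E}[\nr(s)\nu_{I}(t)\mid\zeta(0)]=0$ for all $s,t\in[0,\length]$, not only for $s=t=\length$); or it is a mixed third moment such as $\mathrm{E}[\nr(s)\nu_{I}(s)\nu_{I}(t)\mid\zeta(0)]$ or $\mathrm{E}[\nr(s)^{2}\nu_{I}(t)\mid\zeta(0)]$, which I would kill by writing $\nr(t)=\nr(s)+[\nr(t)-\nr(s)]$ and $\nu_{I}(t)=\nu_{I}(s)+[\nu_{I}(t)-\nu_{I}(s)]$ for $s\le t$ and invoking items 1 and 3 of Lemma~\ref{ch3.l3.9} together with the zero conditional mean of the future It\^o increments. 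Hence both integrands are identically zero on $[0,\length]^{2}$ and \eqref{ch3.3.54}, \eqref{ch3.3.56} follow.

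The substantive computation is \eqref{ch3.3.55}. Writing again $\ni(t)=\nu_{I}(t)+\tfrac12\epsilon^{2}t$, the integrand splits as $\mathrm{E}[\nr(s)\nr(t)\nu_{I}(t)\mid\zeta(0)]+\tfrac12\epsilon^{2}t\,\mathrm{E}[\nr(s)\nr(t)\mid\zeta(0)]$; the first piece vanishes by the increment argument above, and for the second piece \eqref{ch3.3.121} together with symmetry of the correlation in its arguments gives $\mathrm{E}[\nr(s)\nr(t)\mid\zeta(0)]=\mathrm{E}[\nr(\min\{s,t\})^{2}\mid\zeta(0)]$, which by \eqref{ch3.3.95} (with $\length$ replaced by the relevant endpoint) equals $\tfrac16\epsilon^{2}\beta(0)\min\{s,t\}+\tfrac1{24}\epsilon^{4}\min\{s,t\}^{2}$. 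It then remains to evaluate
\[
\tfrac12\epsilon^{2}\int_{0}^{\length}\!\!\int_{0}^{\length}t\left(\tfrac16\epsilon^{2}\beta(0)\min\{s,t\}+\tfrac1{24}\epsilon^{4}\min\{s,t\}^{2}\right)\d s\,\d t,
\]
splitting $[0,\length]^{2}$ into $\{s\le t\}$ (where $\min=s$) and $\{s>t\}$ (where $\min=t$) and integrating the resulting monomials in $s$ and $t$. Collecting terms yields $\big(\tfrac1{96}+\tfrac1{144}\big)\epsilon^{4}\length^{4}\beta(0)+\big(\tfrac1{720}+\tfrac1{960}\big)\epsilon^{6}\length^{5}=\tfrac5{288}\epsilon^{4}\length^{4}\beta(0)+\tfrac7{2880}\epsilon^{6}\length^{5}$, which is \eqref{ch3.3.55}.

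The main obstacle is not the polynomial bookkeeping but the rigorous justification that the mixed third moments $\mathrm{E}[\nr(s)\nr(t)\nu_{I}(t)\mid\zeta(0)]$ and its siblings vanish. This requires the It\^o-conditioning technique from the proof of Theorem~\ref{ch3.t3.5}: conditioning additionally on the soliton parameters $(\varphi,\beta,T_{0})$ at the relevant spatial points, recognising the remaining inner integrals over $[s,t]$ as It\^o integrals with zero conditional mean, and verifying that the ``future'' noise increment is uncorrelated with the product of the ``past'' factors. Once these vanishing identities (essentially items 1 and 3 of Lemma~\ref{ch3.l3.9}, mildly extended) are in hand, the remainder of the proof is routine.
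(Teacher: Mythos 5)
Your proposal is correct and follows essentially the same route as the paper's own proof in Appendix C: interchange the conditional expectation with the double $z$-integral, reduce each integrand via It\^o-increment decompositions and the vanishing identities of Lemma \ref{ch3.l3.9} and Theorem \ref{ch3.t3.5}, and for \eqref{ch3.3.55} evaluate $\tfrac12\epsilon^{2}\iint t\,\mathrm{E}[\underline{\nr(s)}\,\underline{\nr(t)}\mid\underline{\zeta(0)}]\,\d s\,\d t$ by splitting $[0,\length]^{2}$ at $s=t$ and using $\mathrm{E}[\underline{\nr(\min\{s,t\})}^{2}\mid\underline{\zeta(0)}]$ from \eqref{ch3.3.121} and \eqref{ch3.3.95}. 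Your arithmetic for the nonzero term matches the paper's \eqref{appen.b.4.4}, and your remark that the third-moment identities such as $\mathrm{E}[\underline{\nr(w)}\,\underline{\nu_{I}(w)}^{2}\mid\underline{\zeta(0)}]=0$ require a mild extension of Lemma \ref{ch3.l3.9} is exactly the gap the paper also papers over by citing that lemma.
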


\begin{proof}[Sketch of the Proof]
{By applying the technique used in Theorem \ref{ch3.t3.5}, this lemma can be proved. Please refer to Appendix \ref{AppendixB} for details.}
\end{proof}

The non-Gaussian noise statistics of the noise in the magnitude of the spectral amplitude of a soliton is stated and proved as follows.

\begin{thm}\label{ch3.t3.8}
{Denote the input eigenvalue random variable as $\zeta(0)=\alpha(0)+j\beta(0)$. The mean and variance of the noise in the spectral amplitude are as follows:
\begin{equation}\label{ch3.3.49}
{\mathrm{E}N(\length)=2\epsilon^{2}\length^{2}\mathrm{E}\alpha(0),}
\end{equation}
and
\begin{multline}
\mathrm{Var}\left[N(\length)\right]=\frac{32}{3}\epsilon^{2}\length^{3}\mathrm{E}\left[\alpha(0)^{2}\beta(0)\right]+\frac{32}{9}\epsilon^{2}\length^{3}\mathrm{E}\left[\beta(0)^{3}\right]+\frac{16}{3}\epsilon^{4}\length^{4}\mathrm{E}\left[\alpha(0)^{2}\right]+\frac{32}{9}\epsilon^{4}\length^{4}\mathrm{E}\left[\beta(0)^{2}\right]\\
+\frac{46}{45}\epsilon^{6}\length^{5}\mathrm{E}\beta(0)+\frac{23}{270}\epsilon^{8}\length^{6}-4\epsilon^{4}\length^{4}\left[\mathrm{E}\alpha(0)\right]^{2}.\label{ch3.3.50}
\end{multline}
}
\end{thm}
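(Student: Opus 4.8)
The plan is to substitute the magnitude channel model of Theorem~\ref{ch3.t3.7} and then compute the two moments by repeatedly conditioning on the input eigenvalue $\zeta(0)$ and invoking the moment identities collected in Lemmas~\ref{ch3.l3.9} and~\ref{ch3.l3.11}. Writing $N(\length)=\ln|Q^{(d)}(\zeta(\length),\length)|-\ln|Q^{(d)}(\zeta(0),\length)|$ and abbreviating $\Gamma^{(R)}\triangleq\Gamma^{(R)}(\length)$, $\Gamma^{(I)}\triangleq\Gamma^{(I)}(\length)$, $\Gamma^{(RI)}\triangleq\Gamma^{(RI)}(\length)$ in the notation of Lemma~\ref{ch3.l3.11}, Theorem~\ref{ch3.t3.7} gives $N(\length)=8\alpha(0)\Gamma^{(I)}+8\beta(0)\Gamma^{(R)}+8\Gamma^{(RI)}$. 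For the mean I would first note that the conditional statistics \eqref{ch3.3.92}--\eqref{ch3.3.93} hold verbatim with $\length$ replaced by any $z\in[0,\length]$ (they come from the $z$-differential perturbation equations \eqref{ch3.3.6}, \eqref{ch3.3.2s}), so $\mathrm{E}[\nr(z)\mid\zeta(0)]=0$ and $\mathrm{E}[\ni(z)\mid\zeta(0)]=\tfrac12\epsilon^{2}z$; integrating in $z$ yields $\mathrm{E}[\Gamma^{(R)}\mid\zeta(0)]=0$ and $\mathrm{E}[\Gamma^{(I)}\mid\zeta(0)]=\tfrac14\epsilon^{2}\length^{2}$, while Theorem~\ref{ch3.t3.5} gives $\mathrm{E}[\nr(z)\ni(z)\mid\zeta(0)]=0$, hence $\mathrm{E}[\Gamma^{(RI)}\mid\zeta(0)]=0$. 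Averaging over $\zeta(0)$ (law of total expectation) then gives $\mathrm{E}N(\length)=8\,\mathrm{E}\alpha(0)\cdot\tfrac14\epsilon^{2}\length^{2}=2\epsilon^{2}\length^{2}\mathrm{E}\alpha(0)$, which is \eqref{ch3.3.49}.

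For the variance I would expand $N(\length)^{2}$ into the six quadratic terms
\begin{multline*}
N(\length)^{2}=64\big[\alpha(0)^{2}(\Gamma^{(I)})^{2}+\beta(0)^{2}(\Gamma^{(R)})^{2}+(\Gamma^{(RI)})^{2}\\
+2\alpha(0)\beta(0)\Gamma^{(I)}\Gamma^{(R)}+2\alpha(0)\Gamma^{(I)}\Gamma^{(RI)}+2\beta(0)\Gamma^{(R)}\Gamma^{(RI)}\big]
\end{multline*}
and take its expectation term by term, always conditioning on $\zeta(0)$ before the outer average. The two cross terms containing $\Gamma^{(I)}\Gamma^{(R)}$ and $\Gamma^{(I)}\Gamma^{(RI)}$ vanish by \eqref{ch3.3.54} and \eqref{ch3.3.56}, and the $\Gamma^{(R)}\Gamma^{(RI)}$ term is read off from \eqref{ch3.3.55}. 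The conditional second moments $\mathrm{E}[(\Gamma^{(R)})^{2}\mid\zeta(0)]$ and $\mathrm{E}[(\Gamma^{(I)})^{2}\mid\zeta(0)]$ are obtained by writing $(\Gamma^{(R)})^{2}=\int_{0}^{\length}\!\!\int_{0}^{\length}\nr(s)\nr(t)\,\d s\,\d t$ (and likewise for $\Gamma^{(I)}$), restricting to $\{s\le t\}$ by symmetry, and using the covariance identities \eqref{ch3.3.121}, \eqref{ch3.3.122} together with the ($z$-generalised) second moments \eqref{ch3.3.95}, \eqref{ch3.3.93}; a routine iterated polynomial integration gives $\mathrm{E}[(\Gamma^{(R)})^{2}\mid\zeta(0)]=\tfrac1{18}\epsilon^{2}\length^{3}\beta(0)+\tfrac1{144}\epsilon^{4}\length^{4}$ and $\mathrm{E}[(\Gamma^{(I)})^{2}\mid\zeta(0)]=\tfrac16\epsilon^{2}\length^{3}\beta(0)+\tfrac1{12}\epsilon^{4}\length^{4}$.

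The hard part will be $\mathrm{E}[(\Gamma^{(RI)})^{2}]$. The idea is to substitute $\ni(z)=\nu_{I}(z)+\tfrac12\epsilon^{2}z$, so that $\Gamma^{(RI)}=\int_{0}^{\length}\nr(z)\nu_{I}(z)\,\d z+\tfrac12\epsilon^{2}\int_{0}^{\length}z\,\nr(z)\,\d z$; squaring produces a ``pure'' term, a cross term, and an $\nr$-only term. The pure term is evaluated from \eqref{ch3.3.60} (which itself rests on the increment identities \eqref{ch3.3.57}--\eqref{ch3.3.59}), the $\nr$-only term from \eqref{ch3.3.121} and \eqref{ch3.3.95} as above, and the cross term, of the form $\mathrm{E}[\nr(s)\nr(t)\nu_{I}(\cdot)]$, is shown to vanish by the circular-symmetry and \ito-increment argument used in the proof of Theorem~\ref{ch3.t3.5}: a later increment of $\nr$ (or of $\nu_{I}$) is uncorrelated with the earlier factors and has zero conditional mean, while $\mathrm{E}[\nr(s)^{2}\nu_{I}(s)]=0$ by \eqref{ch3.3.105}. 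Assembling all pieces, reinserting the factors of $\alpha(0)$, $\beta(0)$ under the outer expectation, combining the three contributions to the $\epsilon^{4}\length^{4}\mathrm{E}[\beta(0)^{2}]$ coefficient and the two contributions to the $\epsilon^{6}\length^{5}\mathrm{E}\beta(0)$ coefficient, and finally subtracting $(\mathrm{E}N(\length))^{2}=4\epsilon^{4}\length^{4}(\mathrm{E}\alpha(0))^{2}$, produces exactly \eqref{ch3.3.50}. Apart from this one vanishing-cross-term verification, the remainder is bookkeeping of elementary iterated integrals.
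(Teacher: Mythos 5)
Your proposal is correct and follows essentially the same route as the paper: the same decomposition $N(\length)=8\alpha(0)\Gamma^{(I)}+8\beta(0)\Gamma^{(R)}+8\Gamma^{(RI)}$, the same conditioning on $\zeta(0)$ with the law of total expectation, the same use of Lemmas \ref{ch3.l3.9} and \ref{ch3.l3.11} and Theorem \ref{ch3.t3.5} for the cross terms, and your treatment of $\mathrm{E}[(\Gamma^{(RI)})^{2}]$ via the substitution $\ni=\nu_{I}+\tfrac12\epsilon^{2}z$ is exactly the decomposition the paper carries out in \eqref{ch3.3.104}--\eqref{ch3.3.109}. All intermediate moments and the final coefficient bookkeeping match the paper's values.
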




\begin{proof}[Proof of Theorem \ref{ch3.t3.8}]
{
1) We have
\begin{equation}
\mathrm{E}\underline{N(\length)}=8\mathrm{E}\left[\underline{\alpha(0)}\int_{0}^{\length}\underline{\ni(z)}\d z\right]+8\mathrm{E}\left[\underline{\beta(0)}\int_{0}^{\length}\underline{\nr(z)}\d z\right]+8\mathrm{E}\left[\int_{0}^{\length}\underline{\nr(z)}\underline{\ni(z)}\d z\right],\label{ch3.a.81}
\end{equation}
where we calculate
\begin{eqnarray}
\mathrm{E}\left[\underline{\alpha(0)}\int_{0}^{\length}\underline{\ni(z)}\d z\right]&=&\mathrm{E}\left\{\mathrm{E}\left[\underline{\alpha(0)}\int_{0}^{\length}\underline{\ni(z)}\d z\Bigg|\underline{\zeta(0)}\right]\right\}\label{ch3.a.82}\\
&=&\mathrm{E}\left\{\alpha(0)\int_{0}^{\length}\mathrm{E}\left[\underline{\ni(z)}\Big|\underline{\zeta(0)}=\zeta(0)\right]\d z\right\}\nonumber\\
&=&\frac{1}{4}\epsilon^{2}\length^{2}\mathrm{E}\alpha(0),\label{ch3.a.83}
\end{eqnarray}
where $\zeta(0)\triangleq\alpha(0)+j\beta(0)$. Equation \eqref{ch3.a.82} is obtained by applying the Law of total expectation, and \eqref{ch3.a.83} stands because of \eqref{ch3.3.94}. Similarly, we also have
\begin{equation}\label{ch3.a.84}
{\mathrm{E}\left[\underline{\beta(0)}\int_{0}^{\length}\underline{\nr(z)}\d z\right]=0.}
\end{equation} 
Plug \eqref{ch3.a.83}--\eqref{ch3.a.84} into \eqref{ch3.a.81}, we have
\begin{eqnarray}
\mathrm{E}\underline{N(\length)}&=&2\epsilon^{2}\length^{2}\mathrm{E}\alpha(0)+0+8\int_{0}^{\length}\mathrm{E}\left[\underline{\nr(z)}\underline{\ni(z)}\right]\d z\nonumber\\
&=&2\epsilon^{2}\length^{2}\mathrm{E}\alpha(0),\label{ch3.a.85}
\end{eqnarray}
where \eqref{ch3.a.85} is obtained by applying Theorem \ref{ch3.t3.5}.

2) We adopt the notations $\underline{\Gamma^{(R)}(\length)}$, $\underline{\Gamma^{(I)}(\length)}$, and $\underline{\Gamma^{(RI)}(\length)}$ as denoted in \eqref{ch3.3.124}--\eqref{ch3.3.126}, respectively. Then for a particular choice of the input eigenvalue $\zeta(0)=\alpha(0)+j\beta(0)$, we have 
\begin{eqnarray}
&&\mathrm{E}\left[\underline{\Gamma^{(I)}(\length)}^{2}\Big|\underline{\zeta(0)}=\zeta(0)\right]\nonumber\\
&=&\mathrm{E}\left[\int_{0}^{\length}\int_{0}^{\length}\underline{\ni(s)}\underline{\ni(t)}\d s\d t\Bigg|\underline{\zeta(0)}=\zeta(0)\right]\nonumber\\
&=&\int_{0}^{\length}\int_{0}^{\length}\mathrm{E}\left[\underline{\ni(s)}\underline{\ni(t)}\Big|\underline{\zeta(0)}=\zeta(0)\right]\d s\d t\nonumber\\
&=&2\iint_{\{s\leq t\}}\mathrm{E}\left[\underline{\ni(s)}^{2}\Big|\underline{\zeta(0)}=\zeta(0)\right]+\mathrm{E}\Big\{\underline{\ni(s)}\left[\underline{\ni(t)}-\underline{\ni(s)}\right]\Big|\underline{\zeta(0)}=\zeta(0)\Big\}\d\sigma\nonumber\\
&=&2\iint_{\{s\leq t\}}\mathrm{E}\left[\underline{\ni(s)}^{2}\Big|\underline{\zeta(0)}=\zeta(0)\right]+\frac{1}{4}\epsilon^{4}s(t-s)\d\sigma\label{ch3.a.89}\\
&=&2\iint_{\{s\leq t\}}\frac{1}{2}\epsilon^{2}s\beta(0)+\frac{3}{8}\epsilon^{4}s^{2}+\frac{1}{4}\epsilon^{4}s(t-s)\d\sigma\label{ch3.a.90}\\
&=&\frac{1}{12}\epsilon^{4}\length^{4}+\frac{1}{6}\epsilon^{2}\length^{3}\beta(0),\label{ch3.a.91}
\end{eqnarray}
where \eqref{ch3.a.89} is obtained because of 
\begin{eqnarray}
&&\mathrm{E}\left\{\underline{\ni(s)}\left[\underline{\ni(t)}-\underline{\ni(s)}\right]\Big|\underline{\zeta(0)}=\zeta(0)\right\}\nonumber\\
&=&\mathrm{E}\left\{\left[\underline{\nu_{I}(s)}+\frac{1}{2}\epsilon^{2}s\right]\left[\frac{1}{2}\epsilon^{2}(t-s)+\underline{\nu_{I}(t)}-\underline{\nu_{I}(s)}\right]\Big|\underline{\zeta(0)}=\zeta(0)\right\}\nonumber\\
&=&\frac{1}{4}\epsilon^{4}s(t-s),\label{ch3.3.110}
\end{eqnarray}
which can be proved using the same method of obtaining \eqref{ch3.3.62}. Equation \eqref{ch3.a.90} is obtained using \eqref{ch3.3.93}. Similarly. we can calculate
\begin{eqnarray}
\mathrm{E}\left[\underline{\Gamma^{(R)}(\length)}^{2}\Big|\underline{\zeta(0)}=\zeta(0)\right]&=&2\iint_{\{s\leq t\}}\mathrm{E}\left[\underline{\nr(s)}^{2}\Big|\underline{\zeta(0)}=\zeta(0)\right]\d\sigma\nonumber\\
&=&2\iint_{\{s\leq t\}}\left[\frac{1}{6}\epsilon^{2}s\beta(0)+\frac{1}{24}\epsilon^{4}s^{2}\right]\d\sigma\nonumber\\
&=&\frac{1}{18}\epsilon^{2}\length^{3}\beta(0)+\frac{1}{144}\epsilon^{4}\length^{4},\label{ch3.a.92}
\end{eqnarray}

We calculate $\mathrm{Var}\left[\underline{N(\length)}\right]$ 
\begin{eqnarray}
\mathrm{Var}\left[\underline{N(\length)}\right]&=&\mathrm{E}\left[\underline{N(\length)}^{2}\right]-\left[\mathrm{E}\underline{N(\length)}\right]^{2}\nonumber\\
&=& 64\mathrm{E}\left[\underline{\alpha(0)}\underline{\Gamma^{(I)}(\length)}+\underline{\beta(0)}\underline{\Gamma^{(R)}(\length)}+\underline{\Gamma^{(RI)}(\length)}\right]^{2}-\left[\mathrm{E}\underline{N(\length)}\right]^{2}\nonumber\\
&=& 64\bigg\{\mathrm{E}\left[\underline{\alpha(0)}\underline{\Gamma^{(I)}(\length)}\right]^{2}+\mathrm{E}\left[\underline{\beta(0)}\underline{\Gamma^{(R)}(\length)}\right]^{2}+\mathrm{E}\left[\underline{\Gamma^{(RI)}(\length)}\right]^{2}+2\mathrm{E}\left[\underline{\alpha(0)}\underline{\beta(0)}\underline{\Gamma^{(R)}(\length)}\underline{\Gamma^{(I)}(\length)}\right]+\nonumber\\
&&2\mathrm{E}\left[\underline{\alpha(0)}\underline{\Gamma^{(I)}(\length)}\underline{\Gamma^{(RI)}(\length)}\right]+2\mathrm{E}\left[\underline{\beta(0)}\underline{\Gamma^{(R)}(\length)}\underline{\Gamma^{(RI)}(\length)}\right]\bigg\}-\left[\mathrm{E}\underline{N(\length)}\right]^{2}.\label{ch3.a.86}
\end{eqnarray}
For a particular choice of eigenvalue $\zeta(0)=\alpha(0)+j\beta(0)$, consider
\begin{eqnarray}
\mathrm{E}\left[\underline{\alpha(0)}\underline{\Gamma^{(I)}(\length)}\right]^{2}&=&\mathrm{E}\left\{\mathrm{E}\left[\left(\underline{\alpha(0)}\underline{\Gamma^{(I)}(\length)}\right)^{2}\Big|\underline{\zeta(0)}\right]\right\}\label{ch3.a.87}\\
&=&\int_{S_{P_{\zeta(0)}}}\alpha(0)^{2}\mathrm{E}\left[\underline{\Gamma^{(I)}(\length)}^{2}\Big|\underline{\zeta(0)}=\zeta(0)\right]\d P_{\zeta(0)}\nonumber\\
&=&\int_{S_{P_{\zeta(0)}}}\alpha(0)^{2}\cdot\left(\frac{1}{6}\epsilon^{2}\length^{3}\beta(0)+\frac{1}{12}\epsilon^{4}\length^{4}\right)\d P_{\zeta(0)}\label{ch3.a.88}\\
&=&\frac{1}{6}\epsilon^{2}\length^{3}\mathrm{E}\left[\underline{\alpha(0)}^{2}\underline{\beta(0)}\right]+\frac{1}{12}\epsilon^{4}\length^{4}\mathrm{E}\left[\underline{\alpha(0)}^{2}\right],\label{ch3.a.93}
\end{eqnarray}
where we denote $P_{\zeta(0)}$ the distribution of $\underline{\zeta(0)}$, and $S_{P_{\zeta(0)}}$ its support. Equation \eqref{ch3.a.87} is obtained according to the Law of total expectation, and \eqref{ch3.a.88} is obtained using \eqref{ch3.a.91}. Similarly, we have
\begin{eqnarray}
\mathrm{E}\left[\underline{\beta(0)}\underline{\Gamma^{(R)}(\length)}\right]^{2}&=&\int_{S_{P_{\zeta(0)}}}\beta(0)^{2}\mathrm{E}\left[\underline{\Gamma^{(R)}(\length)}^{2}\Big|\underline{\zeta(0)}=\zeta(0)\right]\d P_{\zeta(0)}\nonumber\\
&=&\int_{S_{P_{\zeta(0)}}}\beta(0)^{2}\cdot\left(\frac{1}{18}\epsilon^{2}\length^{3}\beta(0)+\frac{1}{144}\epsilon^{4}\length^{4}\right)\d P_{\zeta(0)}\label{ch3.a.94}\\
&=&\frac{1}{18}\epsilon^{2}\length^{3}\mathrm{E}\left[\underline{\beta(0)}^{3}\right]+\frac{1}{144}\epsilon^{4}\length^{4}\mathrm{E}\left[\underline{\beta(0)}^{2}\right],\label{ch3.a.95}
\end{eqnarray}
where equation \eqref{ch3.a.94} is obtained using \eqref{ch3.a.92}. Furthermore, according to \eqref{ch3.3.109}, we have
\begin{eqnarray}
\mathrm{E}\left[\underline{\Gamma^{(RI)}(\length)}\right]^{2}&=&\mathrm{E}\left\{\int_{0}^{\length}\int_{0}^{\length}\underline{\nr(s)}\underline{\ni(s)}\underline{\nr(t)}\underline{\ni(t)}\d s\d t\right\}\nonumber\\
&=&\int_{0}^{\length}\int_{0}^{\length}\mathrm{E}\left[\underline{\nr(s)}\underline{\ni(s)}\underline{\nr(t)}\underline{\ni(t)}\right]\d s\d t\nonumber\\
&=&\frac{1}{72}\epsilon^{4}\length^{4}\mathrm{E}\left[\underline{\beta(0)}^{2}\right]+\frac{1}{90}\epsilon^{6}\length^{5}\mathrm{E}\underline{\beta(0)}+\frac{23}{17280}\epsilon^{8}\length^{6},\label{ch3.a.97}
\end{eqnarray}
According to Lemma \ref{ch3.l3.11}, using the Law of total expectation, we have
\begin{equation}
\mathrm{E}\left[\underline{\alpha(0)}\underline{\beta(0)}\underline{\Gamma^{(R)}(\length)}\underline{\Gamma^{(I)}(\length)}\right]=\mathrm{E}\left[\underline{\alpha(0)}\underline{\Gamma^{(I)}(\length)}\underline{\Gamma^{(RI)}(\length)}\right]=0,\label{ch3.a.98}
\end{equation}
and
\begin{equation}
\mathrm{E}\left[\underline{\beta(0)}\underline{\Gamma^{(R)}(\length)}\underline{\Gamma^{(RI)}(\length)}\right]=\frac{5}{288}\epsilon^{4}\length^{4}\mathrm{E}\left[\underline{\beta(0)}^{2}\right]+\frac{7}{2880}\epsilon^{6}\length^{5}\mathrm{E}\underline{\beta(0)}.\label{ch3.3.111}
\end{equation}

Plug \eqref{ch3.a.85}, \eqref{ch3.a.93} and \eqref{ch3.a.95}--\eqref{ch3.3.111} into \eqref{ch3.a.86}, we have
\begin{multline}
\mathrm{Var}\left[\underline{N(\length)}\right]=\frac{32}{3}\epsilon^{2}\length^{3}\mathrm{E}\left[\underline{\alpha(0)}^{2}\underline{\beta(0)}\right]+\frac{32}{9}\epsilon^{2}\length^{3}\mathrm{E}\left[\underline{\beta(0)}^{3}\right]+\frac{16}{3}\epsilon^{4}\length^{4}\mathrm{E}\left[\underline{\alpha(0)}^{2}\right]+\frac{32}{9}\epsilon^{4}\length^{4}\mathrm{E}\left[\underline{\beta(0)}^{2}\right]\\
+\frac{46}{45}\epsilon^{6}\length^{5}\mathrm{E}\underline{\beta(0)}+\frac{23}{270}\epsilon^{8}\length^{6}-4\epsilon^{4}\length^{4}\left[\mathrm{E}\underline{\alpha(0)}\right]^{2}.\nonumber
\end{multline}
}
\end{proof}

\subsection{Comparison with the Gordon-Haus Effect}\label{Section3.3.3}

In \cite{1986GH}, the Gordon-Haus effect was studied, which characterises the arrival time jitter of a soliton, i.e. the fluctuation of the centre of  a soliton. The centre of a soliton, denoted by $T_{0}(z)$, can be characterised by its eigenvalue and the magnitude of the spectral amplitude
\begin{equation}\label{ch3.x3.2}
T_{0}(z)=\frac{1}{2\beta(z)}\ln\frac{|Q^{(d)}(\zeta(z),z)|}{2\beta(z)}.
\end{equation}
By making an approximation that $\beta(z)\approx\beta(0)$, it could be observed that the variance of the time jittering and that of the fluctuation in the magnitude of the spectral amplitude are different only by a scalar for a given $\beta(0)$. As is suggested by the Gordon-Haus effect that the variance of the time jittering is $O(\length^{3})$ for long haul transmission, where $\length$ is the propagation distance, the variance of the noise in the magnitude of the spectral amplitude should be of order $O(\length^{3})$ as well for long distance communication.

As shown in Theorem \ref{ch3.t3.8}, our analysis provides a more accurate variance as is noticed that higher order terms, $O(\length^{6})$, are also included at a first glance. The main reasons therein are that unlike the assumptions in the Gordon-Haus effect, our channel model in Theorem \ref{ch3.t3.9} and Theorem \ref{ch3.t3.7} does not assume a Gaussian approximation, and that the perturbation of the imaginary part of the eigenvalue is also taken into account. However, it is shown later that the third order terms $O(\epsilon^{2}\length^{3})$ are indeed the dominant ones in real application scenario.

According to the typical fibre parameters summarised in Table \ref{ch1.table1.1} and the normalisation \eqref{ch2.x1.1} used, we have $\epsilon^{2}\approx 1.339\times 10^{-9}$, and $\length$ is of order $O(10^{3})$ to $O(10^{4})$ in normalised units as parameters of interest, where $\length$ corresponds to $10^{3}$ to $10^{4}$ $\mathrm{km}$ based on the normalisation. So $\epsilon^{2}\length$ is of order $10^{-5}$ to $10^{-6}$ in the normalised unit, which is quite small. On the other hand, we observe from \eqref{ch3.3.50} that the higher order terms in the variance are all $O((\epsilon^{2}\length^{3})\cdot(\epsilon^{2}\length)^{n})$, which are negligible compared to those of order $O(\epsilon^{2}\length^{3})$, $n=1,2,3$. As a result, we conclude that the variance of the magnitude of the spectral amplitude is of an order $O(\epsilon^{2}\length^{3})$, which matches the Gordon-Haus effect at high input power regime.

\subsection{Phase Channel}

Similar to the magnitude of a spectral amplitude, the noise model for the phase of a spectral amplitude, in terms of the statistics of the perturbation of an eigenvalue, is also obtained. Let $\arg Q^{(d)}(\zeta(0),\length)$ be the scale input of the phase of a spectral amplitude.

We take the imaginary part of the equation \eqref{ch3.3.39}, after the simple but tedious verification of the mean-square integrability similar to what we do in the proof of Theorem \ref{ch3.t3.7}, we could obtain the noisy input-and-output relationship for the phase of a spectral amplitude as follows

\begin{multline}
\arg Q^{(d)}(\zeta(\length),\length)=\arg Q^{(d)}(\zeta(0),\length)-8\alpha(0)\int_{0}^{\length}\nr(z)\d z+8\beta(0)\int_{0}^{\length}\ni(z)\d z\\
-4\int_{0}^{\length}\left[\nr(z)^{2}-\ni(z)^{2}\right]\d z+2s\pi,\label{ch3.4.100}
\end{multline}
where $s$ is a random variable taking integer values, or equivalently,
\begin{multline}
\arg Q^{(d)}(\zeta(\length),\length)\equiv\arg Q^{(d)}(\zeta(0),\length)-8\alpha(0)\int_{0}^{\length}\nr(z)\d z+8\beta(0)\int_{0}^{\length}\ni(z)\d z\\
-4\int_{0}^{\length}\left[\nr(z)^{2}-\ni(z)^{2}\right]\d z\qquad(\mathrm{mod}\ 2\pi).\label{ch3.4.101}
\end{multline}
 
However, the non-Gaussian noise statistics of the phase of a spectral amplitude are not available. The reason is that the statistics of the wrapped distribution need the distribution of the corresponding unwrapped one, which is not available because the distributions of the stochastic processes $\nr(z)$ and $\ni(z)$ are unknown, $z\in[0,\length]$.

\section{Discussions}\label{Discussions}

So far, we have described the noise modelling of discrete spectral amplitudes of $N$-solitons. The main idea of the modelling methodology is to regard the optical fibre channel as many short segments concatenated together, such that a simplifying ``two-phase'' channel modelling approach could be applied to each segment. We are interested in the type of noise in the second phase, which is the accumulation of the eigenvalue perturbation during the noiseless spatial evolution of the spectral amplitude.



In this section, we show that the noise in the second phase is significant, or even the dominant part in some scenarios of interest for soliton inputs, i.e. long haul transmission in a high input power regime. This further motivates us to conjecture that our model captures a significant noise for $N$-solitons as well in these scenarios. The approach is that we start with the classic soliton perturbation theory, and derive another noise model of the spectral amplitude of a soliton \eqref{ch3.x3.4}. Then we compare it with the special case ($N=1$) of the model we obtained in Theorem \ref{ch3.t3.9}, and show that the noise in our model \eqref{ch3.3.46} is significant for the scenarios of interest. We note that only a special case of the model, i.e. soliton inputs, can be derived from perturbation theory, however, our proposed methodology leads to the model for $N$-solitons where $N\geq 1$.

We also notice a recent related work \cite{2016JLT/HK}. An algorithm for a more accurate calculation of discrete spectral amplitudes was proposed in \cite{2016JLT/HK}, and the models studied in this paper (reported earlier as a preliminary work in \cite{Zhan1506:ASpectral}) were used as examples of implementing this algorithm in the scenario of long distance communication. For general $N$-soliton inputs, it was shown \cite{2016JLT/HK} that the noises in the magnitude and phase of a discrete spectral amplitude are respectively strongly correlated with the noises in its corresponding eigenvalue. This gives us insights that our models \eqref{ch3.3.46n} and \eqref{ch3.4.102} characterise the main factors of the noise in the discrete spectral amplitudes of an $N$-soliton ($N\geq 2$). Furthermore, for soliton inputs, it was shown \cite{2016JLT/HK} that the noise statistics, mean and variance, meet our theoretical prediction obtained in Theorem \ref{ch3.t3.8} for typical fibre parameters summarised in Table \ref{ch1.table1.1}, which is much in line with our discussion in Section \ref{Section3.3.3} that the noise variance is of order $O(\length^{3})$ for long haul transmission.

\subsection{From Perturbation Theory}

When the input signals are solitons, the perturbation of the centre of a soliton, which is closely related to its spectral amplitude \eqref{ch3.x3.2}, is available \eqref{ch3.3.32}. 

\subsubsection{Channel Model}

Using \eqref{ch3.x3.2}, we have
\begin{eqnarray}
\ln|Q^{(d)}(\zeta(\length),\length)|-\ln|Q^{(d)}(\zeta(0),\length)|&=& 2\beta(\length)T_{0}(\length)+\ln\left[2\beta(\length)\right]-2\beta(0)T_{0}(0)-\ln\left[2\beta(0)\right]\nonumber\\
&=& 2\beta(\length)T_{0}(\length)-2\left[\beta(\length)-\ni(\length)\right]T_{0}(0)+\ln\left[\frac{\beta(\length)}{\beta(0)}\right]\nonumber\\
&=& 2\beta(\length)\left[T_{0}(\length)-T_{0}(0)\right]+2\ni(\length)T_{0}(0)\nonumber\\
&&+\ln\left[\frac{\beta(\length)}{\beta(0)}\right].\label{ch3.x3.3}
\end{eqnarray}

Integrating the perturbation theory \eqref{ch3.3.32} from $[0,\length]$, and insert it to \eqref{ch3.x3.3}, we have
\begin{multline}
\ln|Q^{(d)}(\zeta(\length),\length)|-\ln|Q^{(d)}(\zeta(0),\length)|=8\length\alpha(0)\ni(\length)+8\beta(0)\int_{0}^{\length}\nr(z)\d z\\
+8\ni(\length)\int_{0}^{\length}\nr(z)\d z+2\ni(\length)\int_{0}^{\length}\Delta(z)\d z\\
+2\beta(0)\int_{0}^{\length}\Delta(z)\d z+2\ni(\length)T_{0}(0)+\ln\left[\frac{\beta(\length)}{\beta(0)}\right],\label{ch3.x3.4}
\end{multline}
where $Q^{(d)}(\zeta(0),\length)$ is defined in \eqref{ch3.x3.1}, and
\begin{equation}
\Delta(z)\triangleq \epsilon\int_{-\infty}^{\infty}\re\left[G(t,z)e^{-j\varphi(t,z)}\right](t-T_{0}(z))\mathrm{sech}[2\beta(z)(t-T_{0}(z))]\d t.\label{ch3.x3.5}
\end{equation}
So \eqref{ch3.x3.4} describes the noise in the magnitude of the spectral amplitude of a soliton suggested by the perturbation theory. Denote
\begin{equation}\label{ch3.x3.6} 
 \begin{split} 
     N_{11}&\triangleq 2\beta(0)\int_{0}^{\length}\Delta(z)\d z\\ 
     N_{12}&\triangleq 2\ni(\length)T_{0}(0) \\
     N_{13}&\triangleq\ln\left[\frac{\beta(\length)}{\beta(0)}\right],
 \end{split} 
\end{equation} 
and
\begin{equation}\label{ch3.x3.7}
N_{2}\triangleq 2\ni(\length)\int_{0}^{\length}\Delta(z)\d z,
\end{equation}
and
\begin{equation}\label{ch3.x3.8}
N_{31}\triangleq 8\length\alpha(0)\ni(\length),\qquad N_{32}\triangleq 8\beta(0)\int_{0}^{\length}\nr(z)\d z
\end{equation}
and
\begin{equation}\label{ch3.x3.9}
N_{4}\triangleq 8\ni(\length)\int_{0}^{\length}\nr(z)\d z.
\end{equation}
We further denote
\begin{equation}\label{ch3.x3.10}
N_{1}\triangleq N_{11}+N_{12}+N_{13},
\end{equation}
and
\begin{equation}\label{ch3.x3.11}
N_{3}\triangleq N_{31}+N_{32},
\end{equation}
then equation \eqref{ch3.x3.4} could be denoted as
\begin{equation}\label{ch3.x3.12}
\ln|Q^{(d)}(\zeta(\length),\length)|-\ln|Q^{(d)}(\zeta(0),\length)|=N_{1}+N_{2}+N_{3}+N_{4}.
\end{equation}

\subsubsection{Noise Statistics}

The noise statistics are summarised as follows. 
\begin{equation}\label{ch3.x3.13} 
    \mathrm{E}N_{1}\approx 0\qquad 
    \mathrm{Var}N_{1}\approx 0.912\epsilon^{2}\length\mathrm{E}\left[\frac{1}{\beta(0)}\right]+2\epsilon^{2}\length\mathrm{E}\left[\beta(0)T_{0}(0)^{2}\right]+2\epsilon^{2}\length\mathrm{E}\left[T_{0}(0)\right],
\end{equation} 
\begin{equation}\label{ch3.x3.14}
\mathrm{E}N_{2}= 0,\qquad \mathrm{Var}N_{2}\approx 0.206\epsilon^{4}\length^{2}\mathrm{E}\left[\frac{1}{\beta(0)^{2}}\right],
\end{equation}
\begin{equation}\label{ch3.x3.15} 
    \mathrm{E}N_{3}\approx 0\qquad 
    \mathrm{Var}N_{3}\approx 32\epsilon^{2}\length^{3}\mathrm{E}\left[\alpha(0)^{2}\beta(0)\right]+\frac{32}{9}\epsilon^{2}\length^{3}\mathrm{E}\left[\beta(0)^{3}\right],
\end{equation} 
\begin{equation}\label{ch3.x3.16} 
    \mathrm{E}N_{4}= 0\qquad \mathrm{Var}N_{4}= \frac{16}{9}\epsilon^{4}\length^{4}\mathrm{E}\left[\beta(0)^{2}\right]+\frac{68}{45}\epsilon^{6}\length^{5}\mathrm{E}\left[\beta(0)\right]+\frac{16}{135}\epsilon^{8}\length^{6}. 
\end{equation} 
The covariances are
\begin{equation}\label{ch3.x3.17}
\mathrm{Cov}(N_{1},N_{3})\approx 8\epsilon^{2}\length^{2}\mathrm{E}\left[\alpha(0)\beta(0)T_{0}(0)\right]+4\epsilon^{2}\length^{2}\mathrm{E}\alpha(0),
\end{equation}
and
\begin{equation}\label{ch3.x3.18}
\mathrm{Cov}(N_{m},N_{n})\approx 0,\qquad 1\leq m<n\leq 4,\ \textrm{and}\ (m,n)\neq (1,3).
\end{equation}

The proofs of \eqref{ch3.x3.13}--\eqref{ch3.x3.18} are  essentially similar to those of Lemma \ref{ch3.l3.9} and Theorem \ref{ch3.t3.8}, however, they are more complicated and difficult. We omit the tedious derivation, and only describe the approximations we make.

We note that the statistics are derived based on two typical approximations, which are often used in existing works, such as \cite{1996HW}. The first one is that we approximate the terms $\beta(z)$ in the soliton perturbation theory, i.e. the right hand sides of \eqref{ch3.3.6}--\eqref{ch3.3.2s}, as $\beta(0)$. The second one is that we omit the advection term $\frac{1}{2}\epsilon^{2}$ in the \ito\ stochastic perturbation theory \eqref{ch3.3.2s}. 
The resulting statistics are still non-Gaussian because other soliton parameters, such as $T_{0}(z)$ and $\varphi(t,z)$, are not approximated, making the noise $G(t,z)$ on the right hand sides of \eqref{ch3.3.6}--\eqref{ch3.3.2s} still multiplicative rather than additive.



Similar to the discussion in Section \ref{Section3.3.3}, the terms $N_{2}$ and $N_{4}$ are negligible compared to $N_{1}$ and $N_{3}$, respectively, in a high input power regime for long haul transmission using the typical fibre parameters in Table \ref{ch1.table1.1}. In addition, the term $N_{3}$ is more significant than $N_{1}$ in the same scenario. It will be shown intuitively in Example \ref{ch3.xe4.1} at the end of this section.

\subsection{Comparison with Our Model}

In this subsection, we compare the model \eqref{ch3.3.46} derived from our modelling methodology with \eqref{ch3.x3.4} to discuss that the noise captured in our model \eqref{ch3.3.46} is significant in some scenarios of interest. The discussion is for soliton inputs, however, we expect it could also provide insights of modelling the noise in discrete spectral amplitudes of more general classes of signals, such as $N$-solitons as well.

We notice that our model \eqref{ch3.3.46} characterises the noise in a spectral amplitude for the long haul communication scenario, and describes the Gordon-Haus effect after comparing with the $N_{3}$ term in \eqref{ch3.x3.4}. In addition, according to Theorem \ref{ch3.t3.8}, the dominant noise terms in our model is
\begin{equation}\label{ch3.x3.20}
N_{0}\triangleq 8\alpha(0)\int_{0}^{\length}\ni(z)\d z+8\beta(0)\int_{0}^{\length}\nr(z)\d z
\end{equation}
using the discussion in Section \ref{Section3.3.3}. We find out that the noise term $N_{0}$ is a good approximation of $N_{3}$ when the input $\alpha(0)$ is very close to $0$. This is in general a good choice in applications. Specifically, since $\alpha(0)$ determines the velocity of each pulse, having a large range for the input alphabet of $\alpha(0)$ induces significant relative motions in a soliton train, which is not desirable for long haul transmission because the order of solitons in a train may change, resulting in the difficulties in decoding. Having the input alphabet of $\alpha(0)$ concentrated is an easy way to avoid this issue. In addition, it might be desirable to have $\alpha(0)$ concentrated at $0$ because it reduces the noise variances of both $N_{0}$ and $N_{3}$ compared to other choices of $\alpha(0)$.

In summary, the non-Gaussian soliton spectral amplitude noise model \eqref{ch3.3.46} is a good approximation of \eqref{ch3.x3.4} obtained from perturbation theory when $\alpha(0)$ is close to $0$, which also characterises a significant noise for long haul transmission. Long distance communication is an interesting and important application scenario, corresponding to transoceanic data transmission. 
In the following, we give an example to intuitively show that $N_{3}$, and hence $N_{0}$, is significant in this application scenario of interest.


We take actual soliton communication system into consideration, and use typical fibre parameters in Table \ref{ch1.table1.1} for analysis. In soliton communication systems, soliton pulses in a soliton train should not be too close to each other because of the soliton interactions. It was shown \cite{b43} that the soliton pulses in a train for transmission must be separated by at least several pulse widths in order to mitigate the interaction effects. In addition, a limitation on the propagation distance is required to maintain the interaction at a sufficiently low level \cite{1983Gordon}. 
As a result, a soliton separation of five or more pulse widths is often allocated to mitigate the interaction effect. The pulse width is interpreted as full width at half maximum (FWHM) pulse width.

In order to encode the real part of the eigenvalue and the discrete spectral amplitude (or pulse centre), additional soliton separation is needed. In particular, 
we allocate two additional FWHM pulse widths separation 
 on top of the ones allocated to mitigate the soliton interaction.
As a result, the pulse separation in the following example is set as seven FWHM pulse widths. 

\begin{example}\label{ch3.xe4.1}
The parameters used is summarised in Table \ref{ch1.table1.1} and Table \ref{ch3.xtable4.1}, where the input power is defined as the signal energy over the (time) separation of the neighboring two solitons, which is seven FWHM pulse widths. The symbol rate is approximately $1.435$ ${\mathrm{Giga}\ \mathrm{symbol}}/{\mathrm{s}}$ for an input power of $0.8$ $\mathrm{mW}$. For mutually independent uniformly distributed inputs $\alpha(0)$, $\beta(0)$, and $T_{0}(0)$ with $\mathrm{E}\alpha(0)=0$, we have $\mathrm{Var}N_{3}\geq\frac{32}{9}\epsilon^{2}\length^{3}\mathrm{E}\left[\beta(0)^{3}\right]$, and $\mathrm{Var}N_{0}\geq\frac{32}{9}\epsilon^{2}\length^{3}\mathrm{E}\left[\beta(0)^{3}\right]$, and the ratio of the variance of $N_{3}$ and that of $N_{1}$ is
\begin{equation}\label{ch3.x3.19}
r\triangleq\frac{\mathrm{Var}N_{3}}{\mathrm{Var}N_{1}}\approx 103.45,
\end{equation}
which also gives the approximation of the ratio of the variances of $N_{0}$ and $N_{1}$.

\begin{table*}[!t]
\renewcommand{\arraystretch}{1.3}
\caption{Parameters for A Soliton Communication Channel}
\label{ch3.xtable4.1}
\centering
\begin{tabular}{|c|c|c|}
\hline
Symbols & Values & Explanations\\
\hline
$\length$ & $7000$ $\mathrm{km}$ & Propagation distance \\
$P_{0}$ & $0.8$ $\mathrm{mW}$ & Input power \\
$b$ & $0.028$  & Corresponding imaginary part of the eigenvalue \\
$T_{0.5}$ & $1.763$ (normalised unit)& FWHM pulse width for soliton $\mathrm{sech}(t)$\\
$\beta(0)$ & $\beta(0)\in[0.9b, 1.1b]$ & Input alphabet of $\beta(0)$\\
$T_{0}(0)$ & $T_{0}(0)\in\left[-\frac{T_{0.5}}{4b},\frac{T_{0.5}}{4b}\right]$ & Input alphabet of $T_{0}(0)$\\
\hline
\end{tabular} 
\end{table*}
\end{example}

In summary, for a propagation distance of $7000$ $\mathrm{km}$, the variance of $N_{0}$ (and also $N_{3}$) is roughly $100$ times bigger than that of $N_{1}$ at an input power $0.8$ $\mathrm{mW}$. The value of $r$ increases as the input power increases. For example, $r\approx 1010.27$ if we increase the input power to $2.5$ $\mathrm{mW}$ with the transmission distance unchanged. Example \ref{ch3.xe4.1} gives some intuitive insights that the noise term $N_{0}$ is significant, or even dominant in long houl transmission in high input power regime.

\section{Conclusions}\label{Conclusions}

In this paper, we propose analytic models for the noisy evolution of the spectral amplitudes for $N$-solitons. Deriving such a model is the very first and also an important step towards developing a communication system that conveys information via spectral amplitudes, which has the potential to greatly increase the communication rate in fibre-optic communications. 

We propose a noise modelling approach, in which the optical fibre is treated as many segments concatenated together, and the noise along each segment is modelled by two phases. In the first phase, both the eigenvalue and the spectral amplitude are directly perturbed by the white Gaussian noise \eqref{ch3.1.1}. In the second phase, there is no noise in the fibre segment, however, the noise in the eigenvalue occurred in the first phase accumulates additional noise in the spectral amplitude during the second phase suggested by its spatial evolution. 
We focus on the noise in the second phase, and derive an analytical noise model for the discrete spectral amplitudes of an $N$-soliton. We find out that the second phase noise is significant, or even dominant, in some scenarios where the propagation distance is long for soliton inputs using the perturbation theory. We give an example intuitively showing this observation. Motivated by this observation, we also focus on this noise for $N$-soliton inputs, and conjecture that it could also be significant for long haus transmission in a high input power regime.

\appendices

\section{Mathematical Tools}\label{AppendixB}

The lemmas proved in this appendix are all mathematical preliminaries for the proof of the main theorems of this paper. We list them here separately because they can be independent themselves as mathematical results. 



\begin{lemma}\label{ch3.l3.16}
{Let $G(t,z)$ be a zero mean circularly symmetric \cite{b29} complex white Gaussian noise process. Denote
\begin{equation}\label{ch3.3.97}
{W(t,z)\triangleq G(t,z)e^{-j\varphi(t,z)},\qquad\forall\ t\in\mathbb{R}, z\in[0,\length],}
\end{equation}
where $\varphi(t,z)$ is a real deterministic function. Then $W(t,z)$ is also a circularly symmetric white Gaussian noise process. Furthermore, for all positive integers $l$ and all choices of epochs $(t_{k},z_{k})$, $k=1,2,\ldots,l$, the set of random variables $\{\re[W(t_{k},z_{k})],\im[W(t_{k},z_{k})]:\ k=1,2,\ldots,l\}$ are mutually independent.
}
\end{lemma}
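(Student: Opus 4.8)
The plan is to show that $W(t,z)$ inherits from $G(t,z)$ the three properties that characterise a zero-mean circularly symmetric complex white Gaussian noise at the level of second moments --- zero mean, delta-correlated covariance, and vanishing pseudocovariance --- and then to deduce the independence claim from the resulting diagonal covariance structure.

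First I would spell out the second-order description of $G$: it is a zero-mean complex Gaussian (generalized) process with $\mathrm{E}[G(t,z)G^{*}(t',z')]=\delta(t-t')\delta(z-z')$ as in \eqref{ch3.1.7}, and, since it is circularly symmetric, the pseudocovariance $\mathrm{E}[G(t,z)G(t',z')]$ vanishes identically. Because $W(t,z)=e^{-j\varphi(t,z)}G(t,z)$ is obtained from $G$ by multiplication with the deterministic unit-modulus function $e^{-j\varphi(t,z)}$, it is again a zero-mean complex Gaussian process. Computing its covariance,
\[
\mathrm{E}\!\left[W(t,z)W^{*}(t',z')\right]=e^{-j[\varphi(t,z)-\varphi(t',z')]}\,\mathrm{E}\!\left[G(t,z)G^{*}(t',z')\right]=e^{-j[\varphi(t,z)-\varphi(t',z')]}\delta(t-t')\delta(z-z'),
\]
and using that the product of delta functions forces $(t,z)=(t',z')$, where the exponential prefactor equals $1$, this collapses to $\delta(t-t')\delta(z-z')$. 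The pseudocovariance of $W$ is
\[
\mathrm{E}\!\left[W(t,z)W(t',z')\right]=e^{-j[\varphi(t,z)+\varphi(t',z')]}\,\mathrm{E}\!\left[G(t,z)G(t',z')\right]=0.
\]
Hence $W$ has the same mean, covariance and pseudocovariance as $G$, and being Gaussian it is itself a zero-mean circularly symmetric complex white Gaussian noise --- the first assertion.

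For the second assertion I would fix epochs $(t_{k},z_{k})$, $k=1,\ldots,l$, which may be taken pairwise distinct (coinciding epochs are merged), and write $W(t_{k},z_{k})=U_{k}+jV_{k}$. The vector $(U_{1},V_{1},\ldots,U_{l},V_{l})$ is jointly Gaussian (each entry is a linear functional of the Gaussian noise), so mutual independence is equivalent to diagonality of its covariance matrix. For $k\neq m$, the whiteness just proved gives $\mathrm{E}[W(t_{k},z_{k})W^{*}(t_{m},z_{m})]=0$ and $\mathrm{E}[W(t_{k},z_{k})W(t_{m},z_{m})]=0$; taking real and imaginary parts of these two identities shows that every covariance linking a component at epoch $k$ with a component at epoch $m$ vanishes. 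Within a single epoch, $\mathrm{E}[W(t_{k},z_{k})^{2}]=0$ reads $\mathrm{E}[U_{k}^{2}]-\mathrm{E}[V_{k}^{2}]+2j\,\mathrm{E}[U_{k}V_{k}]=0$, which forces $\mathrm{E}[U_{k}V_{k}]=0$ (and $\mathrm{E}[U_{k}^{2}]=\mathrm{E}[V_{k}^{2}]$). Thus the covariance matrix is diagonal and the $2l$ variables are mutually independent.

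The hard part is not the algebra but that white noise is a generalized process, so the point evaluations $W(t_{k},z_{k})$ and the products of delta functions above are only formal. I would make the argument precise in the standard way: replace point evaluations by integrals $\int \phi_{k}(t,z)\,W(t,z)\,\d t\,\d z$ against deterministic test functions, which are genuine jointly complex Gaussian random variables; if the $\phi_{k}$ are chosen with pairwise disjoint supports, the cross-epoch covariance and pseudocovariance vanish exactly, while the within-epoch independence of real and imaginary parts follows from the circular symmetry of $\int \phi_{k}\,W\,\d t\,\d z$ (circular symmetry is preserved under such linear functionals, by the same pseudocovariance computation as above). Letting the $\phi_{k}$ concentrate around $(t_{k},z_{k})$ then recovers the stated heuristic form, and the delta-function manipulation in the covariance of $W$ is understood in the same smoothed sense, where it is immediate since $e^{-j[\varphi(t,z)-\varphi(t',z')]}\to 1$ on the diagonal.
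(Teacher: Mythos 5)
Your proposal is correct and follows essentially the same route as the paper's proof: establish joint Gaussianity of the real/imaginary components under the deterministic phase rotation, verify that the covariance remains $\delta$-correlated and the pseudocovariance remains zero, and then deduce mutual independence from the diagonal covariance structure of jointly Gaussian variables. Your additional remark on interpreting the point evaluations of the white-noise process via test functions is a welcome refinement of a step the paper treats formally, but it does not change the substance of the argument.
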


\begin{proof}[Proof of Lemma \ref{ch3.l3.16}]
{For any positive integer $l$ and all choice of epochs $(t_{k},z_{k})$, $k=1,2,\ldots,l$, we first prove that $(W(t_{k},z_{k}),\ k=1,2,\ldots,l)$ are jointly Gaussian. To show this, denote
\begin{equation}
\mathbf{y}\triangleq (\re[W(t_{1},z_{1})],\im[W(t_{1},z_{1})],\ldots,\re[W(t_{l},z_{l})],\im[W(t_{l},z_{l})])^{\mathrm{T}},\label{ch3.a.118}
\end{equation}
and
\begin{equation}
\mathbf{x}\triangleq (\re[G(t_{1},z_{1})],\im[G(t_{1},z_{1})],\ldots,\re[G(t_{l},z_{l})],\im[G(t_{l},z_{l})])^{\mathrm{T}}.\label{ch3.a.119}
\end{equation}
Then we have
\begin{equation}\label{ch3.a.120}
{\mathbf{y}=A\mathbf{x},
}
\end{equation}
where $\mathbf{0}$ is a $2\times 2$ zero matrix, and
\begin{equation}\label{ch3.a.127}
A=\left(
\begin{array}{cccc}
 A_{1} & \mathbf{0} & \ldots & \mathbf{0} \\
 \mathbf{0} & A_{2} & \ldots & \mathbf{0} \\
 \ldots & \ldots & \ldots & \ldots \\
 \mathbf{0} & \mathbf{0} & \ldots & A_{l}
\end{array}
\right)_{l\times l},
\end{equation}
and
\begin{equation}\label{ch3.a.121}
{A_{k}\triangleq\left(
\begin{array}{cc}
\cos\varphi(t_{k},z_{k}) & \sin\varphi(t_{k},z_{k})  \\
 -\sin\varphi(t_{k},z_{k}) & \cos\varphi(t_{k},z_{k}) 
\end{array}
\right)
}
\end{equation}
for $k=1,2,\ldots,l$. Since $G(t,z)$ is a zero mean complex white Gaussian process, $\mathbf{x}$ is a zero mean real Gaussian $2l$-random vector. So \eqref{ch3.a.120} gives us that $\mathbf{y}$ is a real Gaussian $2l$-random vector. So $\mathbf{v}\triangleq (W(t_{k},z_{k}),\ k=1,2,\ldots,l)^{\mathrm{T}}$ is a zero mean complex Gaussian random vector.

To show that $\mathbf{v}$ is circularly symmetric, denote $\mathbf{u}\triangleq (G(t_{k},z_{k}),\ k=1,2,\ldots,l)^{\mathrm{T}}$, and
\begin{equation}\label{ch3.a.129}
D=\left(
\begin{array}{cccc}
 e^{-j\varphi(t_{1},z_{1})} & 0 & \ldots & 0 \\
 0 &  e^{-j\varphi(t_{2},z_{2})} & \ldots & 0 \\
 \ldots & \ldots & \ldots & \ldots \\
 0 & 0 & \ldots & e^{-j\varphi(t_{l},z_{l}})
\end{array}
\right)_{l\times l},
\end{equation}
then we have  $\mathbf{v}=D\mathbf{u}$. Consider the pseudo-covariance matrix \cite{b29} of $\mathbf{v}$, which is
\begin{eqnarray}
\mathrm{E}\left[\mathbf{v}\mathbf{v}^{\textbf{T}}\right]&=&\mathrm{E}\left[D\mathbf{u}\mathbf{u}^{\textbf{T}}D^{\textbf{T}}\right]\nonumber\\
&=&D\mathrm{E}\left[\mathbf{u}\mathbf{u}^{\textbf{T}}\right]D^{\textbf{T}}\nonumber\\
&=& 0,\label{ch3.a.128}
\end{eqnarray}
where we have \eqref{ch3.a.128} because $G(t,z)$ is circularly symmetric which guarantees that $\mathrm{E}\left[\mathbf{u}\mathbf{u}^{\textbf{T}}\right]=0$. Consequently, $W(t,z)$ is a zero mean circularly symmetric complex Gaussian stochastic process.

To show that $W(t,z)$ is also white, consider
\begin{eqnarray}
\mathrm{E}\left[W(t_{1},z_{1})W^{*}(t_{2},z_{2})\right]&=&e^{-j[\varphi(t_{1},z_{1})-\varphi(t_{2},z_{2})]}\mathrm{E}\left[G(t_{1},z_{1})G^{*}(t_{2},z_{2})\right]\nonumber\\
&=&\delta(t_{1}-t_{2})\delta(z_{1}-z_{2}),\label{ch3.a.122}
\end{eqnarray}
where we use the superscript ``$*$'' to denote the conjugate of a complex number. Equation \eqref{ch3.a.122} holds because $G(t,z)$ is a white stochastic process.

Since the random vector $\{W(t_{k},z_{k}):\ k=1,2,\ldots,l\}$ is a zero mean circularly symmetric complex Gaussian random vector for any positive integers $l$ and any choices of epochs $(t_{k},z_{k})$, $k=1,2,\ldots,l$. By noticing its pseudo-covariance matrix \eqref{ch3.a.128} and equation \eqref{ch3.a.122}, it could be easily shown that any two components in $\mathbf{y}$ are pairwise independent. Hence, the components in $\textbf{y}$ are mutually independent because they are jointly Gaussian distributed.
}
\end{proof}

\begin{lemma}[\cite{b18}]\label{ch3.l3.12}
{Let $X_{1},X_{2},X_{3},X_{4}$ be jointly Gaussian random variables all with zero mean. Denote $\sigma_{st}\triangleq\mathrm{E}(X_{s}X_{t})$, where $s,t\in\{1,2,3,4\}$. Then we have
\begin{equation}\label{ch3.3.98}
{\mathrm{E}[X_{1}X_{2}X_{3}]=0,}
\end{equation}
and
\begin{equation}\label{ch3.3.76}
{\mathrm{E}[X_{1}X_{2}X_{3}X_{4}]=\sigma_{12}\sigma_{34}+\sigma_{13}\sigma_{24}+\sigma_{14}\sigma_{23}.}
\end{equation}
}
\end{lemma}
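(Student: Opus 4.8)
The plan is to derive both identities from the moment generating function of the jointly Gaussian vector $\mathbf{X}\triangleq(X_{1},X_{2},X_{3},X_{4})^{\mathrm{T}}$. Since $\mathbf{X}$ has zero mean and covariance matrix $\Sigma=(\sigma_{st})_{s,t=1}^{4}$, its moment generating function is $M(\mathbf{s})=\exp\big(\tfrac{1}{2}\mathbf{s}^{\mathrm{T}}\Sigma\,\mathbf{s}\big)$, which is finite for every $\mathbf{s}\in\mathbb{R}^{4}$; hence all mixed moments exist and $\mathrm{E}[X_{1}\cdots X_{k}]$ equals the partial derivative $\partial^{k}M/(\partial s_{1}\cdots\partial s_{k})$ evaluated at $\mathbf{s}=\mathbf{0}$, for $k=3,4$.

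First I would record the single structural fact that drives everything: writing $L_{k}(\mathbf{s})\triangleq\sum_{q}\sigma_{kq}s_{q}$, each $L_{k}$ is linear and vanishes at the origin, $\partial M/\partial s_{k}=L_{k}M$, and $\partial L_{j}/\partial s_{k}=\sigma_{jk}$. Differentiating $M$ repeatedly and expanding by the product rule therefore produces only terms of the shape (constant)$\times$(product of some of the $L_{i}$'s)$\times M$, and at $\mathbf{s}=\mathbf{0}$ such a term survives precisely when it carries no $L_{i}$ factor, because $M(\mathbf{0})=1$ and $L_{i}(\mathbf{0})=0$. Carrying this through three derivatives, every term still contains at least one $L_{i}$, which gives \eqref{ch3.3.98}; carrying it through four derivatives, the only surviving, $L$-free terms are the three pairings $\sigma_{12}\sigma_{34}$, $\sigma_{13}\sigma_{24}$, $\sigma_{14}\sigma_{23}$, which is \eqref{ch3.3.76}.

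I do not expect a genuine obstacle here: the work is purely the bookkeeping of which terms are $L$-free after the fourth differentiation, and the one subtlety worth a remark is that no nondegeneracy hypothesis on $\Sigma$ is needed, since the formula $M(\mathbf{s})=\exp\big(\tfrac{1}{2}\mathbf{s}^{\mathrm{T}}\Sigma\,\mathbf{s}\big)$ holds for singular covariances too. If one prefers to avoid derivatives entirely, the same conclusions follow by polarization: for arbitrary $a_{1},\dots,a_{4}\in\mathbb{R}$ the scalar $Y=\sum_{i}a_{i}X_{i}$ is a zero-mean Gaussian with variance $v=\sum_{s,t}a_{s}a_{t}\sigma_{st}$, so $\mathrm{E}[Y^{3}]=0$ and $\mathrm{E}[Y^{4}]=3v^{2}$; expanding both sides as polynomials in $(a_{1},\dots,a_{4})$ and comparing the coefficients of $a_{1}a_{2}a_{3}$ and of $a_{1}a_{2}a_{3}a_{4}$ yields \eqref{ch3.3.98} and \eqref{ch3.3.76} after dividing out the multinomial factors $3!$ and $4!$.
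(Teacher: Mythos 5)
Your main argument is correct and is essentially the paper's own proof: the paper differentiates the joint characteristic function $f(\mathbf{t})=\exp(-\tfrac{1}{2}\mathbf{t}^{\mathrm{T}}\Lambda\mathbf{t})$ and observes that each derivative produces a linear factor $\sigma_{m}^{\mathrm{T}}\mathbf{t}$ vanishing at the origin, which is exactly your bookkeeping with $L_{k}$ applied to the moment generating function instead. The polarization argument you add is a valid alternative but not needed; both identities follow as you describe.
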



\begin{proof}[Proof of Lemma \ref{ch3.l3.12}]
{Equation \eqref{ch3.3.76} is actually an exercise in the classical digital communication textbook \cite{b18}. Equation \eqref{ch3.3.98} can be proved using the same method of proving \eqref{ch3.3.76}. Here we concisely highlight the idea of the proof.

Denote the jointly characteristic function of the $k$-dimensional zero mean Gaussian random vector $(X_{1},\ldots,X_{k})$ by $f(t_{1},\ldots,t_{k})$, where $k=3,4$. We have
\begin{equation}\label{ch3.a.123}
{f(\mathbf{t})=\exp\left(-\frac{1}{2}\mathbf{t}^{\mathrm{T}}\Lambda\mathbf{t}\right),}
\end{equation}
where $\mathbf{t}\triangleq(t_{1},\ldots,t_{k})^{\mathrm{T}}$, and $\Lambda\triangleq(\sigma_{mn})_{k\times k}$ is a $k\times k$ matrix, and $\sigma_{mn}\triangleq\mathrm{E}(X_{m}X_{n})$, for $m,n=1,2,\ldots,k$. In addition, we have
\begin{equation}\label{ch3.a.124}
{\mathrm{E}\left(\prod_{s=1}^{k}X_{s}\right)=j^{k}\frac{\partial^{k}f(\mathbf{t})}{\partial t_{1}\cdots\partial t_{k}}\bigg|_{t_{1}=\cdots=t_{k}=0}.}
\end{equation}
This lemma can be immediately proved by noticing the following two equations
\begin{equation}\label{ch3.a.125}
{\frac{\partial f(\mathbf{t})}{\partial t_{m}}=-\mathbf{\sigma}_{m}^{\mathrm{T}}\mathbf{t}\exp\left(-\frac{1}{2}\mathbf{t}^{\mathrm{T}}\Lambda\mathbf{t}\right),}
\end{equation}
and
\begin{equation}\label{ch3.a.126}
{\frac{\partial\mathbf{\sigma}_{m}^{\mathrm{T}}\mathbf{t} }{\partial t_{n}}=\sigma_{nm}=\sigma_{mn},}
\end{equation}
where $\mathbf{\sigma}_{m}=(\sigma_{i1},\ldots,\sigma_{ik})^{\mathrm{T}}$, and $m,n=1,2,\ldots,k$.
}
\end{proof}

\begin{lemma}\label{ch3.l3.10}
{Let $\{X_{n}\}$ and $\{Y_{n}\}$ be two sequences of random variables. Let $c$ be a real constant. If $\{X_{n}\}$ and $\{Y_{n}\}$ mean-square converge to random variables $X$ and $Y$ respectively, as $n\rightarrow\infty$, then we have
\begin{enumerate}
{
\item $X_{n}+Y_{n}$ mean-square converges to $X+Y$, as $n\rightarrow\infty$;
\item $cX_{n}$ mean-square converges to $cX$, as $n\rightarrow\infty$.

}
\end{enumerate}
}
\end{lemma}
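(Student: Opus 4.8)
The plan is to reduce both claims to linearity of expectation together with the elementary pointwise inequality $(a+b)^2\le 2a^2+2b^2$. Recall that, by definition, $\{X_n\}$ mean-square converging to $X$ means $\mathrm{E}[(X_n-X)^2]\to 0$ as $n\to\infty$ (so in particular each $X_n$ and the limit $X$ are square-integrable), and likewise for $\{Y_n\}$ and $Y$.

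\textbf{First assertion.} For each $n$ I would write $(X_n+Y_n)-(X+Y)=(X_n-X)+(Y_n-Y)$, square both sides, and apply the bound above with $a=X_n-X$ and $b=Y_n-Y$ to obtain the pointwise estimate $\big((X_n+Y_n)-(X+Y)\big)^2\le 2(X_n-X)^2+2(Y_n-Y)^2$. Taking expectations and using linearity and monotonicity of $\mathrm{E}[\cdot]$ then gives
\[
\mathrm{E}\!\left[\big((X_n+Y_n)-(X+Y)\big)^2\right]\le 2\,\mathrm{E}[(X_n-X)^2]+2\,\mathrm{E}[(Y_n-Y)^2].
\]
Both terms on the right converge to $0$ by hypothesis, hence so does the left-hand side, which is exactly the statement that $X_n+Y_n$ mean-square converges to $X+Y$. (One could equivalently invoke Minkowski's inequality, i.e. the triangle inequality for the $L^2$-norm, $\|X_n+Y_n-X-Y\|_2\le\|X_n-X\|_2+\|Y_n-Y\|_2$, but the crude bound keeps the argument self-contained.)

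\textbf{Second assertion.} Here I would simply compute $\mathrm{E}[(cX_n-cX)^2]=c^2\,\mathrm{E}[(X_n-X)^2]$, which tends to $c^2\cdot 0=0$ since $c$ is a fixed real constant; thus $cX_n$ mean-square converges to $cX$.

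No genuine obstacle arises in this argument; the only point meriting a remark is that the statement tacitly requires all the random variables involved to be square-integrable so that the expectations written down are finite. This is already part of the definition of mean-square convergence, and in the uses made of the lemma in the proof of Theorem~\ref{ch3.t3.9} (applied to partial sums such as $\sum_{k}\nri(z_k)|I_k|$ and $\alpha_i(0)\sum_k \nii(z_k)|I_k|$ and to their sums) finiteness is guaranteed by the hypothesized mean-square integrability of $\nri(z)$, $\nii(z)$ and $\nri(z)\nii(z)$ over $[0,\length]$.
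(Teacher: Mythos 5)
Your proof is correct and follows essentially the same route as the paper: both arguments reduce the first claim to controlling the cross term in $\mathrm{E}\left[\left((X_n-X)+(Y_n-Y)\right)^2\right]$ (the paper expands the square and applies the Cauchy--Schwarz inequality, which amounts to the Minkowski bound you mention, while you use the slightly cruder pointwise bound $(a+b)^2\le 2a^2+2b^2$), and the second claim is handled identically by pulling out $c^2$. Your closing remark on square-integrability is a sensible observation but does not change the substance.
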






\begin{proof}[Proof of Lemma \ref{ch3.l3.10}]
{Since $X_{n}$ mean-square converges to $X$, we have
\begin{equation}\label{ch3.a.70}
{\mathrm{E}|X_{n}-X|^{2}\rightarrow 0,\qquad n\rightarrow\infty.}
\end{equation}
1) We have
\begin{equation}
\mathrm{E}|(X_{n}+Y_{n})-(X+Y)|^{2}=\mathrm{E}(X_{n}-X)^{2}+\mathrm{E}(Y_{n}-Y)^{2}+2\mathrm{E}[(X_{n}-X)(Y_{n}-Y)].\label{ch3.a.71}
\end{equation}
According to the Cauchy's Inequality,
\begin{equation}\label{ch3.a.79}
{\mathrm{E}[(X_{n}-X)(Y_{n}-Y)]\leq\left[\mathrm{E}(X_{n}-X)^{2}\right]^{\frac{1}{2}}\left[\mathrm{E}(Y_{n}-Y)^{2}\right]^{\frac{1}{2}}.}
\end{equation}
Plug \eqref{ch3.a.79} into \eqref{ch3.a.71}, we have
\begin{multline}
0\leq\mathrm{E}|(X_{n}+Y_{n})-(X+Y)|^{2}\leq
\mathrm{E}(X_{n}-X)^{2}+\mathrm{E}(Y_{n}-Y)^{2}\\
+2\left[\mathrm{E}(X_{n}-X)^{2}\right]^{\frac{1}{2}}\left[\mathrm{E}(Y_{n}-Y)^{2}\right]^{\frac{1}{2}}\rightarrow 0\label{ch3.a.80}
\end{multline}
as $n$ goes to infinity, because $X_{n}$ and $Y_{n}$ mean-square converge to $X$ and $Y$, respectively. So $X_{n}+Y_{n}$ mean-square converges to $X+Y$.

2) Since $c\in\mathrm{R}$, we have
\begin{equation}
{0\leq\mathrm{E}|cX_{n}-cX|^{2}=c^{2}\mathrm{E}(X_{n}-X)^{2}\rightarrow 0,\qquad n\rightarrow\infty,}
\end{equation}
because $X_{n}$ mean-square converges to $X$. So $cX_{n}$ mean-square converges to $cX$.
}
\end{proof}

\section{Proof of Lemma \ref{ch3.l3.9}}\label{AppendixB2}

\begin{proof}
{
For notation simplicity, we adopt the notations \eqref{appen.b.4.1}--\eqref{appen.b.4.3}. Denote
\begin{equation}\label{appen.b.4.5}
\textbf{Q}_{k}\triangleq\left(\beta(z_{k}),T_{0}(z_{k}),\varphi(t_{k},z_{k})\right),\qquad k=1,2,3,4.
\end{equation}

1) Since $s\leq t$, we have 
\begin{eqnarray}
&&\mathrm{E}\left\{\nr(s)^{2}\nu_{I}(s)\left[\nu_{I}(t)-\nu_{I}(s)\right]\right\}\nonumber\\
&=&\epsilon^{4}\mathrm{E}\iiint_{[0,s)^{3}}\int_{[s,t)}\iiiint_{\mathbb{R}^{4}}\prod_{k=1}^{2}\left\{\im\left[W(t_{k},z_{k})\right]r(t_{k},z_{k})\right\}\prod_{k=3}^{4}\left\{\re\left[W(t_{k},z_{k})\right]s(t_{k},z_{k})\right\}\d t_{4}\d t_{3}\d t_{2}\d t_{1}\d z_{4}\d z_{3}\d z_{2}\d z_{1}\nonumber\\
&=&\epsilon^{4}\iiint_{[0,s)^{3}}\int_{[s,t)}\iiiint_{\mathbb{R}^{4}}\mathrm{E}\left\{\prod_{k=1}^{2}\left\{\im\left[W(t_{k},z_{k})\right]r(t_{k},z_{k})\right\}\prod_{k=3}^{4}\left\{\re\left[W(t_{k},z_{k})\right]s(t_{k},z_{k})\right\}\right\}\d t_{4}\d t_{3}\d t_{2}\d t_{1}\d z_{4}\d z_{3}\d z_{2}\d z_{1}\nonumber\\
&=&\epsilon^{4}\iiint_{[0,s)^{3}}\int_{[s,t)}\iiiint_{\mathbb{R}^{4}}\mathrm{E}\left\{\left\{\re\left[\underline{W(t_{3},z_{3})}\right]\prod_{k=1}^{2}\left\{\im\left[\underline{W(t_{k},z_{k})}\right]r(t_{k},z_{k})\right\}\prod_{k=3}^{4}s(t_{k},z_{k})\Bigg|\underline{\textbf{Q}_{k}}=\textbf{Q}_{k},k=1,2,3,4.\right\}\right\}\cdot\nonumber\\
&&\mathrm{E}\left\{\re\left[\underline{G(t_{4},z_{4})}e^{-j\varphi(t_{4},z_{4})}\right]\right\}\d t_{4}\d t_{3}\d t_{2}\d t_{1}\d z_{4}\d z_{3}\d z_{2}\d z_{1}\label{ch3.3.61}\\
&=& 0,\label{ch3.3.62}
\end{eqnarray}
where we have \eqref{ch3.3.61} because of Lemma \ref{ch3.l3.16} and the definition of \ito\ calculus. Equation \eqref{ch3.3.62} holds because of Lemma \ref{ch3.l3.16}. So \eqref{ch3.3.57} is proved.

The proof of \eqref{ch3.3.58} is similar to that of \eqref{ch3.3.57}, and hence is omitted.

To prove \eqref{ch3.3.59}, we consider 
\begin{eqnarray}
&&\mathrm{E}\left\{\nr(s)\nu_{I}(s)\left[\nr(t)-\nr(s)\right]\left[\nu_{I}(t)-\nu_{I}(s)\right]\right\}\nonumber\\
&=&\epsilon^{4}\mathrm{E}\iint_{[0,s)^{2}}\iint_{[s,t)^{2}}\iiiint_{\mathbb{R}^{4}}\prod_{k\in\{1,3\}}\left\{\im\left[W(t_{k},z_{k})\right]r(t_{k},z_{k})\right\}\prod_{k\in\{2,4\}}\left\{\re\left[W(t_{k},z_{k})\right]s(t_{k},z_{k})\right\}\nonumber\\
&&\d t_{4}\d t_{3}\d t_{2}\d t_{1}\d z_{4}\d z_{3}\d z_{2}\d z_{1}\nonumber\\
&=&\epsilon^{4}\iint_{[0,s)^{2}}\iint_{[s,t)^{2}}\iiiint_{\mathbb{R}^{4}}\mathrm{E}\Bigg\{\prod_{k\in\{1,3\}}\left\{\im\left[W(t_{k},z_{k})\right]r(t_{k},z_{k})\right\}\prod_{k\in\{2,4\}}\left\{\re\left[W(t_{k},z_{k})\right]s(t_{k},z_{k})\right\}\Bigg\}\nonumber\\
&&\d t_{4}\d t_{3}\d t_{2}\d t_{1}\d z_{4}\d z_{3}\d z_{2}\d z_{1},\label{ch3.3.64}
\end{eqnarray}
where we denote the integrand in \eqref{ch3.3.64} by $h(t_{1},t_{2},t_{3},t_{4},z_{1},z_{2},z_{3},z_{4})$, where $t_{k}\in(-\infty,\infty)$, $k\in\{1,2,3,4\}$, and $z_{m}\in[0,s)$, $m\in\{1,2\}$, and $z_{n}\in[s,t)$, $n\in\{3,4\}$. We divide the region of the integration (denoted by $S$) in \eqref{ch3.3.64} by two disjoint parts $S_{1}$ and $S_{2}$, where $S_{1}\triangleq[0,s)\times[0,s)\times\{z_{3}\neq z_{4}\in[s,t)\}\times\mathbb{R}^{4}$, and $S_{2}\triangleq[0,s)\times[0,s)\times\{z_{3}=z_{4}\in[s,t)\}\times\mathbb{R}^{4}$. So we have $S=S_{1}\cup S_{2}$, and $S_{1}\cap S_{2}=\emptyset$. Then in $S_{1}$, we assume $z_{3}<z_{4}$ without loss of generality. So the integrand in $S_{1}$ becomes 
\begin{eqnarray}
&&h(t_{1},t_{2},t_{3},t_{4},z_{1},z_{2},z_{3},z_{4})\nonumber\\
&=&\mathrm{E}\Bigg\{\mathrm{E}\Bigg\{\re\left[\underline{W(t_{2},z_{2})}\right]\prod_{k\in\{1,3\}}\left\{\im\left[\underline{W(t_{k},z_{k})}\right]r(t_{k},z_{k})\right\}\prod_{k\in\{2,4\}}s(t_{k},z_{k})\Bigg|\underline{\textbf{Q}_{k}}=\textbf{Q}_{k}, k=1,2,3,4.\Bigg\}\Bigg\}\cdot\nonumber\\
&&\mathrm{E}\left\{\re\left[\underline{G(t_{4},z_{4})}e^{-j\varphi(t_{4},z_{4})}\right]\right\}\label{ch3.3.65}\\
&=& 0,\label{ch3.3.66}
\end{eqnarray}
where \eqref{ch3.3.65} is obtained because of Lemma \ref{ch3.l3.16} and the definition of It\^{o} stochastic calculus. Equation \eqref{ch3.3.66} holds because of Lemma \ref{ch3.l3.16}. The integrand in $S_{2}$ is 
\begin{eqnarray}
&&h(t_{1},t_{2},t_{3},t_{4},z_{1},z_{2},z_{3},z_{4})\nonumber\\
&=&\mathrm{E}\Bigg\{\mathrm{E}\Bigg\{\im\left[\underline{W(t_{1},z_{1})}\right]\re\left[\underline{W(t_{2},z_{2})}\right]\prod_{k\in\{1,3\}}r(t_{k},z_{k})\prod_{k\in\{2,4\}}r(t_{k},z_{k})\Bigg|\underline{\textbf{Q}_{k}}=\textbf{Q}_{k}, k=1,2,3,4.\Bigg\}\Bigg\}\cdot\nonumber\\
&&\mathrm{E}\left\{\im\left[\underline{G(t_{3},z_{3})}e^{-j\varphi(t_{3},z_{3})}\right]\re\left[\underline{G(t_{4},z_{4})}e^{-j\varphi(t_{4},z_{4})}\right]\right\}\label{ch3.3.67}\\
&=& 0,\label{ch3.3.68}
\end{eqnarray}
where \eqref{ch3.3.67} is obtained because of Lemma \ref{ch3.l3.16} and the definition of It\^{o} stochastic calculus. Equation \eqref{ch3.3.68} holds because of Lemma \ref{ch3.l3.16}. So according to \eqref{ch3.3.66}--\eqref{ch3.3.68}, we have $h(t_{1},t_{2},t_{3},t_{4},z_{1},z_{2},z_{3},z_{4})=0$ in the whole integration region. So we have \eqref{ch3.3.59}.

2) Since $s\leq t$, we have 
\begin{eqnarray}
&&\mathrm{E}\left[\nr(s)\nu_{I}(s)\nr(t)\nu_{I}(t)\right]\nonumber\\
&=&\mathrm{E}\left\{\nr(s)\nu_{I}(s)\left[\nr(t)-\nr(s)+\nr(s)\right]\left[\nu_{I}(t)-\nu_{I}(s)+\nu_{I}(s)\right]\right\}\nonumber\\
&=& \mathrm{E}\left\{\nr(s)^{2}\nu_{I}(s)\left[\nu_{I}(t)-\nu_{I}(s)\right]\right\}+\mathrm{E}\left\{\nr(s)\nu_{I}(s)^{2}\left[\nr(t)-\nr(s)\right]\right\}+\nonumber\\
&&\mathrm{E}\left\{\nr(s)\nu_{I}(s)\left[\nr(t)-\nr(s)\right]\left[\nu_{I}(t)-\nu_{I}(s)\right]\right\}+\mathrm{E}\left[\nr(s)^{2}\nu_{I}(s)^{2}\right]\nonumber\\
&=&\mathrm{E}\left[\nr(s)^{2}\nu_{I}(s)^{2}\right],\label{ch3.3.69}
\end{eqnarray}
which is obtained by the first part of this lemma. Consider $\mathrm{E}\left[\nr(s)^{2}\nu_{I}(s)^{2}\right]$ 
\begin{eqnarray}
&&\mathrm{E}\left[\nr(s)^{2}\nu_{I}(s)^{2}\right]\nonumber\\
&=&\epsilon^{4}\mathrm{E}\iiiint_{[0,s)^{4}}\iiiint_{\mathbb{R}^{4}}\prod_{k=1}^{2}\left\{\im\left[W(t_{k},z_{k})\right]r(t_{k},z_{k})\right\}\prod_{k=3}^{4}\left\{\re\left[W(t_{k},z_{k})\right]s(t_{k},z_{k})\right\}\nonumber\\
&&\d t_{4}\d t_{3}\d t_{2}\d t_{1}\d z_{4}\d z_{3}\d z_{2}\d z_{1}\nonumber\\
&=&\epsilon^{4}\iiiint_{[0,s)^{4}}\iiiint_{\mathbb{R}^{4}}\mathrm{E}\Bigg\{\prod_{k=1}^{2}\left\{\im\left[W(t_{k},z_{k})\right]r(t_{k},z_{k})\right\}\prod_{k=3}^{4}\left\{\re\left[W(t_{k},z_{k})\right]s(t_{k},z_{k})\right\}\Bigg\}\nonumber\\
&&\d t_{4}\d t_{3}\d t_{2}\d t_{1}\d z_{4}\d z_{3}\d z_{2}\d z_{1}.\label{ch3.3.70}
\end{eqnarray}
Denote the integrand in \eqref{ch3.3.70} by $h(t_{1},t_{2},t_{3},t_{4},z_{1},z_{2},z_{3},z_{4})$. Denote
\begin{equation}
f(z_{1},z_{2},z_{3},z_{4})\triangleq\int_{-\infty}^{\infty}\int_{-\infty}^{\infty}\int_{-\infty}^{\infty}\int_{-\infty}^{\infty}h(t_{1},t_{2},t_{3},t_{4},z_{1},z_{2},z_{3},z_{4})\d t_{4}\d t_{3}\d t_{2}\d t_{1}.\label{ch3.3.71}
\end{equation} 
We choose a permutation of $(1,2,3,4)$, and denote it by $(k_{1},k_{2},k_{3},k_{4})$. Without loss of generality, we assume $z_{k_{1}}\leq z_{k_{2}}\leq z_{k_{3}}\leq z_{k_{4}}$. Then we divide the region 
\begin{equation}
S'\triangleq\{(t_{k_{1}},t_{k_{2}},t_{k_{3}},t_{k_{4}},z_{k_{1}},z_{k_{2}},z_{k_{3}},z_{k_{4}})\in\mathbb{R}^{4}\times[0,s)^{4}:\ z_{k_{1}}\leq z_{k_{2}}\leq z_{k_{3}}\leq z_{k_{4}}\}\nonumber
\end{equation}
by two disjoint parts denoted respectively by $S_{3}$ and $S_{4}$, where
\begin{equation}
S_{3}\triangleq\{(t_{k_{1}},t_{k_{2}},t_{k_{3}},t_{k_{4}},z_{k_{1}},z_{k_{2}},z_{k_{3}},z_{k_{4}})\in\mathbb{R}^{4}\times[0,s)^{4}:\ z_{k_{1}}\leq z_{k_{2}}=z_{k_{3}}\leq z_{k_{4}}\},\nonumber
\end{equation}
and
\begin{equation}
S_{4}\triangleq\{(t_{k_{1}},t_{k_{2}},t_{k_{3}},t_{k_{4}},z_{k_{1}},z_{k_{2}},z_{k_{3}},z_{k_{4}})\in\mathbb{R}^{4}\times[0,s)^{4}:\ z_{k_{1}}\leq z_{k_{2}}<z_{k_{3}}\leq z_{k_{4}}\}.\nonumber\\
\end{equation}
So we have $S'=S_{3}\cup S_{4}$. Then we consider the integrand $$h(t_{k_{1}},t_{k_{2}},t_{k_{3}},t_{k_{4}},z_{k_{1}},z_{k_{2}},z_{k_{3}},z_{k_{4}})$$
in $S_{3}$ and $S_{4}$.

In the region $S_{3}$, we have 
\begin{itemize}
{
\item When $z_{k_{3}}<z_{k_{4}}$, similar to the discussion in obtaining \eqref{ch3.3.62}, we have $h=0$;
\item When $z_{k_{3}}=z_{k_{4}}$, we have
  \begin{itemize}
  {
   \item When $z_{k_{1}}<z_{k_{2}}$, we have $z_{k_{1}}<z_{k_{2}}=z_{k_{3}}=z_{k_{4}}$. Without loss of generality, we assume the dummy variables $z_{k_{1}}$ and $z_{k_{2}}$ correspond to $\nr(z)$. So the integrand is 
\begin{eqnarray}
   &&h(t_{k_{1}},t_{k_{2}},t_{k_{3}},t_{k_{4}},z_{k_{1}},z_{k_{2}},z_{k_{3}},z_{k_{4}})\nonumber\\
   &=& \mathrm{E}\left\{\mathrm{E}\left\{\im\left[\underline{W(t_{k_{1}},z_{k_{1}})}\right]\prod_{l=1}^{2}r(t_{k_{l}},z_{k_{l}})\prod_{l=3}^{4}s(t_{k_{l}},z_{k_{l}})\Bigg|\underline{\textbf{Q}_{k_{l}}}=\textbf{Q}_{k_{l}}, l=1,2,3,4\right\}\right\}\cdot\nonumber\\
   &&\mathrm{E}\left\{\im\left[\underline{W(t_{k_{2}},z_{k_{2}})}\right]\prod_{l=3}^{4}\re\left[\underline{W(t_{k_{l}},z_{k_{l}})}\right]\Bigg|\underline{\textbf{Q}_{k_{l}}}=\textbf{Q}_{k_{l}}, l=1,2,3,4\right\}\label{ch3.3.72}\\
 &=& \mathrm{E}\left\{\mathrm{E}\left\{\im\left[\underline{W(t_{k_{1}},z_{k_{1}})}\right]\prod_{l=1}^{2}r(t_{k_{l}},z_{k_{l}})\prod_{l=3}^{4}s(t_{k_{l}},z_{k_{l}})\Bigg|\underline{\textbf{Q}_{k_{l}}}=\textbf{Q}_{k_{l}}, l=1,2,3,4\right\}\right\}\cdot\nonumber\\
 &&\mathrm{E}\left\{\im\left[\underline{G(t_{k_{2}},z_{k_{2}})}e^{-j\varphi(t_{k_{2}},z_{k_{2}})}\right]\right\}\mathrm{E}\left\{\prod_{l=3}^{4}\re\left[\underline{G(t_{k_{l}},z_{k_{l}})}e^{-j\varphi(t_{k_{l}},z_{k_{l}})}\right]\right\}\label{ch3.3.73}\\
&=& 0,\label{ch3.3.74}
   \end{eqnarray}
   where \eqref{ch3.3.72} holds because of Lemma \ref{ch3.l3.16} and the definition of It\^{o} stochastic calculus. Equation \eqref{ch3.3.73} holds because of Lemma \ref{ch3.l3.16}.
   \item When $z_{k_{1}}=z_{k_{2}}$, we have $z_{k_{1}}=z_{k_{2}}=z_{k_{3}}=z_{k_{4}}$, which means that $z_{1}=z_{2}=z_{3}=z_{4}$. Without loss of generality, we assume the dummy variables $z_{1}$ and $z_{2}$ correspond to $\nr(z)$. Then we have 
\begin{eqnarray}
   &&h(t_{1},t_{2},t_{3},t_{4},z_{1},z_{2},z_{3},z_{4})\nonumber\\
   &=&\mathrm{E}\left\{\prod_{k=1}^{2}\im\left[\underline{G(t_{k},z_{k})}e^{-j\varphi(t_{k},z_{k})}\right]\prod_{k=3}^{4}\re\left[\underline{G(t_{k},z_{k})}e^{-j\varphi(t_{k},z_{k})}\right]\right\}\mathrm{E}\left\{\prod_{k=1}^{2}\underline{r(t_{k},z_{k})}\prod_{k=3}^{4}\underline{s(t_{k},z_{k})}\right\}\nonumber\\
&=& \frac{1}{4}\left[\prod_{k\in\{1,3\}}\delta(t_{k}-t_{k+1})\delta(z_{k}-z_{k+1})\cos[\varphi(t_{k},z_{1})-\varphi(t_{k+1},z_{1})]\right]\nonumber\\
&&\mathrm{E}\left\{\prod_{k=1}^{2}\underline{r(t_{k},z_{k})}\prod_{k=3}^{4}\underline{s(t_{k},z_{k})}\right\},\label{ch3.3.77}
   \end{eqnarray}
   which is obtained using Lemma \ref{ch3.l3.12}. Lemma \ref{ch3.l3.12} can be applied here because of Lemma \ref{ch3.l3.16}. Using \eqref{ch3.3.71}, we have 
\begin{eqnarray}
   &&4f(z_{1},z_{2},z_{3},z_{4})\nonumber\\
   &=&\iiiint_{\mathbb{R}^{4}}\left[\prod_{k\in\{1,3\}}\delta(t_{k}-t_{k+1})\delta(z_{k}-z_{k+1})\cos[\varphi(t_{k},z_{1})-\varphi(t_{k+1},z_{1})]\right]\nonumber\\
   &&\mathrm{E}\left\{\prod_{k=1}^{2}r(t_{k},z_{k})\prod_{k=3}^{4}s(t_{k},z_{k})\right\}\d t_{4}\d t_{3}\d t_{2}\d t_{1}\nonumber\\
   &=&\mathrm{E}\Bigg\{\iiiint_{\mathbb{R}^{4}}\left[\prod_{k\in\{1,3\}}\delta(t_{k}-t_{k+1})\delta(z_{k}-z_{k+1})\cos[\varphi(t_{k},z_{1})-\varphi(t_{k+1},z_{1})]\right]\nonumber\\
   &&\prod_{k=1}^{2}r(t_{k},z_{k})\prod_{k=3}^{4}s(t_{k},z_{k})\d t_{4}\d t_{3}\d t_{2}\d t_{1}\Bigg\}\nonumber\\
   &=&\mathrm{E}\left\{\int_{-\infty}^{\infty}r(t_{1},z_{1})^{2}\d t_{1}\int_{-\infty}^{\infty}s(t_{3},z_{3})^{2}\d t_{3}\right\}\delta(z_{1}-z_{2})\delta(z_{3}-z_{4})\nonumber\\
   &=& \frac{1}{3}\mathrm{E}[\beta(z_{1})\beta(z_{3})]\delta(z_{1}-z_{2})\delta(z_{3}-z_{4}),\label{ch3.3.78}
   \end{eqnarray}
where $z_{1}=z_{2}=z_{3}=z_{4}$. 
  }
  \end{itemize}
So in summary, we have 
\begin{eqnarray}
\idotsint_{S_{3}}f(z_{1},z_{2},z_{3},z_{4})\d\sigma_{4}&=&\idotsint_{S_{3}}\frac{1}{12}\mathrm{E}[\beta(z_{k_{1}})\beta(z_{k_{3}})]\delta(z_{k_{1}}-z_{k_{2}})\delta(z_{k_{3}}-z_{k_{4}})\d\sigma_{4}\nonumber\\
&=&\iint_{\{z_{k_{2}}=z_{k_{3}}\}}\frac{1}{12}\mathrm{E}[\beta(z_{k_{1}})\beta(z_{k_{3}})]\d\sigma_{2}=0.\label{ch3.3.102}
\end{eqnarray}
So in the region of the union of all possible $S_{3}$, the integrand in \eqref{ch3.3.102} is a non-zero real function only when $z_{k_{2}}=z_{k_{3}}$. So according to the definition of multiple integration, the integration of $h(t_{1},t_{2},t_{3},t_{4},z_{1},z_{2},z_{3},z_{4})$ over the region of the union of all possible $S_{3}$ is zero.
}
\end{itemize}

In region $S_{4}$, we have
\begin{itemize}
{
\item When $z_{k_{3}}<z_{k_{4}}$, similar to the discussion in obtaining \eqref{ch3.3.62}, we have $h=0$;
\item When $z_{k_{3}}=z_{k_{4}}$, similar to the discussion in obtaining \eqref{ch3.3.68}, the integrand $h$ can be non-zero only when the dummy variables $z_{k_{3}}$ and $z_{k_{4}}$ correspond simultaneously to either both $\nr(z)$ or both $\nu_{I}(z)$. Without loss of generality, we assume the dummy variables $z_{k_{3}}$ and $z_{k_{4}}$ correspond simultaneously to $\nu_{I}(z)$ at first. So the integrand is 
\begin{eqnarray}
   &&h(t_{k_{1}},t_{k_{2}},t_{k_{3}},t_{k_{4}},z_{k_{1}},z_{k_{2}},z_{k_{3}},z_{k_{4}})\nonumber\\
   &=& \mathrm{E}\Bigg\{\mathrm{E}\Bigg\{\prod_{l=1}^{2}\left\{\im\left[\underline{W(t_{k_{l}},z_{k_{l}})}\right]r(t_{k_{l}},z_{k_{l}})\right\}\prod_{l=3}^{4}s(t_{k_{l}},z_{k_{l}})\Bigg|\underline{\textbf{Q}_{k_{l}}}=\textbf{Q}_{k_{l}}, l=1,2,3,4\Bigg\}\Bigg\}\cdot\nonumber\\
  && \mathrm{E}\left\{\prod_{l=3}^{4}\re\left[\underline{G(t_{k_{l}},z_{k_{l}})}e^{-j\varphi(t_{k_{l}},z_{k_{l}})}\right]\right\}\label{ch3.3.79}\\
 &=& \frac{1}{2}\delta(t_{k_{3}}-t_{k_{4}})\delta(0)\cos[\varphi(t_{k_{3}},z_{k_{3}})-\varphi(t_{k_{4}},z_{k_{3}})]\cdot\nonumber\\
 &&\mathrm{E}\Bigg\{\mathrm{E}\Bigg\{\prod_{l=1}^{2}\left\{\im\left[\underline{W(t_{k_{l}},z_{k_{l}})}\right]r(t_{k_{l}},z_{k_{l}})\right\}\prod_{l=3}^{4}s(t_{k_{l}},z_{k_{l}})\Bigg|\underline{\textbf{Q}_{k_{l}}}=\textbf{Q}_{k_{l}}, l=1,2,3,4\Bigg\}\Bigg\}\label{ch3.3.80}
   \end{eqnarray} 
   So we have 
\begin{eqnarray}
   &&f(z_{k_{1}},z_{k_{2}},z_{k_{3}},z_{k_{4}})\nonumber\\
   &=&\iiiint_{\mathbb{R}^{4}}h(t_{k_{1}},t_{k_{2}},t_{k_{3}},t_{k_{4}},z_{k_{1}},z_{k_{2}},z_{k_{3}},z_{k_{4}})\d t_{k_{4}}\d t_{k_{3}}\d t_{k_{2}}\d t_{k_{1}}\nonumber\\
&=&\frac{1}{2}\mathrm{E}\iiiint_{\mathbb{R}^{4}}\delta(t_{k_{3}}-t_{k_{4}})\delta(0)\cos[\varphi(t_{k_{3}},z_{k_{3}})-\varphi(t_{k_{4}},z_{k_{3}})]\prod_{l=1}^{2}\left\{\im\left[W(t_{k_{l}},z_{k_{l}})\right]r(t_{k_{l}},z_{k_{l}})\right\}\prod_{l=3}^{4}s(t_{k_{l}},z_{k_{l}})\nonumber\\
&&\d t_{k_{4}}\d t_{k_{3}}\d t_{k_{2}}\d t_{k_{1}}\nonumber\\
&=&\frac{1}{2}\mathrm{E}\iint_{\mathbb{R}^{2}}\prod_{l=1}^{2}\left\{\im\left[W(t_{k_{l}},z_{k_{l}})\right]r(t_{k_{l}},z_{k_{l}})\right\}\beta(z_{k_{3}})\delta(0)\d t_{k_{2}}\d t_{k_{1}}\nonumber\\
&=&\frac{1}{2}\mathrm{E}\iint_{\mathbb{R}^{2}}\prod_{l=1}^{2}\left\{\im\left[W(t_{k_{l}},z_{k_{l}})\right]r(t_{k_{l}},z_{k_{l}})\right\}\left\{\beta(z_{k_{2}})+\left[\beta(z_{k_{3}})-\beta(z_{k_{2}})\right]\right\}\delta(0)\d t_{k_{2}}\d t_{k_{1}}\nonumber\\
&=&\frac{1}{2}\mathrm{E}\iint_{\mathbb{R}^{2}}\prod_{l=1}^{2}\left\{\im\left[W(t_{k_{l}},z_{k_{l}})\right]r(t_{k_{l}},z_{k_{l}})\right\}\left[\beta(z_{k_{2}})+\frac{1}{2}\epsilon^{2}(z_{k_{3}}-z_{k_{2}})\right]\delta(0)\d t_{k_{2}}\d t_{k_{1}}+\nonumber\\
&&\frac{\epsilon}{2}\mathrm{E}\int_{(z_{k_{2}},z_{k_{3}})}\iiint_{\mathbb{R}^{3}}\left\{\prod_{l=1}^{2}\im\left[W(t_{k_{l}},z_{k_{l}})\right]r(t_{k_{l}},z_{k_{l}})\right\}\re\left[W(t_{k_{5}},z_{k_{5}})\right]s(t_{k_{5}},z_{k_{5}})\delta(0)\d t_{k_{5}}\d t_{k_{2}}\d t_{k_{1}}\d z_{k_{5}}\nonumber\\
&=&\frac{1}{2}\mathrm{E}\iint_{\mathbb{R}^{2}}\prod_{l=1}^{2}\left\{\im\left[\underline{W(t_{k_{l}},z_{k_{l}})}\right]\underline{r(t_{k_{l}},z_{k_{l}})}\right\}\left[\underline{\beta(z_{k_{2}})}+\frac{1}{2}\epsilon^{2}(z_{k_{3}}-z_{k_{2}})\right]\delta(0)\d t_{k_{2}}\d t_{k_{1}}+\nonumber\\
&&\frac{\epsilon}{2}\int_{(z_{k_{2}},z_{k_{3}})}\iiint_{\mathbb{R}^{3}}\mathrm{E}\left\{\mathrm{E}\left\{\left\{\prod_{l=1}^{2}\im\left[\underline{W(t_{k_{l}},z_{k_{l}})}\right]r(t_{k_{l}},z_{k_{l}})\right\}s(t_{k_{5}},z_{k_{5}})\Bigg|\underline{\textbf{Q}_{k_{l}}}=\textbf{Q}_{k_{l}}, l=1,2,5\right\}\right\}\cdot\nonumber\\
&&\mathrm{E}\left\{\re\left[\underline{G(t_{k_{5}},z_{k_{5}})}e^{-j\varphi(t_{k_{5}},z_{k_{5}})}\right]\right\}\delta(0)\d t_{k_{5}}\d t_{k_{2}}\d t_{k_{1}}\d z_{k_{5}}\nonumber\\
&=&\frac{1}{2}\mathrm{E}\iint_{\mathbb{R}^{2}}\prod_{l=1}^{2}\left\{\im\left[W(t_{k_{l}},z_{k_{l}})\right]r(t_{k_{l}},z_{k_{l}})\right\}\left[\beta(z_{k_{2}})+\frac{1}{2}\epsilon^{2}(z_{k_{3}}-z_{k_{2}})\right]\delta(0)\d t_{k_{2}}\d t_{k_{1}}+0.\label{ch3.3.81}
   \end{eqnarray}
   Furthermore, we have
   \begin{itemize}
   {
    \item When $z_{k_{1}}<z_{k_{2}}$, we have $z_{k_{1}}<z_{k_{2}}<z_{k_{3}}=z_{k_{4}}$. Similar to the discussion in obtaining \eqref{ch3.3.62}, we have $f=0$;
    \item When $z_{k_{1}}=z_{k_{2}}$, we have $z_{k_{1}}=z_{k_{2}}<z_{k_{3}}=z_{k_{4}}$. Continue the derivation from \eqref{ch3.3.81}, we have 
\begin{eqnarray}
   &&f(z_{k_{1}},z_{k_{2}},z_{k_{3}},z_{k_{4}})\nonumber\\
   &=&\frac{1}{4}\iint_{\mathbb{R}^{2}}\mathrm{E}\Bigg\{\left[\prod_{l=1}^{2}r(t_{k_{l}},z_{k_{2}})\right]\left[\beta(z_{k_{2}})+\frac{1}{2}\epsilon^{2}(z_{k_{3}}-z_{k_{2}})\right]\delta(0)\Bigg\}\delta(t_{k_{1}}-t_{k_{2}})\delta(0)\nonumber\\
   &&\cos[\varphi(t_{k_{1}},z_{k_{2}})-\varphi(t_{k_{2}},z_{k_{2}})]\Big.\d t_{k_{2}}\d t_{k_{1}}\nonumber\\
   &=&\frac{1}{4}\mathrm{E}\int_{-\infty}^{\infty}\left[\beta(z_{k_{2}})+\frac{1}{2}\epsilon^{2}(z_{k_{3}}-z_{k_{2}})\right]r(t_{k_{2}},z_{k_{2}})^{2}\delta(0)\delta(0)\d t_{k_{2}}\nonumber\\
   &=&\frac{1}{12}\left\{\mathrm{E}\left[\beta(z_{k_{2}})^{2}\right]+\frac{1}{2}\epsilon^{2}(z_{k_{3}}-z_{k_{2}})\mathrm{E}\beta(z_{k_{2}})\right\}\delta(0)\delta(0).\label{ch3.3.82}
   \end{eqnarray}
   }
   \end{itemize}
If we assume the dummy variables $z_{k_{3}}$ and $z_{k_{4}}$ correspond simultaneously to $\nr(z)$, similar to the derivation \eqref{ch3.3.79}--\eqref{ch3.3.82}, we also have
   \begin{itemize}
   {
    \item When $z_{k_{1}}<z_{k_{2}}$, we have $f=0$;
    \item When $z_{k_{1}}=z_{k_{2}}$, we have
    \begin{equation}
   f(z_{k_{1}},z_{k_{2}},z_{k_{3}},z_{k_{4}})=\frac{1}{12}\bigg\{\mathrm{E}\left[\beta(z_{k_{2}})^{2}\right]+\frac{1}{2}\epsilon^{2}(z_{k_{3}}-z_{k_{2}})\mathrm{E}\beta(z_{k_{2}})\bigg\}\delta(0)\delta(0).\label{ch3.3.83}
   \end{equation}
   }
   \end{itemize}
}
\end{itemize}
So in the region $S_{4}$, we have
\begin{equation}
   f(z_{k_{1}},z_{k_{2}},z_{k_{3}},z_{k_{4}})=\frac{1}{12}\bigg\{\mathrm{E}\left[\beta(z_{k_{2}})^{2}\right]+\frac{1}{2}\epsilon^{2}(z_{k_{3}}-z_{k_{2}})\mathrm{E}\beta(z_{k_{2}})\bigg\}\delta(z_{k_{3}}-z_{k_{4}})\delta(z_{k_{1}}-z_{k_{2}}).\label{ch3.3.103}
   \end{equation}
So in summary, in region of the union of all possible $S_{4}$ (denoted by $S_{4}'$), we have 
\begin{eqnarray}
&&\iiiint_{S_{4}'}\int_{-\infty}^{\infty}\int_{-\infty}^{\infty}\int_{-\infty}^{\infty}\int_{-\infty}^{\infty}h(t_{1},t_{2},t_{3},t_{4},z_{1},z_{2},z_{3},z_{4})\d t_{4}\d t_{3}\d t_{2}\d t_{1}\d S_{4}'\nonumber\\
&=&\iiiint_{S_{4}'}f(z_{k_{1}},z_{k_{2}},z_{k_{3}},z_{k_{4}})\d S_{4}\nonumber\\
&=&2\iint_{\{z_{k_{2}}<z_{k_{3}}\}}\frac{1}{12}\left\{\mathrm{E}\left[\beta(z_{k_{2}})^{2}\right]+\frac{1}{2}\epsilon^{2}(z_{k_{3}}-z_{k_{2}})\mathrm{E}\beta(z_{k_{2}})\right\}\d\sigma\nonumber\\
&=&\frac{1}{6}\iint_{\{z_{k_{2}}<z_{k_{3}}\}}\mathrm{E}\left[\beta(0)^{2}\right]+\frac{3}{2}\epsilon^{2}z_{k_{2}}\mathrm{E}\beta(0)+\frac{3}{8}\epsilon^{4}z_{k_{2}}^{2}+\frac{1}{2}\epsilon^{2}\left(z_{k_{3}}-z_{k_{2}}\right)\left(\mathrm{E}\beta(0)+\frac{1}{2}\epsilon^{2}z_{k_{2}}\right)\d\sigma\nonumber\\
&=&\frac{1}{12}s^{2}\mathrm{E}\left[\beta(0)^{2}\right]+\frac{1}{18}\epsilon^{2}s^{3}\mathrm{E}\beta(0)+\frac{1}{144}\epsilon^{4}s^{4}.\label{ch3.3.84}
\end{eqnarray}
As a result, for any $s\leq t$,
\begin{equation}
\mathrm{E}\left[\nr(s)\nu_{I}(s)\nr(t)\nu_{I}(t)\right]=\mathrm{E}\left[\nr(s)^{2}\nu_{I}(s)^{2}\right]=\frac{1}{12}\epsilon^{4}s^{2}\mathrm{E}\left[\beta(0)^{2}\right]+\frac{1}{18}\epsilon^{6}s^{3}\mathrm{E}\beta(0)+\frac{1}{144}\epsilon^{8}s^{4}.\nonumber
\end{equation}

3) We have 
\begin{eqnarray}
&&\mathrm{E}\left[\underline{\nr(s)}^{2}\underline{\nu_{I}(s)}\Big|\underline{\zeta(0)}=\zeta(0)\right]\nonumber\\
&=&\epsilon^{3}\mathrm{E}\Bigg\{\iiint_{[0,s)^{3}}\iiint_{\mathbb{R}^{3}}\left\{\prod_{k=1}^{2}\im\left[\underline{W(t_{k},z_{k})}\right]\underline{r(t_{k},z_{k})}\right\}\underline{s(t_{3},z_{3})}\re\left[\underline{W(t_{3},z_{3})}\right]\d t_{1}\d t_{2}\d t_{3}\d z_{1}\d z_{2}\d z_{3}\Bigg|\underline{\zeta(0)}=\zeta(0)\Bigg\}\nonumber\\
&=&\epsilon^{3}\iiint_{[0,s)^{3}}\iiint_{\mathbb{R}^{3}}\mathrm{E}\Bigg\{\mathrm{E}\left\{\prod_{k=1}^{2}\im\left[\underline{W(t_{k},z_{k})}\right]\underline{r(t_{k},z_{k})}\right\}\underline{s(t_{3},z_{3})}\re\left[\underline{W(t_{3},z_{3})}\right]\Bigg|\underline{\zeta(0)}=\zeta(0)\Bigg\}\nonumber\\
&&\d t_{1}\d t_{2}\d t_{3}\d z_{1}\d z_{2}\d z_{3}\label{ch3.a.104}.
\end{eqnarray}
where we denote the integrand in \eqref{ch3.a.104} by $h(t_{1},t_{2},t_{3},z_{1},z_{2},z_{3})$. Denote
\begin{equation}
f(z_{1},z_{2},z_{3})\triangleq\int_{-\infty}^{\infty}\int_{-\infty}^{\infty}\int_{-\infty}^{\infty}h(t_{1},t_{2},t_{3},z_{1},z_{2},z_{3})\d t_{1}\d t_{2}\d t_{3}.\label{ch3.a.105}
\end{equation}

Without loss of generality, we assume $z_{1}\leq z_{2}$. We divide the integration region into two disjoint parts $\{z_{3}\leq z_{1}\}$ and $\{z_{3}>z_{1}\}$, and consider the integrand in these regions.

In the region $\{z_{3}>z_{1}\}$,
\begin{itemize}
{
\item If $z_{2}\neq z_{3}$, then similar to the proof of \eqref{ch3.3.62}, we have $h=0$;
\item If $z_{3}=z_{2}$, then 
\begin{itemize}
{
\item if $z_{1}<z_{2}$, similar to the proof of \eqref{ch3.3.68}, we have $h=0$;
\item if $z_{1}=z_{2}$, using Lemma \ref{ch3.l3.12}, we have $h=0$.
}
\end{itemize} 
}
\end{itemize}
In the region $\{z_{3}\leq z_{1}\}$, we have $z_{3}\leq z_{1}\leq z_{2}$. Then
\begin{itemize}
{
\item If $z_{3}\leq z_{1}< z_{2}$, then similar to the proof of \eqref{ch3.3.62}, we have $h=0$;
\item If $z_{3}< z_{1}= z_{2}$, then similar to the proof of \eqref{ch3.3.80}, we have 
\begin{multline}
h(t_{1},t_{2},t_{3},z_{1},z_{2},z_{3})\\
=\frac{1}{2}\mathrm{E}\Bigg\{\mathrm{E}\Bigg\{\left[\prod_{k=1}^{2}\underline{r(t_{k},z_{k})}\right]\underline{s(t_{3},z_{3})}\re\left[\underline{W(t_{3},z_{3})}\right]\Bigg|\underline{\zeta(0)}=\zeta(0), \underline{\textbf{Q}_{k}}=\textbf{Q}_{k}, k=1,2,3\Bigg\}\Bigg\}\delta(t_{1}-t_{2})\delta(0).\label{ch3.a.111}
\end{multline}
Then similar to the proof of \eqref{ch3.3.81}, we have 
\begin{eqnarray}
f(z_{1},z_{2},z_{3})&=&\frac{1}{6}\mathrm{E}\int_{-\infty}^{\infty}\re\left[\underline{W(t_{3},z_{3})}\right]\underline{s(t_{3},z_{3})}\underline{\beta(z_{1})}\delta(0)\d t_{3}\nonumber\\
&=&\frac{1}{6}\mathrm{E}\int_{-\infty}^{\infty}\re\left[\underline{W(t_{3},z_{3})}\right]\underline{s(t_{3},z_{3})}\left[\beta(z_{3})+\frac{1}{2}\epsilon^{2}(z_{1}-z_{3})\right]\delta(0)\d t_{3}+0\nonumber\\
&=&0.\label{ch3.b.1}
\end{eqnarray}
\item If $z_{3}= z_{1}= z_{2}$, then 
\begin{eqnarray}
h(t_{1},t_{2},t_{3},z_{1},z_{2},z_{3})&=&h(t_{1},t_{2},t_{3},z_{1},z_{1},z_{1})\nonumber\\
&=&\iiint_{\mathbb{R}^{3}}\mathrm{E}\left\{\underline{s(t_{3},z_{1})}\prod_{k=1}^{2}\underline{r(t_{k},z_{1})}\right\}\mathrm{E}\left\{\re\left[\underline{W(t_{3},z_{1})}\right]\prod_{k=1}^{2}\im\left[\underline{W(t_{k},z_{1})}\right]\right\}\d t_{3}\d t_{2}\d t_{1}\nonumber\\
&=&0.\label{ch3.a.112}
\end{eqnarray}
because of Lemma \ref{ch3.l3.16} and Lemma \ref{ch3.l3.12}.
}
\end{itemize}

In summary, we have
\begin{equation}\label{ch3.a.113}
{f(z_{1},z_{2},z_{3})=0,}
\end{equation}
for any $(z_{1},z_{2},z_{3})\in[0,s]^{3}$. So
$$\mathrm{E}\left[\nr(s)^{2}\nu_{I}(s)\Big|\underline{\zeta(0)}=\zeta(0)\right]=0.$$

4) Similar to \eqref{ch3.3.69}, equations \eqref{ch3.3.121}--\eqref{ch3.3.122} can be proved by further noticing 
\begin{eqnarray}
\ni(s)\ni(t)&=&\ni(s)^{2}+\ni(s)\left[\ni(t)-\ni(s)\right]\nonumber\\
&=& \ni(s)^{2}+\frac{1}{4}\epsilon^{4}s(t-s)+\frac{1}{2}\epsilon^{2}s\left[\nu_{I}(t)-\nu_{I}(s)\right]+\nonumber\\
&&\frac{1}{2}\epsilon^{2}(t-s)\nu_{I}(s)+\nu_{I}(s)\left[\nu_{I}(t)-\nu_{I}(s)\right].\label{ch3.3.123}
\end{eqnarray}

Combine \eqref{ch3.3.121}--\eqref{ch3.3.122} and \eqref{ch3.3.95}--\eqref{ch3.3.93}, it is easy to see that the integrals of the auto-correlation functions of $\nr(z)$ and $\ni(z)$ conditioned on an eigenvalue $\zeta(0)=\alpha(0)+\beta(0)$ are both finite. As a result, the noise processes $\nr(z)$ and $\ni(z)$ are both mean-square integrable conditioned on an eigenvalue $\zeta(0)=\alpha(0)+\beta(0)$ over the interval $[0,\length]$.

5) Assume $s<t$ without loss of generality, then we have  
\begin{eqnarray}
\mathrm{E}\left[\nr(s)\ni(s)\nr(t)\ni(t)\right]&=& \mathrm{E}\left\{\nr(s)\left[\frac{1}{2}\epsilon^{2}s+\nu_{I}(s)\right]\nr(t)\left[\frac{1}{2}\epsilon^{2}t+\nu_{I}(t)\right]\right\}\nonumber\\
&=& \mathrm{E}\left[\nr(s)\nu_{I}(s)\nr(t)\nu_{I}(t)\right]+\frac{1}{2}\epsilon^{2}s\mathrm{E}\left[\nr(s)\nr(t)\nu_{I}(t)\right]+\nonumber\\
&&\frac{1}{2}\epsilon^{2}t\mathrm{E}\left[\nr(s)\nu_{I}(s)\nr(t)\right]+\frac{1}{4}\epsilon^{4}st\mathrm{E}\left[\nr(s)\nr(t)\right].\label{ch3.3.104}
\end{eqnarray}

We calculate $\mathrm{E}\left[\nr(s)\nr(t)\nu_{I}(t)\right]$ 
\begin{eqnarray}
&&\mathrm{E}\left[\nr(s)\nr(t)\nu_{I}(t)\right]\nonumber\\
&=&\mathrm{E}\Big\{\nr(s)\left[\nr(s)+\nr(t)-\nr(s)\right]\left[\nu_{I}(s)+\nu_{I}(t)-\nu_{I}(s)\right]\Big\}\nonumber\\
&=&\mathrm{E}\left[\nr(s)^{2}\nu_{I}(s)\right]+\mathrm{E}\left\{\nr(s)\left[\nr(t)-\nr(s)\right]\nu_{I}(s)\right\}+\mathrm{E}\left\{\nr(s)^{2}\Big[\nu_{I}(t)-\nu_{I}(s)\Big]\right\}+\nonumber\\
&&\mathrm{E}\left\{\nr(s)\Big[\nr(t)-\nr(s)\Big]\Big[\nu_{I}(t)-\nu_{I}(s)\Big]\right\}\nonumber\\
&=&0+0+0+0=0.\label{ch3.3.106}
\end{eqnarray}
Similarly, we have
\begin{multline}
\mathrm{E}\left[\nr(s)\nu_{I}(s)\nr(t)\right]=\mathrm{E}\left[\nr(s)^{2}\nu_{I}(s)\right]+\mathrm{E}\left\{\nr(s)\left[\nr(t)-\nr(s)\right]\nu_{I}(s)\right\}=0+0=0.\label{ch3.3.107}
\end{multline}

Consider
\begin{eqnarray}
\mathrm{E}\left[\nr(s)\nr(t)\right]&=&\mathrm{E}\left[\nr(s)^{2}\right]+\mathrm{E}\left\{\nr(s)\left[\nr(t)-\nr(s)\right]\right\}\nonumber\\
&=&\frac{1}{6}\epsilon^{2}s\mathrm{E}\beta(0)+\frac{1}{24}\epsilon^{4}s^{2}.\label{ch3.3.108}
\end{eqnarray}

Note that the random process $\nr(z)\ni(z)$ is mean-square integrable in $[0,\length]$ if and only if the integration
\begin{equation}\label{ch3.3.86}
{\int_{0}^{\length}\int_{0}^{\length}\mathrm{E}\left[\nr(s)\ni(s)\nr(t)\ni(t)\right]\d s\d t<\infty.}
\end{equation}
According to \eqref{ch3.3.60} and \eqref{ch3.3.104}--\eqref{ch3.3.108}, if $\mathrm{E}\beta(0)<\infty$, and $\mathrm{E}[\beta(0)]^{2}<\infty$, we have
\begin{eqnarray}\label{ch3.3.87}
\int_{0}^{\length}\int_{0}^{\length}\mathrm{E}\left[\nr(s)\ni(s)\nr(t)\ni(t)\right]\d s\d t&=& 2\iint_{\{s\leq t\}}\mathrm{E}\left[\nr(s)\ni(s)\nr(t)\ni(t)\right]\d\sigma\nonumber\\
&=&2\iint_{\{s\leq t\}}\frac{1}{12}\epsilon^{4}s^{2}\mathrm{E}\left[\beta(0)^{2}\right]+\frac{1}{18}\epsilon^{6}s^{3}\mathrm{E}\beta(0)+\nonumber\\
&&\frac{1}{144}\epsilon^{8}s^{4}+0+0+\frac{1}{4}\epsilon^{4}st\left(\frac{1}{6}\epsilon^{2}s\mathrm{E}\beta(0)+\frac{1}{24}\epsilon^{4}s^{2}\right)\d\sigma\nonumber\\
&=&\frac{1}{72}\epsilon^{4}\length^{4}\mathrm{E}\left[\beta(0)^{2}\right]+\frac{1}{90}\epsilon^{6}\length^{5}\mathrm{E}\beta(0)+\frac{23}{17280}\epsilon^{8}\length^{6}\label{ch3.3.109}\\
&<&\infty.\nonumber
\end{eqnarray}
As a result, the stochastic process $\nr(z)\ni(z)$ is mean-square integrable in $[0,\length]$.
}
\end{proof}

\section{Proof of Lemma \ref{ch3.l3.11}}\label{AppendixB3}

\newpage

\begin{proof}[Proof of Lemma \ref{ch3.l3.11}]
{1) We have
\begin{eqnarray}
&&\mathrm{E}\left[\underline{\Gamma^{(R)}(\length)}\underline{\Gamma^{(I)}(\length)}\Big|\underline{\zeta(0)}=\zeta(0)\right]\nonumber\\
&=&\mathrm{E}\left[\int_{0}^{\length}\int_{0}^{\length}\underline{\nr(s)}\underline{\ni(t)}\d s\d t\Bigg|\underline{\zeta(0)}=\zeta(0)\right]\nonumber\\
&=&\int_{0}^{\length}\int_{0}^{\length}\mathrm{E}\left[\underline{\nr(s)}\underline{\ni(t)}\Big|\underline{\zeta(0)}=\zeta(0)\right]\d s\d t,\label{ch3.a.99}
\end{eqnarray}
where we calculate $\mathrm{E}\left[\underline{\nr(s)}\underline{\ni(t)}\Big|\underline{\zeta(0)}=\zeta(0)\right]$. Without loss of generality, we assume that $s\leq t$. So we have 
\begin{eqnarray}
&&\mathrm{E}\left[\underline{\nr(s)}\underline{\ni(t)}\Big|\underline{\zeta(0)}=\zeta(0)\right]\nonumber\\
&=&\mathrm{E}\left\{\underline{\nr(s)}\left[\underline{\ni(s)}+\underline{\ni(t)}-\underline{\ni(s)}\right]\Big|\underline{\zeta(0)}=\zeta(0)\right\}\nonumber\\
&=&\mathrm{E}\left[\underline{\nr(s)}\underline{\ni(s)}\Big|\underline{\zeta(0)}=\zeta(0)\right]+\mathrm{E}\left\{\underline{\nr(s)}\left[\underline{\ni(t)}-\underline{\ni(s)}\right]\Big|\underline{\zeta(0)}=\zeta(0)\right\}\nonumber\\
&=&0+\mathrm{E}\left\{\underline{\nr(s)}\left[\underline{\nu_{I}(t)}-\underline{\nu_{I}(s)}+\frac{1}{2}\epsilon^{2}(t-s)\right]\Big|\underline{\zeta(0)}=\zeta(0)\right\},\label{ch3.a.100}
\end{eqnarray}
\eqref{ch3.a.100} because of Theorem \ref{ch3.t3.5}. Similar to the proof of Theorem \ref{ch3.t3.5}, we can also show that
\begin{equation}\label{ch3.a.101}
{\mathrm{E}\left\{\underline{\nr(s)}\left[\underline{\nu_{I}(t)}-\underline{\nu_{I}(s)}\right]\Big|\underline{\zeta(0)}=\zeta(0)\right\}=0.}
\end{equation}
Plug \eqref{ch3.a.100}--\eqref{ch3.a.101} into \eqref{ch3.a.99}, we have
$$\mathrm{E}\left[\underline{\Gamma^{(R)}(\length)}\underline{\Gamma^{(I)}(\length)}\Big|\underline{\zeta(0)}=\zeta(0)\right]=0.$$

2) We have 
\begin{eqnarray}
&&\mathrm{E}\left[\underline{\Gamma^{(R)}(\length)}\underline{\Gamma^{(RI)}(\length)}\Big|\underline{\zeta(0)}=\zeta(0)\right]\nonumber\\
&=&\mathrm{E}\left[\int_{0}^{\length}\int_{0}^{\length}\underline{\nr(s)}\underline{\nr(t)}\underline{\ni(t)}\d s\d t\Bigg|\underline{\zeta(0)}=\zeta(0)\right]\nonumber\\
&=&\int_{0}^{\length}\int_{0}^{\length}\mathrm{E}\left[\underline{\nr(s)}\underline{\nr(t)}\underline{\ni(t)}\Big|\underline{\zeta(0)}=\zeta(0)\right]\d s\d t\nonumber\\
&=&\int_{0}^{\length}\int_{0}^{\length}\mathrm{E}\left[\underline{\nr(s)}\underline{\nr(t)}\underline{\nu_{I}(t)}\Big|\underline{\zeta(0)}=\zeta(0)\right]\d s\d t+\nonumber\\
&&\int_{0}^{\length}\int_{0}^{\length}\frac{1}{2}\epsilon^{2}t\cdot\mathrm{E}\left[\underline{\nr(s)}\underline{\nr(t)}\Big|\underline{\zeta(0)}=\zeta(0)\right]\d s\d t.\label{ch3.a.102}
\end{eqnarray}
For any $(s,t)\in[0,\length]\times[0,\length]$, similar to the discussion in obtaining \eqref{ch3.3.69}, and using Lemma \ref{ch3.l3.9}, we have
\begin{equation}
\mathrm{E}\left[\underline{\nr(s)}\underline{\nr(t)}\underline{\nu_{I}(t)}\Big|\underline{\zeta(0)}=\zeta(0)\right]=\mathrm{E}\left[\underline{\nr(w)}^{2}\underline{\nu_{I}(w)}\Big|\underline{\zeta(0)}=\zeta(0)\right]=0,\label{ch3.a.103}
\end{equation}
and
\begin{equation}
\mathrm{E}\left[\underline{\nr(s)}\underline{\nr(t)}\Big|\underline{\zeta(0)}=\zeta(0)\right]=\mathrm{E}\left[\underline{\nr(w)}^{2}\Big|\underline{\zeta(0)}=\zeta(0)\right]=\frac{1}{6}\epsilon^{2}w\beta(0)+\frac{1}{24}\epsilon^{4}w^{2},\label{ch3.3.112}
\end{equation}
where $w\triangleq\min\{s,t\}$. Plug \eqref{ch3.a.103} and \eqref{ch3.3.112} back to \eqref{ch3.a.102}, we have 
\begin{eqnarray}
\mathrm{E}\left[\underline{\Gamma^{(R)}(\length)}\underline{\Gamma^{(RI)}(\length)}\Big|\underline{\zeta(0)}=\zeta(0)\right]&=&\frac{1}{2}\epsilon^{2}\bigg\{\iint_{\{s\leq t\}}\left[\frac{1}{6}\epsilon^{2}s\beta(0)+\frac{1}{24}\epsilon^{4}s^{2}\right]t\d\sigma_{1}+\nonumber\\
&&\iint_{\{s>t\}}\left[\frac{1}{6}\epsilon^{2}t\beta(0)+\frac{1}{24}\epsilon^{4}t^{2}\right]t\d\sigma_{2}\bigg\}\nonumber\\
&=&\frac{5}{288}\epsilon^{4}\length^{4}\beta(0)+\frac{7}{2880}\epsilon^{6}\length^{5}.\label{appen.b.4.4}
\end{eqnarray}

3) We calculate
\begin{eqnarray}
\mathrm{E}\left[\underline{\ni(s)}\underline{\nr(t)}\underline{\ni(t)}\Big|\underline{\zeta(0)}=\zeta(0)\right]&=&\mathrm{E}\left\{\underline{\nr(t)}\prod_{x\in\{s,t\}}\left[\underline{\nu_{I}(x)}+\frac{1}{2}\epsilon^{2}x\right]\Big|\underline{\zeta(0)}=\zeta(0)\right\}\nonumber\\
&=&\mathrm{E}\left[\underline{\nu_{I}(s)}\underline{\nr(t)}\underline{\nu_{I}(t)}\Big|\underline{\zeta(0)}=\zeta(0)\right].\label{ch3.3.113}
\end{eqnarray}
When $s\geq t$, we have 
\begin{eqnarray}
&&\mathrm{E}\left[\underline{\nu_{I}(s)}\underline{\nr(t)}\underline{\nu_{I}(t)}\Big|\underline{\zeta(0)}=\zeta(0)\right]\nonumber\\
&=&\mathrm{E}\left\{\underline{\nr(t)}\underline{\nu_{I}(t)}\left[\underline{\nu_{I}(t)}+\underline{\nu_{I}(s)}-\underline{\nu_{I}(t)}\right]\Big|\underline{\zeta(0)}=\zeta(0)\right\}\nonumber\\
&=&\mathrm{E}\left[\underline{\nr(t)}\underline{\nu_{I}(t)}^{2}\Big|\underline{\zeta(0)}=\zeta(0)\right]+\mathrm{E}\left\{\underline{\nr(t)}\underline{\nu_{I}(t)}\left[\underline{\nu_{I}(s)}-\underline{\nu_{I}(t)}\right]\Big|\underline{\zeta(0)}=\zeta(0)\right\}\nonumber\\
&=&\mathrm{E}\left[\underline{\nr(t)}\underline{\nu_{I}(t)}^{2}\Big|\underline{\zeta(0)}=\zeta(0)\right].\label{ch3.3.114}
\end{eqnarray}
When $s<t$, similar to the discussion in \eqref{ch3.3.114}, we have
\begin{equation}
\mathrm{E}\left[\underline{\nu_{I}(s)}\underline{\nr(t)}\underline{\nu_{I}(t)}\Big|\underline{\zeta(0)}=\zeta(0)\right]=\mathrm{E}\left[\underline{\nr(s)}\underline{\nu_{I}(s)}^{2}\Big|\underline{\zeta(0)}=\zeta(0)\right].\label{ch3.3.115}
\end{equation}
Plug \eqref{ch3.3.114}--\eqref{ch3.3.115} back to \eqref{ch3.3.113}, according to Lemma \ref{ch3.l3.9}, we have
\begin{equation}
\mathrm{E}\left[\underline{\ni(s)}\underline{\nr(t)}\underline{\ni(t)}\Big|\underline{\zeta(0)}=\zeta(0)\right]=\mathrm{E}\left[\underline{\nr(w)}\underline{\nu_{I}(w)}^{2}\Big|\underline{\zeta(0)}=\zeta(0)\right]=0,\label{ch3.3.116}
\end{equation}
where $w=\min\{s,t\}$. So we have
\begin{equation}
\mathrm{E}\left[\underline{\Gamma^{(I)}(\length)}\underline{\Gamma^{(RI)}(\length)}\Big|\underline{\zeta(0)}=\zeta(0)\right]=\int_{0}^{\length}\int_{0}^{\length}\mathrm{E}\left[\underline{\ni(s)}\underline{\nr(t)}\underline{\ni(t)}\Big|\underline{\zeta(0)}=\zeta(0)\right]\d s\d t=0.\label{ch3.3.117}
\end{equation}
}
\end{proof}

\section*{Acknowledgment}

This work was supported in part by ARC Discovery Project DP150103658.

\bibliographystyle{bibliography/IEEEtran}
\end{document}